\def \h#1{\widehat{#1}}
\def \t#1{\widetilde{#1}}
\def \b#1{\overline{#1}}
\def \th#1{\widehat{\widetilde{#1}}}
\newtheorem{lemma}{Lemma}
\newtheorem{theorem}{Theorem}[section]
\newtheorem{prop}{Proposition}
\numberwithin{equation}{section}
\title{Squared eigenfunction symmetry of the D$\Delta$mKP hierarchy\\ and its constraint}
\author{Kui Chen,~Cheng Zhang,~Da-jun Zhang\footnote{Corresponding author. Email: djzhang@staff.shu.edu.cn}\\
{\small Department of Mathematics, Shanghai University, Shanghai 200444, P.R. China}}
\begin{document}

\maketitle

\begin{abstract}

In this paper squared eigenfunction symmetry of the differential-difference modified Kadomtsev-Petviashvili (D$\Delta$mKP) hierarchy
and its constraint are considered. Under the constraint, the Lax triplets of the D$\Delta$mKP hierarchy, together with their adjoint forms,
give rise to the positive relativistic Toda (R-Toda) hierarchy.
An invertible transformation is given to connect the positive and negative R-Toda hierarchies.
The positive R-Toda hierarchy is reduced to the differential-difference Burgers hierarchy.
We also consider another D$\Delta$mKP hierarchy and show that its squared eigenfunction symmetry constraint
gives rise to the Volterra hierarchy.
In addition, we revisit the Ragnisco-Tu hierarchy which is a squared eigenfunction symmetry constraint of the
differential-difference Kadomtsev-Petviashvili (D$\Delta$KP) system.
It was thought the Ragnisco-Tu hierarchy does not exist one-field reduction,
but here we find an one-field reduction to reduce the hierarchy to the Volterra hierarchy.
Besides, the differential-difference Burgers hierarchy are also investigated
in Appendix.  A multi-dimensionally consistent 3-point discrete Burgers equation
is given.

\begin{description}
\item[PACS numbers:] 02.30.Ik, 02.30.Ks, 05.45.Yv
\item[Keywords:] D$\Delta$mKP, squared eigenfunction symmetry constraint, relativistic Toda, one-field reduction,
Burgers
\end{description}
\end{abstract}

\section{Introduction}\label{sec-1}

It is common that an integrable system with a Lax pair usually has a squared eigenfunction symmetry
composed by the wave functions governed by the Lax pair and its adjoint form.
Such a symmetry is deeply related to $N$-soliton expression \cite{GGKM-CPAM-1974},
gradients of eigenvalues and nonlinearization of Lax pairs \cite{Cao-1990,Cao-NP-1990},
Mel'nikov-type integrable systems with self-consistent sources \cite{melnikov-1987} (also see \cite{Zhang-TMP-2010}), etc.
The  squared eigenfunction symmetries and their constraints have received intensively attention in the early 1990s
and many remarkable results were obtained, such as, by the symmetry constraint bridging a gap
between continuous (2+1)-dimensional and (1+1)-dimensional integrable systems \cite{cheng-pla-1991,Konopelchenko-pla-1991,Konopelchenko-Inverse-1991},
interpreting the squared eigenfunction symmetries as an assemble of isospectral flow symmetries \cite{satsuma-1990-kp},
understanding and solving the constrainted (2+1)-dimensional systems \cite{cheng-pla-1992,cheng-JPA-1992,cheng-jmp-1992,Cheng-CMP-1995,Dickey-CMP-1995,
Kundu-JMP-1995,LiuQP-PLA-1994,Loris-1997-jpa,
oevel-pa-1993,Oevel-Sch-RMP-1994,Oevel-Str-CMP-1993,XuBing-IP-1993,
XuBing-JPA-1992,Zhang-Ch-JMP-1994,Zhang-Ch-JPA-1994,ZhangYJ-JPA-1996},
and so on.

The research has also been extended in 1990s to differential-difference case with one independent discrete variable
\cite{Kajiwara-jmp-1991,Konopelchenko-jmp-1992,oevel-1996}
but the understanding did not go as far as the continuous case.
This is because at that time the discrete integrable systems were less understood than the continuous ones.
Recently, in  \cite{chen-2017-jnmp} it is shown that by the squared eigenfunction symmetry constraint
the differential-difference Kadomtsev-Petviashvili (D$\Delta$KP) system
is related to the (1+1)-dimensional differential-difference system, the Ragnisco-Tu hierarchy \cite{RT-1994,RT-1989},
which is a second discretization of the Ablowitz-Kaup-Newell-Segur(AKNS) system but different from
the Ablowitz-Ladik system \cite{al-1975}
by a different discretization of wave functions \cite{chen-2017-jnmp}.
The D$\Delta$KP hierarchy is related to the pseudo-difference operator $M$ \eqref{gt-kp-operator}
and the differential-difference modified Kadomtsev-Petviashvili (D$\Delta$mKP)
hierarchy is related to $L$ \eqref{mkp-hie-operator}, respectively.
In this paper we focus on the squared eigenfunction symmetry of the D$\Delta$mKP system,
and also revisit the D$\Delta$KP,
because the gauge connection between $M$ and $L$ will play a backstage role in the research.
As new results of the present paper, we obtain the following.
\begin{itemize}
\item{We reduce the Ragnisco-Tu hierarchy to the Volterra hierarchy,
which breaks the statement in \cite{RT-1994} that the hierarchy
``are essentially given by the non-existence of one-field reduction''. }
\item{We show that the squared eigenfunction symmetry constraint of the D$\Delta$mKP system
gives rise to the positive relativistic Toda (R-Toda) hierarchy in (1+1)-dimension.}
\item{The R-Toda($\pm$) hierarchies are unified by an invertible transformation.}
\item{The differential-difference Burgers hierarchy is obtained as reductions of the R-Toda($+$) hierarchy.}
\item{The D$\Delta$mKP(E) system that is related to $\bar L$ \eqref{mkp2hie-operator}
and its squared eigenfunction symmetry constraint not only gives rise to a decomposition of the Volterra hierarchy,
but also can be reduced to the later.}
\end{itemize}

This paper is organized as follows.
In Sec.\ref{sec-2} we introduce basic notations and derive the scalar D$\Delta$KP and D$\Delta$mKP hierarchies.
In Sec.\ref{sec-3} we revisit the squared eigenfunction symmetry constraint of the D$\Delta$KP
hierarchy and reduce the Ragnisco-Tu hierarchy to the Volterra hierarchy.
Then in Sec.\ref{sec-4} we deal with the D$\Delta$mKP system and show that its symmetry constraint
gives rise to the R-Toda hierarchy.
And in Sec.\ref{sec-5} we consider the D$\Delta$mKP(E) hierarchy and its  squared eigenfunction symmetry constraint.
Finally, conclusions are given in Sec.\ref{sec-6}.
There is an Appendix in which we will have a close look at the continuous and discrete Burgers equations.

\section{Preliminary}\label{sec-2}

\subsection{Notations}\label{sec-2-1}

Suppose that $u=u(n,x,t)$ and $v=v(n,x,t)$ are smooth functions of $(n,x,t)\in \mathbb{Z}\times \mathbb{R}^2$
and $C^{\infty}$ w.r.t. $(x,t)\in \mathbb{R}^2$.
Let $S[u]$ be a Schwartz space composed by all $f(u)$ that are $C^{\infty}$ differentiable w.r.t. $u$.
Here for two functions $f,g\in S$, the G$\hat{\mathrm{a}}$teaux derivative
of $f$ w.r.t. $u$ in direction $g$ is
\begin{align}\label{G-deriv}
f'[g]=\frac{d}{d \varepsilon}f (u+\varepsilon g)\Big|_{\varepsilon=0},
\end{align}
and a Lie product $\llbracket \cdot,\,\cdot \rrbracket$ is defined as
\begin{equation}
\llbracket f,g\rrbracket=f'[g]-g'[f].
\end{equation}
Without confusion usually we write $f(u)=f_n$.
We denote a shift by $E f_n=f_{n+1}$ and a difference by $\Delta f_n=(E-1)f_n=f_{n+1}-f_n$.
Note that there is an extended Leibniz rule for $\Delta$,
\begin{align}\label{dKP:Leib}
\Delta^{s}g_n=\sum^{\infty}_{i=0}\mathrm{C}_s^i\,(\Delta^ig_{n+s-i})\Delta^{s-i},\quad s\in \mathbb{Z},
\end{align}
where $\mathrm{C}_s^i$ is defined as
\begin{align}\label{def:Cij}
\mathrm{C}_s^i=\frac{s(s-1)(s-2)\cdots(s-i+1)}{i!}
\end{align}
 and $\mathrm{C}_0^0=1$.

In this paper we are interested in the following pseudo-difference operators
\begin{equation}\label{gt-kp-operator}
M=\Delta +u +u_1\Delta^{-1}+ u_2\Delta^{-2}+ \cdots,
\end{equation}
and
\begin{equation}\label{mkp-hie-operator}
  L=v\Delta+v_0+v_{1}\Delta^{-1}+\cdots,
\end{equation}
where $u, u_i\in S[u]$ and $v-1, v_i\in S[v]$.
For two functions $f_n,g_n\in S[u]\cup S[v]$, their inner product is defined as
\begin{equation}
<f_n,g_n> = \sum^{+\infty}_{n=-\infty}f_ng_n,
\end{equation}
by which we can define an adjoint operator of an operator $T$, denoted by $T^*$, via
\begin{equation}
<T f_n,g_n> = <f_n, T^* g_n>.
\end{equation}
$M^*$ and $L^*$ can be defined in this way.

\subsection{The scalar D$\Delta$KP hierarchy\cite{fu-2013-nonlinearity}}\label{sec-2-2}

Let us briefly recall the scalar D$\Delta$KP hierarchy derived from the Lax triplet composed by\footnote{Although in isospectral case
$x\equiv t_1$, when considering master symmetry which is in non-isospectral case, $x$ and $t_1$ must be considered different \cite{fu-2013-nonlinearity}.}
\begin{subequations}\label{gt-kp-spec}
  \begin{align}
  &  M \Psi =\lambda \Psi,\label{gt-kp-spec-t-x}\\
  &  \Psi_x = B_1 \Psi,  \label{gt-kp-spec-x}\\
  &  \Psi_{t_s} = B_s \Psi,~~s=1,2,\cdots, \label{gt-kp-spec-t}
  \end{align}
\end{subequations}
where $\lambda$ is a spectral parameter, $B_s=(M^s)_{\geq 0}$ contains
only non-negative $\Delta^j$ terms (i.e. $j\geq 0$) of $M^s$, e.g.,
\begin{subequations}\label{gt-kp-B1B2}
  \begin{align}
    B_1 & =  \Delta+ u, \\
    B_2 & =  \Delta^2 + (Eu+u)\Delta+ (\Delta u)+u^2+(Eu_1+u_1 ).
  \end{align}
\end{subequations}
The compatibility of \eqref{gt-kp-spec} leads to
\begin{subequations}
\begin{align}
& M_{t_s}=[B_s, M],\label{gt-kp-lax}\\
& M_{x}=[B_1, M],\label{gt-kp-lax-x}\\
& B_{1,t_s}-B_{s,x}+[B_1, B_s]=0,~~s=1,2,\cdots,\label{gt-kp-lax-t}
\end{align}
\end{subequations}
where $[A,B]=AB-BA$, and among which \eqref{gt-kp-lax-x} serves to express $u_i$ via $u$ by
\begin{subequations}\label{gt-kp-laxs1}
\begin{align}
   & \Delta u_1= u_{x}, \label{gt-kp-s1} \\
   & \Delta u_{s+1} = u_{s,x} - \Delta u_s - u u_s + \sum_{j=0}^{s-1} (-1)^j \mathrm{C}_{s-1}^{j}u_{s-j}\Delta^{j}E^{-s}u,~s=1,2,\cdots,
\end{align}
\end{subequations}
and \eqref{gt-kp-lax-t} provides a zero curvature representation for the scalar D$\Delta$KP hierarchy
\begin{equation}
u_{t_s}=B_{s,x}-[B_1, B_s],~~s=1,2,\cdots.
\label{kp-hie}
\end{equation}
This hierarchy can also be expressed as
\begin{equation}\label{gt-kp-hierar}
 u_{t_s} = \Delta \underset{\Delta}{\textrm{Res}} (M^s),~~s=1,2,\cdots.
\end{equation}
Denote the hierarchy by
\begin{equation}
u_{t_s}=G_s,
\end{equation}
in which the  D$\Delta$KP equation is  \cite{Date-1982,kanaga-1997}
\begin{equation}\label{gt-kp-kpeqn}
 u_{t_2}=G_2= (1+2\Delta^{-1}) u_{xx}+2uu_{x}- 2u_{x}.
\end{equation}
It can be proved that
\begin{equation*}
\llbracket G_j,G_s\rrbracket=0,
\end{equation*}
which means the flow $G_j$ is a symmetry of the whole  D$\Delta$KP hierarchy \eqref{kp-hie}.

Note that one may also alternatively rewrite $M$ in terms of $E$, denoted by
\begin{equation}\label{mkpequi-theo-equi-operatorkp}
     \bar{M}= E+ \bar{u} + \bar{u}_1E^{-1}+\cdots,
\end{equation}
where
\begin{equation}\label{u-u-bar}
     \bar{u}=u-1.
\end{equation}
In that case, $\bar{u}+1, \bar u_i \in S[u]$, and the Lax triplet
\begin{subequations}\label{mkpequi-theo-sp}
     \begin{align}
        & \bar{M} \Psi =\lambda \Psi,  \label{mkpequi-theo-sp-x} \\
        & \Psi_x =\bar{B}_1 \Psi,\\
        & \Psi_{t_s} = \bar{B}_s\Psi,~~s=1,2,\cdots, \label{mkpequi-theo-sp-t}
     \end{align}
\end{subequations}
leads to another scalar D$\Delta$KP hierarchy
\begin{equation}\label{mkpequi-theo-kphie}
     \bar{u}_{t_s} = \bar{G}_s,~~s=1,2,\cdots,
\end{equation}
which is the same as \eqref{kp-hie} under  \eqref{u-u-bar}.

\subsection{The scalar D$\Delta$mKP hierarchy}\label{sec-2-3}

Here we employ the Lax triplet approach to derive the scalar  D$\Delta$mKP hierarchy from the pseudo-difference operator \eqref{mkp-hie-operator}
that has been considered in \cite{Tamizhmani-2000}.
Let us start from the triplet
\begin{subequations}\label{mkp-hie-spec}
\begin{align}
& L \Phi=\lambda \Phi,  \label{mkp-hie-spec-1}\\
& \Phi_x= A_1\Phi, \label{mkp-hie-spec-x}\\
& \Phi_{t_s}=A_s\Phi,~~s=1,2,\cdots, \label{mkp-hie-spec-2}
\end{align}
\end{subequations}
where  $A_s=(L^s)_{\geq 1}$, e.g.,
\begin{subequations}\label{mkp-hie-A1A2}
  \begin{align}
     & A_1=  v\Delta, \label{mkp-hie-A1A2-A1} \\
     & A_2=  v(Ev)\Delta^2 + v( Ev_0+v_0+ \Delta v )\Delta,\label{mkp-hie-A1A2-A2}\\
     & \cdots\cdots .\nonumber
  \end{align}
\end{subequations}
The compatibility of \eqref{mkp-hie-spec} gives rise to
\begin{subequations}
\begin{align}
& L_{t_s}=[A_s, L],\label{mkp-hie-lax}\\
& L_{x}=[A_1, L],\label{mkp-hie-lax-x}\\
& A_{1,t_s}-A_{s,x}+[A_1, A_s]=0,~~s=1,2,\cdots. \label{mkp-hie-lax-t}
\end{align}
\end{subequations}
From \eqref{mkp-hie-lax-x} one can find
\begin{subequations}\label{lemma-rtoda-vab-v-proof}
   \begin{align}
      &  v \Delta v_0=v_x, \label{lemma-rtoda-vab-v-proof-1}\\
      &  vEv_1-(E^{-1}v)v_1=v_{0,x}-v\Delta v_0,  \label{lemma-rtoda-vab-v-proof-2}\\
      & vEv_{s+1}-(E^{-1-s}v)v_{s+1}=v_{s,x}-v\Delta v_s+\sum_{i=1}^{s}(-1)^{s+1-i}v_i E^{-s-1}\Delta^{s+1-i}v,  \label{lemma-rtoda-vab-v-proof-3}
   \end{align}
      \end{subequations}
which further yields expressions for  $v_s$ in terms of $v$, i.e.
\begin{subequations}\label{mkp-hie-v0v1}
  \begin{align}
     &  v_0= \Delta^{-1}(\ln v)_{x}, \label{mkp-hie-v0v1-v0} \\
     &  v_{1}=\frac{\Delta^{-2}(\ln v)_{xx}- \Delta^{-1} v_{x}}{(E^{-1}v)},\label{mkp-hie-v0v1-v1}\\
     & \cdots \cdots.\nonumber
  \end{align}
\end{subequations}
\eqref{mkp-hie-lax-t} serves as a zero curvature representation for the scalar D$\Delta$mKP hierarchy
\begin{equation}
v_{t_s} = K_s= (A_{s,x}-[A_1, A_s])\Delta^{-1},~~s=1,2,\cdots.
\label{mkp-hie}
\end{equation}
The first nonlinear equation of this hierarchy reads \cite{Tamizhmani-2000}
\begin{equation}\label{mkp-hie-mkp}
   (\ln v)_{t_2}= (1+2\Delta^{-1})(\ln v)_{xx} + (\ln v)_{x}(1+2\Delta^{-1})(\ln v)_{x}-2 v_{x},
\end{equation}
which is known as the D$\Delta$mKP equation and gives rise to the mKP equation in continuum limit.
Similar to \cite{fu-2013-nonlinearity},
one can prove that
\begin{equation}\label{mkp-hie-hierarchy}
    (\ln v)_{t_s} = \Delta \underset{\Delta}{\mathrm{Res}}(L^s \Delta^{-1}),~~s=1,2,\cdots,
 \end{equation}
and $K_j$ is a symmetry of the whole  D$\Delta$mKP hierarchy due to
\begin{equation*}
\llbracket K_j,K_s\rrbracket=0.
\end{equation*}

\subsection{Gauge equivalence}\label{sec-2-4}

Motivated by \cite{oevel-1996}, Ref.\cite{Tamizhmani-2000} considered gauge transformation of the pseudo-difference operators
\eqref{gt-kp-operator} and \eqref{mkp-hie-operator} by introducing an undetermined function $f_n$ such that
\begin{equation}\label{gt-kp-gaugeLM}
  f_nL=Mf_n.
\end{equation}
As a result one has
\begin{subequations}\label{gt-kp-vu}
\begin{align}
& v= \frac{f_{n+1}}{f_n},~~ v_0=\partial_x \ln f_n, \label{gt-kp-vu-a}\\
& u= \Delta^{-1}\partial_{x}\ln v +1-v,\label{MT}
\end{align}
\end{subequations}
where the later is considered as a Miura transformation to connect the D$\Delta$KP hierarchy $u_{t_s}=G_s$
and the D$\Delta$mKP hierarchy $v_{t_s}=K_s$, i.e. \eqref{kp-hie} and \eqref{mkp-hie}.
In addition, comparing \eqref{gt-kp-spec-t-x} and \eqref{mkp-hie-spec-1} we have
\begin{equation}\label{gt-kp-transsp}
  \Psi=f_n\Phi.
\end{equation}

Corresponding to the Lax triplets \eqref{gt-kp-spec} and \eqref{mkp-hie-spec}, their adjoint forms are
\begin{subequations}\label{gt-kp-adwave-kp}
  \begin{align}
     & M^* \Psi^*=\lambda \Psi^*,  \label{gt-kp-adwave-kp-1}\\
     & \Psi^*_{x}=-B_1^*\Psi^*, \label{gt-kp-adwave-kp-b}\\
     & \Psi^*_{t_s}=-B_s^*\Psi^*,~~s=1,2,\cdots, \label{gt-kp-adwave-kp-2}
  \end{align}
\end{subequations}
and
\begin{subequations}\label{gt-kp-adwave-mkp}
  \begin{align}
     & L^* \Phi^*=\lambda \Phi^*,  \label{gt-kp-adwave-mkp-1}\\
     & \Phi^*_{x}=-A_1^*\Phi^*,\label{gt-kp-adwave-mkp-x}\\
     & \Phi^*_{t_s}=-A_s^*\Phi^*,~~s=1,2,\cdots, \label{gt-kp-adwave-mkp-2}
  \end{align}
\end{subequations}
where $A^*$ and $B^*$ are adjoint operators of $A_s$ and $B_s$, and
$\Psi^*$ and $\Phi^*$ stands for solutions of \eqref{gt-kp-adwave-kp} and \eqref{gt-kp-adwave-mkp}.
Note that \eqref{gt-kp-adwave-kp} and \eqref{gt-kp-adwave-mkp} generate
the D$\Delta$KP hierarchy $u_{t_s}=G_s$ and the D$\Delta$mKP hierarchy $v_{t_s}=K_s$ as well.
Since $f_n$ is a scalar function, from the way to define adjoint operators, it is easy to get
\begin{equation}
L^* f_n =f_n M^*
\end{equation}
and consequently
\begin{equation}
\Phi^*= f_n \Psi^*.
\label{bt-kp-adj-trans}
\end{equation}

\section{Squared eigenfunction symmetry of the D$\Delta$KP}\label{sec-3}

In this section we first briefly revisit the squared eigenfunction symmetry of the D$\Delta$KP hierarchy and its constraint.
Then we present a new one-field reduction of the Ragnisco-Tu hierarchy.

\subsection{Squared eigenfunction symmetry}\label{sec-3-1}

It has been proved by use of additional symmetry in \cite{chen-2017-jnmp,he-2016} that
\begin{prop}
$(\Psi\Psi^*)_x$ is a symmetry of the whole D$\Delta$KP hierarchy \eqref{kp-hie},
provided $\Psi$ and $\Psi^*$ satisfy the Lax triplets \eqref{gt-kp-spec} and \eqref{gt-kp-adwave-kp}.
\end{prop}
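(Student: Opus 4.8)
The plan is to show that $\sigma:=(\Psi\Psi^*)_x$ satisfies the linearized D$\Delta$KP equations, equivalently $\llbracket\sigma,G_s\rrbracket=0$ for every $s$, by lifting the statement to the level of the Lax operator $M$. I would introduce the eigenfunction-built pseudo-difference operator $C=\Psi\Delta^{-1}\Psi^*$ (with $\Psi,\Psi^*$ acting as multiplication by $\Psi_n,\Psi^*_n$) and the associated ghost flow $M_\tau=[C,M]$, then argue that (i) the $\Delta^0$-coefficient of $[C,M]$ is exactly $(\Psi\Psi^*)_x$, so this flow induces $u_\tau=\sigma$, and (ii) the $\tau$-flow commutes with all the hierarchy flows $M_{t_s}=[B_s,M]$ of \eqref{gt-kp-lax}. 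Claim (ii) is precisely the assertion that $\sigma$ is a symmetry of \eqref{kp-hie}.

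The engine of the argument is a discrete Lagrange (Green) identity: for any pseudo-difference operator $P=\sum_j p_j\Delta^j$ one has $(P\Psi)\Psi^*-\Psi(P^*\Psi^*)=\Delta\,\Omega_P$ with $\Omega_P$ bilinear in the shifts of $\Psi,\Psi^*$, which I would prove by checking $P=\Delta^j$ for $j\in\mathbb{Z}$ via the extended Leibniz rule \eqref{dKP:Leib} and extending by linearity and the definition of $M^*$. Applied with $P=B_1$ and $\Psi_x=B_1\Psi$, $\Psi^*_x=-B_1^*\Psi^*$ from \eqref{gt-kp-spec}, \eqref{gt-kp-adwave-kp}, this already yields the closed form $\sigma=(\Psi\Psi^*)_x=\Delta(\Psi_n\Psi^*_{n-1})=\Delta(\Psi E^{-1}\Psi^*)$, exhibiting $\sigma$ as a total difference and confirming that it can sit in the $\Delta^0$-slot of a commutator. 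The same identity with $P=B_s$ gives $(\Psi\Psi^*)_{t_s}=\Delta\,\Omega_{B_s}$, hence $\sigma_{t_s}=\Delta\partial_x\Omega_{B_s}$, matching the $\Delta\,\mathrm{Res}$ shape of the flows in \eqref{gt-kp-hierar}.

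The hard part will be step (ii), the compatibility $\partial_{t_s}(M_\tau)=\partial_\tau(M_{t_s})$. Expanding both sides with $\partial_\tau B_s=([C,M^s])_{\ge0}$ and using the Jacobi identity reduces it to the zero-curvature-type requirement that $C_{t_s}+[C,B_s]-([C,M^s])_{\ge0}$ commute with $M$, which must follow from the eigenvalue equations $M\Psi=\lambda\Psi$, $M^*\Psi^*=\lambda\Psi^*$ together with $\Psi_{t_s}=B_s\Psi$, $\Psi^*_{t_s}=-B_s^*\Psi^*$. The crux is an Oevel-type rule for how $M^s$ commutes past $C$: schematically $[P,\Psi\Delta^{-1}\Psi^*]=(P\Psi)\Delta^{-1}\Psi^*-\Psi\Delta^{-1}(P^*\Psi^*)$ modulo a purely difference operator, so that for $P=M^s$ the nonlocal $\Delta^{-1}$ tails collapse to $\lambda^s\Psi\Delta^{-1}\Psi^*-\lambda^s\Psi\Delta^{-1}\Psi^*=0$ and only an admissible difference part survives. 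The genuine obstacle, relative to the continuous KP case, is the careful definition of $\Psi\Delta^{-1}\Psi^*$ as a pseudo-difference operator and the bookkeeping of the $E$-shifts in \eqref{dKP:Leib}: one has to verify that after the projection $(\cdot)_{\ge0}$ the spurious $\Delta^{\ge1}$ and nonlocal $\Delta^{\le-1}$ pieces cancel, so the deformation stays within the admissible form of $M$. As alternatives I would keep in mind the additional-symmetry (Orlov--Schulman) route of \cite{chen-2017-jnmp,he-2016}, and the purely scalar verification $\sigma_{t_s}=G_s'[\sigma]$; the latter is conceptually simplest but forces one to differentiate the nonlocal dependence of the $u_i$ on $u$ encoded in \eqref{gt-kp-laxs1}, which is exactly the bookkeeping the operator approach avoids.
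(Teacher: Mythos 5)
Your plan is essentially correct, but it is not the paper's route: what you outline is the additional\-/symmetry (ghost-flow) proof that the paper explicitly credits to \cite{chen-2017-jnmp,he-2016} and deliberately does not reproduce. The paper's own argument is $\tau$-function based: starting from $u-1=\partial_x\ln(\tau_{n+1}/\tau_n)$, it invokes the bilinear form \eqref{taukp-bili}, its generating-function version \eqref{taukp-bili-z+-1}, and the $\tau$-expressions \eqref{taukp-theo-solu-psi}--\eqref{taukp-theo-solu-psicon} of $\Psi,\Psi^*$, and combines them into the identity $(\Psi\Psi^*)_x=\sum_{s\geq 1}G_s\lambda^{-s}$, i.e.\ \eqref{sigma-kp-flow}; the proposition then follows term by term, since each $G_s$ satisfies $\llbracket G_j,G_s\rrbracket=0$ and the symmetry (linearized) equation is linear. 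The trade-off is clear: your operator proof stays entirely inside the Lax formalism and uses discrete Oevel-type projection identities of exactly the kind the paper itself employs later (Proposition \ref{rtodap-prop-As}, and the ghost flow \eqref{mkp2sc-spec-laxasso} for the D$\Delta$mKP(E)), so it is self-contained and, as refined below, does not even need the eigenvalue equations --- only the linear $x$- and $t_s$-evolutions of $\Psi,\Psi^*$. The paper's proof needs the imported bilinear/$\tau$ machinery but buys strictly more than the symmetry statement: it identifies $\sigma$ as the generating function assembling \emph{all} isospectral flows, which is what motivates treating $u_x+(\Psi\Psi^*)_x=0$ as a constraint between two symmetries in Sec.\ref{sec-3-2}.

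Two refinements to your sketch. First, a sign in your claim (i): from $\Delta\Psi=(E\Psi)\Delta+(\Delta\Psi)$ and $\Delta^{-1}\Psi^*\Delta=E^{-1}\Psi^*+\Delta^{-1}\bigl((E^{-1}-1)\Psi^*\bigr)$ one gets, with $C=\Psi\Delta^{-1}\Psi^*$, that $([C,M])_0=\Psi (E^{-1}\Psi^*)-(E\Psi)\Psi^*$, which is $-(\Psi\Psi^*)_x$ by your own Lagrange-identity computation $(\Psi\Psi^*)_x=(E\Psi)\Psi^*-\Psi(E^{-1}\Psi^*)=\Delta(\Psi E^{-1}\Psi^*)$; so take $M_\tau=[M,C]$ (or replace $C$ by $-C$, as the paper does in \eqref{mkp2sc-spec-laxasso}) --- harmless, since symmetries form a linear space. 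Second, your step (ii) closes without the ``$\lambda^s$-tail cancellation'' and without $M\Psi=\lambda\Psi$, $M^*\Psi^*=\lambda\Psi^*$: the projection identities $(B_sC)_{\leq -1}=(B_s\Psi)\Delta^{-1}\Psi^*$ and $(CB_s)_{\leq -1}=\Psi\Delta^{-1}(B_s^*\Psi^*)$ (proved on the generators $\Delta$ and multiplication operators, then by composition) give $C_{t_s}=([B_s,C])_{\leq -1}$, hence $C_{t_s}+[C,B_s]=([C,B_s])_{\geq 0}=([C,M^s])_{\geq 0}$, the last equality because $[C,(M^s)_{\leq -1}]$ has order $\leq -2$. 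Thus the operator you need to commute with $M$ vanishes identically, and $[\partial_\tau,\partial_{t_s}]M=0$ follows; projecting onto the $\Delta^0$ component (and using the $s=1$ case to see that the relations \eqref{gt-kp-laxs1} defining the $u_i$ are preserved) yields the proposition.
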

In the following we revisit this result from the view point of $\tau$ function and present the following relation,
\begin{equation}\label{sigma-kp-flow}
\sigma=(\Psi\Psi^*)_x=\sum^{+\infty}_{s=1}G_s\lambda^{-s},
\end{equation}
where $G_s$ are the flows in the D$\Delta$KP hierarchy \eqref{kp-hie}.
To obtain the above relation, let us employ the results in \cite{adler-cmp-1999,Kajiwara-jmp-1991},
which includes some more explicit formulas of the 2DTL hierarchy
that was first systematically studied by the Sato's approach in the pioneer paper \cite{UT-2DTL-1984}.
In fact, the 2DTL hierarchy, which involves both pseudo-difference operators $\bar M$ and $\bar L^*$ (see \eqref{mkp2hie-operator}),
amounts to two  differential-difference KP hierarchies.
By the rational transformation (cf.\cite{Kajiwara-jmp-1991})
\begin{equation}\label{taukp-tau}
  u-1 = \partial_{x} \ln \frac{\tau_{n+1}}{\tau_{n}}
\end{equation}
where $\tau_n=\tau(n,x, t_1,t_2,\cdots)$,
the scalar D$\Delta$KP hierarchy \eqref{gt-kp-hierar} can be cast into bilinear forms \cite{UT-2DTL-1984,Kajiwara-jmp-1991,adler-cmp-1999}
\begin{equation}\label{taukp-bili}
  (D_{t_s} - p_s(\widetilde{\mathbf{D}} )) \tau_{n+1}\cdot\tau_n=0,~~t_1=x,~~s=2,3,\cdots
\end{equation}
where the $p_s(\mathbf{x})$ with $\mathbf{x}=(x_1,x_2,\cdots)$ are elementary Schur polynomials defined through
\begin{equation}\label{taukp-schur}
  \exp\,\biggl(\sum^{+\infty}_{j=1} x_j k^j\biggr) = \sum_{s=0}^{\infty} p_s(\mathbf{x})k^s,
\end{equation}
$\widetilde{\mathbf{D}}=(D_{t_1},\frac{1}{2}D_{t_2},\frac{1}{3}D_{t_3},\cdots)$
and $D_{x}$ is the Hirota bilinear operator defined as \cite{hirota-2004}
\begin{equation}\label{taukp-hirota}
  e^{h D_{x}} f(x)\cdot g(x) = f(x+h)g(x-h).
\end{equation}
The hierarchy \eqref{taukp-bili} can be alternatively written as
\begin{equation}\label{taukp-bili-z+-1}
  1+\sum_{s=1}^{\infty} \lambda^{-s} \partial_{t_s}\ln\frac{\tau_{n+1}}{\tau_{n}} = \frac{\tau_{n+1}(\mathbf{t}+[\lambda^{-1}])
  \tau_n(\mathbf{t}-[\lambda^{-1}])}{\tau_{n+1}\tau_{n}}
\end{equation}
with $\mathbf{t}=(t_1,t_2,\cdots)$ and  $[\lambda]=(\lambda,\lambda^2/2,\lambda^3/3,\cdots)$.
Then, taking derivative w.r.t. $x$ on  \eqref{taukp-bili-z+-1} and making use of  \eqref{taukp-tau},
it follows that
\begin{equation}\label{tau-Gstn}
\biggl( \frac{\tau_{n+1}(\mathbf{t}+[\lambda^{-1}])\tau_n(\mathbf{t}-[\lambda^{-1}])}
{\tau_{n+1}\tau_{n}}\biggr)_{x}=\sum_{s=1}^{\infty} \lambda^{-s} u_{t_s}.
\end{equation}

On the other hand, the eigenfunctions $\Psi$ and $\Psi^*$ can be expressed in terms of $\tau_n$ as the following \cite{Kajiwara-jmp-1991,adler-cmp-1999},
\begin{subequations}
\begin{align}
&   \Psi = \frac{\tau_n(\mathbf{t}-[\lambda^{-1}])}{\tau_n} e^{\sum_{s=1}^{\infty} t_s \lambda^s } \lambda^n, \label{taukp-theo-solu-psi}\\
&   \Psi^* = \frac{\tau_{n+1}(\mathbf{t}+[\lambda^{-1}])}{\tau_{n+1}} e^{-\sum_{s=1}^{\infty} t_s \lambda^s } \lambda^{-n}. \label{taukp-theo-solu-psicon}
\end{align}
\end{subequations}
Combining  \eqref{tau-Gstn}, \eqref{taukp-theo-solu-psi} and \eqref{taukp-theo-solu-psicon} together, we immediately reach
\begin{equation}\label{tau-psipsiconGs}
  (\Psi\Psi^*)_{x} = \sum_{s=1}^{\infty} \lambda^{-s} u_{t_s},
\end{equation}
i.e. \eqref{sigma-kp-flow}.
This means $\sigma=(\Psi\Psi^*)_x$ is not only a symmetry of the whole D$\Delta$KP hierarchy \eqref{kp-hie},
but also assembles all flows (isospectral symmetries) of the hierarchy.

\subsection{Symmetry constraint and the Ragnisco-Tu hierarchy}\label{sec-3-2}

Noting that $u_x$ is also a symmetry of the D$\Delta$KP hierarchy \eqref{kp-hie},
we consider the symmetry constraint $\sigma=u_x+(\Psi\Psi^*)_x=0$,
which leads to\footnote{In principle there is an integration constant $c$. Here we take $c=1$,
which is slightly different from \cite{chen-2017-jnmp} where $c$ is taken to be zero. }
\begin{equation}\label{rt-sckp}
  u=-\Psi \Psi^* +1.
\end{equation}
Taking
\begin{equation}
\Psi=Q_n,~~\Psi^*=R_n,
\label{Q-R}
\end{equation}
the pseudo-difference operator \eqref{gt-kp-operator} is written as (cf. \cite{chen-2017-jnmp})
\begin{equation}\label{rt-Mcompact}
  M=E -Q_n R_n -Q_n\Delta^{-1} R_n,
\end{equation}
and the spectral problem \eqref{gt-kp-spec-t-x} is cast into a matrix form
\begin{equation}\label{rt-spec}
  \left(
    \begin{array}{c}
      \psi_{1,n+1} \\
      \psi_{2,n+1} \\
    \end{array}
  \right)= \left(
             \begin{array}{cc}
               \lambda+Q_nR_{n} & Q_n \\
               R_{n} & 1 \\
             \end{array}
           \right) \left(
    \begin{array}{c}
      \psi_{1,n} \\
      \psi_{2,n} \\
    \end{array}
  \right).
\end{equation}
Meanwhile, \eqref{gt-kp-spec-x} and \eqref{gt-kp-adwave-kp-b} turn out to be
\begin{equation}\label{rt-H1}
  \left(
    \begin{array}{c}
      Q_{n} \\
      R_n \\
    \end{array}
  \right)_x= H_1=\left(
    \begin{array}{l}
      Q_{n+1}-Q_n^2R_n \\
      -R_{n-1}+Q_nR_n^2
    \end{array}
  \right).
\end{equation}
Note that the Ragnisco-Tu spectral problem \eqref{rt-spec} together with \eqref{rt-H1},
after replacing $R_n$ with $R_{n+1}$ and $\lambda$ with $2(\lambda-\beta)$,
is actually the Darboux transformation of the AKNS spectral problem (cf. \cite{AY-JPA-1994,Mik-TMP-2013}),
and \eqref{rt-spec} is gauge equivalent to the form \cite{chen-2017-jnmp}
\begin{equation}
  \left(
    \begin{array}{c}
      \psi_{1,n+1} \\
      -\psi_{2,n-1} \\
    \end{array}
  \right)= \left(
             \begin{array}{cc}
               \eta & Q_n \\
               R_{n} & -\eta \\
             \end{array}
           \right) \left(
    \begin{array}{c}
      \psi_{1,n} \\
      \psi_{2,n} \\
    \end{array}
  \right),
\end{equation}
which is a discretisation of the AKNS spectral problem in a way different from the Ablowitz-Ladik spectral problem \cite{al-1975}
\begin{equation}\label{AL}
  \left(
    \begin{array}{c}
      \phi_{1,n+1} \\
      \phi_{2,n+1} \\
    \end{array}
  \right)= \left(
             \begin{array}{cc}
               z & U_n \\
               V_{n} & 1/z \\
             \end{array}
           \right) \left(
    \begin{array}{c}
      \phi_{1,n} \\
      \phi_{2,n} \\
    \end{array}
  \right).
\end{equation}

The Ragnisco-Tu hierarchy is composed by \eqref{gt-kp-spec-t} and \eqref{gt-kp-adwave-kp-2}.
\begin{prop}\label{rt-theo-rthie}
Under the symmetry constraint \eqref{rt-sckp} together with \eqref{Q-R},  the system \eqref{gt-kp-spec-t} and \eqref{gt-kp-adwave-kp-2}
give rise to the recursive structure of the  Ragnisco-Tu hierarchy
  \begin{equation}\label{rt-theo-rthie-hie}
    \left(
      \begin{array}{c}
        Q_n\\
        R_n \\
      \end{array}
    \right)_{t_{s+1}} =H_{s+1}
    =M_R \left(
      \begin{array}{c}
        Q_n\\
        R_n \\
      \end{array}
    \right)_{t_s},~~
    H_1=\left(
    \begin{array}{l}
      Q_{n+1}-Q_n^2R_n \\
      -R_{n-1}+Q_nR_n^2
    \end{array}
  \right),~~s=1,2,\cdots,
  \end{equation}
  where the recursion operator $M_{R}$ is
  \begin{equation}\label{M_R}
    M_{R}=\left(
            \begin{array}{cc}
              1 & 0 \\
              0 & E^{-1} \\
            \end{array}
          \right) \bigg(\mu_n I - \left(
                                  \begin{array}{c}
                                    Q_n \\
                                    -R_{n+1} \\
                                  \end{array}
                                \right) (E+1)\Delta^{-1} ( R_{n+1} , Q_n ) \bigg)\left(
                                                                                     \begin{array}{cc}
                                                                                       E & 0 \\
                                                                                       0 & 1 \\
                                                                                     \end{array}
                                                                                   \right)
  \end{equation}
with $\mu_n=1+Q_nR_{n+1}$ and $I$ being the $2\times 2$ identity matrix.
\end{prop}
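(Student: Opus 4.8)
\section*{Proof proposal}

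The plan is to turn the scalar reduction into a $2\times 2$ discrete zero-curvature problem and then extract the recursion from the $\lambda$-expansion of the time Lax matrix. First I would record the constrained flows explicitly: under \eqref{rt-sckp} and \eqref{Q-R} the spatial evolution \eqref{gt-kp-spec-t} and its adjoint \eqref{gt-kp-adwave-kp-2} read $Q_{n,t_s}=B_sQ_n$ and $R_{n,t_s}=-B_s^*R_n$, where $B_s=(M^s)_{\geq 0}$ with $M$ as in \eqref{rt-Mcompact}. As a base case I would check $s=1$: by \eqref{rt-sckp} one has $B_1=\Delta+u=E-Q_nR_n$, hence $B_1Q_n=Q_{n+1}-Q_n^2R_n$, and using $B_1^*=E^{-1}-Q_nR_n$ also $-B_1^*R_n=-R_{n-1}+Q_nR_n^2$, which is exactly $H_1$ in \eqref{rt-theo-rthie-hie}. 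This fixes the normalization (in particular the choice $c=1$ in \eqref{rt-sckp}) and anchors the induction.

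Next I would set $\psi_{1,n}=\Psi$ and $\psi_{2,n}=\Delta^{-1}(R_n\psi_{1,n})$, so that $M\Psi=\lambda\Psi$ is precisely the matrix spectral problem \eqref{rt-spec}, written as $\Phi_{n+1}=U_n\Phi_n$ with $\Phi_n=(\psi_{1,n},\psi_{2,n})^{T}$. The scalar time flow is then promoted to a matrix one $\Phi_{n,t_s}=V^{(s)}_n\Phi_n$ as follows: the first row is obtained by reducing $B_s\psi_{1,n}$ to the basis $\{\psi_{1,n},\psi_{2,n}\}$ through \eqref{rt-spec}, which converts each positive shift in $B_s$ into a polynomial in $\lambda$; the second row is obtained by differentiating the defining relation $\Delta\psi_{2,n}=R_n\psi_{1,n}$ in $t_s$ and substituting $R_{n,t_s}=-B_s^*R_n$, then inverting $\Delta$. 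Compatibility of $\Phi_{n+1}=U_n\Phi_n$ with $\Phi_{n,t_s}=V^{(s)}_n\Phi_n$ is the discrete zero-curvature equation $U_{n,t_s}=V^{(s)}_{n+1}U_n-U_nV^{(s)}_n$, whose off-diagonal entries encode exactly the flow $(Q_n,R_n)^{T}_{t_s}=H_s$.

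I would then expand $V^{(s)}_n=\sum_{k=0}^{s}\lambda^k V^{(s,k)}_n$ as a polynomial of degree $s$ and collect powers of $\lambda$ in the zero-curvature identity. This produces a Lenard-type chain: the top power pins down the leading (diagonal) coefficient $V^{(s,s)}_n$, the $\lambda^0$ balance delivers $H_s$ via the off-diagonal part, and the intermediate relations express the $k$-th coefficient in terms of the $(k+1)$-th. Eliminating the diagonal coefficients is where the nonlocal operator $\Delta^{-1}$ is forced in, and it recasts the step-down into a single operator acting on the potential pair $(Q_n,R_n)^{T}$. Because advancing $s\mapsto s+1$ shifts every coefficient by one power of $\lambda$, composing this shift with one step-down is precisely the map sending $H_s$ to $H_{s+1}$; matching it against \eqref{M_R} identifies it as $M_R$, with the conjugating shift matrices $\mathrm{diag}(E,1)$ and $\mathrm{diag}(1,E^{-1})$ and the scalar $\mu_n=1+Q_nR_{n+1}$ appearing from the bookkeeping of shifts in $U_n$ and in the second-row evolution.

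The main obstacle will be the precise resolution of the nonlocal terms. Computing the second row of $V^{(s)}_n$ requires inverting $\Delta$ against $-B_s^*R_n$, and matching the resulting $\Delta^{-1}$ contributions to the specific combination $(E+1)\Delta^{-1}(R_{n+1},Q_n)$ paired with the outer vector $(Q_n,-R_{n+1})^{T}$ demands careful tracking of how the adjoint $B_s^*$ and the one-sided shifts interact across the zero-curvature identity; the asymmetry between $Q$ (with $E$) and $R$ (with $E^{-1}$) in \eqref{M_R} is exactly the residue of this. As a safeguard I would verify $M_RH_1=H_2$ directly once $M_R$ is in hand, and confirm that the $\lambda^0$ balance leaves no residual nonlocal piece, which closes the hierarchy and fixes the integration constant implicit in $\Delta^{-1}$.
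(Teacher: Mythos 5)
Your overall strategy --- recasting the constrained scalar triplet as the $2\times 2$ matrix problem \eqref{rt-spec} with $\psi_{2,n}=\Delta^{-1}(R_n\psi_{1,n})$, building a polynomial time matrix $V^{(s)}_n$ from $B_s=(M^s)_{\geq 0}$, and mining the discrete zero-curvature equation --- is a legitimate classical route, and your anchors are sound: under \eqref{rt-sckp} one indeed has $B_1=E-Q_nR_n$, so the base case reproduces $H_1$, and the second row of $V^{(s)}_n$ is local because $R_n(B_s\psi_{1,n})-(B_s^*R_n)\psi_{1,n}$ lies in the image of $\Delta$ with a local primitive (discrete integration by parts). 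The genuine gap is at the crux. The zero-curvature expansion for a \emph{fixed} $s$ only relates the coefficients $V^{(s,k)}_n$ of the single matrix $V^{(s)}_n$ to one another; it says nothing about how $V^{(s+1)}_n$ relates to $V^{(s)}_n$, hence nothing about how $H_{s+1}$ relates to $H_s$. The bridge you invoke --- ``advancing $s\mapsto s+1$ shifts every coefficient by one power of $\lambda$'', i.e. $V^{(s+1)}_n=\lambda V^{(s)}_n+(\lambda\hbox{-independent part})$ --- is essentially the content of the proposition itself and is asserted, not proved. For flows defined by the projections $B_s=(M^s)_{\geq 0}$ this nesting is not automatic: from $M\Psi=\lambda\Psi$ one gets $B_{s+1}\Psi=\lambda B_s\Psi+\bigl[\lambda(M^s)_{<0}-(M^{s+1})_{<0}\bigr]\Psi$, so establishing it requires controlling the negative parts $(M^s)_{<0}$, e.g.\ proving an expansion of the form $(M^s)_{<0}=\sum_{i+j=s-1}(\hbox{local in }Q,R)\,\Delta^{-1}\,(\hbox{local in }Q,R)$ so that $(M^s)_{<0}\Psi$ reduces to local combinations of $\psi_{1,n}$ and $\psi_{2,n}$. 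Your safeguard (checking $M_RH_1=H_2$ directly) verifies one instance and cannot close the induction for all $s$.

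For comparison, the paper sidesteps this by pure operator calculus: it sets $\mathcal{L}=M-1$, invokes the identity established in \cite{chen-2017-jnmp} that the pair $\bigl((\mathcal{L}^s)_{\geq 0}\Psi,\,-((\mathcal{L}^s)_{\geq 0})^*\Psi^*\bigr)$ equals $(M_R-1)^s$ applied to $(\Psi,-\Psi^*)^T$, and then uses linearity of $(\cdot)_{\geq 0}$ with the binomial expansion $[(\mathcal{L}+1)^s]_{\geq 0}=\sum_{i}\mathrm{C}_s^i(\mathcal{L}^i)_{\geq 0}$ to conclude $(Q_n,R_n)^T_{t_s}=M_R^s(Q_n,-R_n)^T$, from which \eqref{rt-theo-rthie-hie} is immediate. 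That cited identity is precisely the structural fact about negative parts that your plan is missing; to make your zero-curvature route self-contained you would have to prove its analogue (the nesting of the $V^{(s)}_n$, equivalently the structure of $(M^s)_{<0}$) by induction on $s$ --- that is where all the real work lies, and it is the one step your proposal does not supply an idea for.
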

\begin{proof}
Introduce $ \mathcal{L} = M-1$.
Based on the results in \cite{chen-2017-jnmp}, one has the relation
    \begin{equation}
      \left(
        \begin{array}{l}
          (\mathcal{L}^s)_{\geq 0} \Psi \\
          ((\mathcal{L}^s)_{\geq 0})^*(-\Psi^*) \\
        \end{array}
      \right) = (M_R-1)^s \left(
                            \begin{array}{c}
                              \Psi \\
                              -\Psi^* \\
                            \end{array}
                          \right),~~s=1,2,\cdots
    \end{equation}
which leads to
    \begin{subequations}
      \begin{align*}
        \left(
          \begin{array}{l}
           Q_n \\
           R_n \\
          \end{array}
        \right)_{t_s}
         &=  \left(
               \begin{array}{l}
                 (M^s)_{\geq 0}Q_n \\
                 - [(M^s)_{\geq 0}]^*R_n \\
               \end{array}
             \right)= \left(
                        \begin{array}{l}
                           [(\mathcal{L}+1)^s]_{\geq 0}Q_n \\
                          -[[(\mathcal{L}+1)^s]_{\geq 0}]^*R_n
                        \end{array}
                      \right)
           \\
         &= \sum_{i=0}^{s} \mathrm{C}_{s}^{i} \left(
                                          \begin{array}{l}
                                            (\mathcal{L}^i)_{\geq 0} Q_n \\
                                            -[(\mathcal{L}^i)_{\geq 0}]^*R_n \\
                                          \end{array}
                                        \right)
           = \sum_{i=0}^{s} \mathrm{C}_{s}^{i}   (M_R-1)^i \left(
                            \begin{array}{c}
                              Q_n \\
                              -R_n \\
                            \end{array}
                          \right)\\
         &= (M_R)^s   \left(
                            \begin{array}{c}
                              Q_n \\
                              -R_n \\
                            \end{array}
                          \right).
      \end{align*}
    \end{subequations}
\end{proof}

\subsection{The Ragnisco-Tu hierarchy: Refined with asymptotic condition}\label{sec-3-3}

In the previous subsection, to derive the recursive structure  \eqref{rt-theo-rthie-hie} of the Ragnisco-Tu hierarchy,
we employed the results in \cite{chen-2017-jnmp}.
Note that the proof given in \cite{chen-2017-jnmp} has nothing with the asymptotic condition of $(Q_n,R_n)$ (i.e. $(\Psi,\Psi^*)$).

In the following, in order to meet the relation \eqref{rt-sckp},
we refine the Ragnisco-Tu hierarchy  under the asymptotic condition
\begin{equation}\label{asy-cond}
Q_{n+j}R_{n+i} \to -1,~~ (|n|\to +\infty),
\end{equation}
where $i,j$ are finite integers.
This will be helpful in discussing one-field reduction later.
Let us also introduce
\begin{equation}\label{H0'}
H'_0=H_{0}=(Q_n,-R_n)^T
\end{equation}
and define
\begin{equation}\label{Hs'}
H'_{s}=M_R^s H'_0,
\end{equation}
where $M_R$ is given as \eqref{M_R}.
With the asymptotic condition \eqref{asy-cond} it is easy to obtain
\begin{equation}\label{H1'}
H'_{1}=M_R H'_0=H_1+2H'_0
=\left(
    \begin{array}{l}
      Q_{n+1}-2Q_n-Q_n^2R_n \\
      -R_{n-1}+2R_n +Q_nR_n^2
    \end{array}
  \right).
\end{equation}
Write $H'_{s}=(H'_{s,1}, H'_{s,2})^T$, and by $[f_n]$ we denote
the residue terms of $f_n$ after modular $Q_{n+j}R_{n+i}$.
For example, $[Q_{n+1}R_{n-1}]=0$, $[Q_n^2R_n]=Q_n$ and $[Q_{n+j}]=Q_{n+j}$.
It is then easy to get
\[[H'_{0,1}]=Q_n,~~[H'_{1,1}]=Q_{n+1}-2Q_n-Q_n.\]
By observation of $[H'_{0,i}],~[H'_{1,i}]$ and the structure of the ``integration'' part of $M_R$,
we can find that under the asymptotic condition \eqref{asy-cond}
$[H'_{s,1}]$ is always a linear combination of $\{Q_{n+j}\}$ and $[H'_{s,2}]$ is always a linear combination of $\{R_{n+j}\}$.
This means, when we take into account of the asymptotic condition \eqref{asy-cond}
in deriving the refined Ragnisco-Tu hierarchy
\begin{equation}\label{rt-hie-refined}
(Q_n,R_n)^T_{t_s}=H'_s,~~s=0,1,\cdots,
\end{equation}
the flow $H'_s$ is actually certain linear combination of the flows $\{H_j\}$ with a form
\begin{equation}\label{H'-H}
H'_s=\sum^{s}_{j=0} c_j H_j,~~ c_s\equiv 1,~c_j\in \mathbb{Z}.
\end{equation}

\subsection{Reduced to the Volterra hierarchy}\label{sec-3-4}

In \cite{RT-1994} it is addressed that the Ragnisco-Tu hierarchy
``are essentially given by the non-existence of one-field reduction''.
In the following let us show how the Volterra hierarchy arises as a reduction from
the refined Ragnisco-Tu hierarchy \eqref{rt-hie-refined}.

Consider reduction
\begin{equation}\label{vrt-QR}
  R_{n+1}=-\frac{1}{Q_n}
\end{equation}
and introduce
\begin{equation}\label{vrt-rtVol}
  q_n =  \ln \frac{Q_n}{Q_{n-1}},
\end{equation}
which indicates that
\begin{equation}\label{vrt-conn}
 e^{q_n}= -Q_n R_n=\frac{Q_n}{Q_{n-1}}.
\end{equation}
By \eqref{vrt-QR} the two equations in  $(Q_n,R_n)^T_{t_1}=H'_1$ with \eqref{H1'}  are reduced to the same one,
\begin{equation}\label{vrt-Q-eq}
  (\ln Q_n)_{t_1}= \frac{Q_{n+1}}{Q_n}  + \frac{Q_{n}}{Q_{n-1}} -2,
\end{equation}
which gives rise to the well known Volterra equation
\begin{equation}\label{vrt-vol-eq}
  q_{n,t_1}= V_1=e^{q_{n+1}}-e^{q_{n-1}}.
\end{equation}

Next, for the recursive Ragnisco-Tu hierarchy \eqref{rt-theo-rthie-hie}, under the reduction \eqref{vrt-QR},
it is reduced to a scaler relation
\begin{equation}
    (\ln Q_{n})_{t_{s+1}} = ( 1+2\Delta^{-1} ) \Bigl(\frac{Q_{n+1}}{Q_{n}}E- E^{-1}\frac{Q_{n+1}}{Q_{n}} )
    (\ln Q_{n})_{t_{s}},~~s=1,2,\cdots.
  \end{equation}
Applying $E^{-1}\Delta$ on both sides, we immediately arrive at
\begin{equation}
    q_{n,t_{s+1}} = V_{s+1}= L^{}_V q_{n,t_{s}},~~ L^{}_V=( E+1 ) (e^{q_n}E- E^{-1}e^{q_n})\Delta^{-1},
\label{Volterra-hie}
\end{equation}
which is the recursive structure of the Volterra hierarchy (cf.\cite{suris-2003}).

Let us sum up the reduction results as follows.

\begin{theorem}\label{vrt-rthie-red}
Consider the relation \eqref{H'-H}, the refined  Ragnisco-Tu hierarchy \eqref{rt-hie-refined}
is reduced under the reduction \eqref{vrt-QR} to the Volterra hierarchy
\begin{equation}
q_{n,t_s}=V'_{s},~~ s=1,2,\cdots,
\label{Vol'-hie}
\end{equation}
where $q_n$ is defined as \eqref{vrt-rtVol}, $V'_1=V_1$ as defined in \eqref{vrt-vol-eq},
\begin{equation}\label{V'-V}
V'_s=\sum^{s}_{j=1} c_j V_j,~~ c_s\equiv 1,~c_j\in \mathbb{Z},
\end{equation}
and $V_j=L^{j-1}_V V_1$ with $L_V$ defined in \eqref{Volterra-hie}.
\end{theorem}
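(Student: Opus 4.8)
The plan is to establish Theorem~\ref{vrt-rthie-red} by combining the reduction computations already carried out for the individual flows \eqref{vrt-vol-eq} and \eqref{Volterra-hie} with the linear-combination structure \eqref{H'-H} of the refined Ragnisco-Tu flows. The key observation is that the reduction \eqref{vrt-QR} is linear in the sense that it commutes with taking linear combinations of flows: if each basic flow $H_j$ reduces under \eqref{vrt-QR} to a scalar Volterra flow $V_j$, then the combination $H'_s=\sum_{j=0}^{s}c_jH_j$ reduces to $\sum_{j=0}^{s}c_jV_j$, which is precisely the right-hand side of \eqref{V'-V}. Thus the whole theorem hinges on two facts: (i) the reduction \eqref{vrt-QR} is consistent with every flow $H_j$ (so that the two-component system collapses to a single scalar equation for $q_n$), and (ii) each reduced scalar flow is the corresponding $V_j$ in the Volterra hierarchy generated by $L_V$.

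First I would verify that the reduction \eqref{vrt-QR} is compatible with the full recursive structure \eqref{rt-theo-rthie-hie}. This was essentially shown in Sec.~\ref{sec-3-4}: under $R_{n+1}=-1/Q_n$ the two components of $(Q_n,R_n)^T_{t_{s+1}}=M_R(Q_n,R_n)^T_{t_s}$ collapse to the single scalar relation displayed just before \eqref{Volterra-hie}, and applying $E^{-1}\Delta$ converts it into $q_{n,t_{s+1}}=L_V q_{n,t_s}$. I would argue that this collapse persists along the whole hierarchy: since $M_R$ maps a flow satisfying the reduction constraint to another flow satisfying the same constraint (this is exactly what the scalar reduction of the recursion demonstrates), induction on $s$ shows that every $H_j$ reduces consistently and that the reduced flows obey $V_{j}=L_V^{\,j-1}V_1$. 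In particular the base case $H'_1\to V_1$ is \eqref{vrt-Q-eq}--\eqref{vrt-vol-eq}.

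Next I would assemble the linear-combination statement. Using \eqref{H'-H}, the refined flow $H'_s=\sum_{j=0}^{s}c_jH_j$ reduces term by term; the $j=0$ term reduces trivially (the flow $H'_0=H_0$ corresponds under \eqref{vrt-QR} to a trivial or constant contribution to $q_{n,t_s}$, since $q_n=\ln(Q_n/Q_{n-1})$ is insensitive to the scaling symmetry encoded by $H_0$), so the surviving sum starts at $j=1$ and yields $V'_s=\sum_{j=1}^{s}c_jV_j$ with the same integer coefficients $c_j$ and $c_s\equiv1$. This is exactly \eqref{V'-V}. I would be careful to track how the definition $q_n=\ln(Q_n/Q_{n-1})$ together with \eqref{vrt-conn} turns $Q$-flows into $q$-flows consistently across all $j$, so that the single scalar variable $q_n$ governs the entire reduced hierarchy.

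The main obstacle I anticipate is controlling the reduction of the \emph{nonlocal} ($\Delta^{-1}$) part of $M_R$ under the constraint \eqref{vrt-QR}, and in particular confirming that the two components genuinely collapse to one scalar equation at every step rather than only at the first. The delicate point is that the integration operator $(E+1)\Delta^{-1}$ acts on bilinear expressions in $(Q_n,R_n)$, and one must check that under $R_{n+1}=-1/Q_n$ these expressions reorganize into total differences of functions of $q_n$ alone, so that $E^{-1}\Delta$ cleanly produces $L_V$ acting on the previous $q$-flow. Establishing that this structural compatibility is preserved inductively—i.e. that the constraint surface is invariant under $M_R$ and that the induced map on it is precisely $L_V$—is the crux; once that is secured, the linear-combination bookkeeping via \eqref{H'-H} is routine.
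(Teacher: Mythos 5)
Your proposal is correct and follows essentially the same route as the paper: verify the base case $H'_1\to V_1$, show that under \eqref{vrt-QR} the two components of the recursion \eqref{rt-theo-rthie-hie} collapse to a scalar relation which, after applying $E^{-1}\Delta$, becomes $q_{n,t_{s+1}}=L_V q_{n,t_s}$, and then use the linear-combination structure \eqref{H'-H} (with the $j=0$ term dropping out since $q_n$ is invariant under $H_0$) to obtain \eqref{V'-V}. The compatibility issue you flag as the crux is exactly what the paper settles by its explicit computation of the reduced scalar recursion displayed just before \eqref{Volterra-hie}.
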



\section{Squared eigenfunction symmetry of the D$\Delta$mKP}\label{sec-4}

In this section we consider squared eigenfunction symmetry constraint of the D$\Delta$mKP system.
As a result, we will obtain the R-Toda hierarchy,
which further is reduced to two differential-difference Burgers hierarchies.

\subsection{Squared eigenfunction symmetry}\label{sec-4-1}

As for the squared eigenfunction symmetry of the D$\Delta$mKP hierarchy  \eqref{mkp-hie},
we have the following.

\begin{theorem}\label{tau-theo-squa}
The following relation holds
\begin{equation}\label{tau-psipsiconj-exp}
  (\Phi E\Delta^{-1} \Phi^*)_{x}= - \sum_{s=1}^{\infty} \lambda^{-s}   v_{t_s},
\end{equation}
where $\Phi$ and $\Phi^*$ satisfy the Lax triplets \eqref{mkp-hie-spec} and \eqref{gt-kp-adwave-mkp}.
\end{theorem}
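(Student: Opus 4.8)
The plan is to strip the nonlocal operator $E\Delta^{-1}$ off the left-hand side by differentiating with the first members of the Lax triplet \eqref{mkp-hie-spec} and its adjoint \eqref{gt-kp-adwave-mkp}, reducing the statement to a purely local squared-eigenfunction potential identity, and then to evaluate that potential by pulling back to the already-proven D$\Delta$KP relation \eqref{tau-psipsiconGs} through the gauge transformation and the Miura map \eqref{MT}.

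First I would carry out the local reduction, which I expect to be mechanical. From $\Phi_x=A_1\Phi=v\Delta\Phi$ (using \eqref{mkp-hie-A1A2-A1}), and from the adjoint $A_1^*g=(E^{-1}-1)(vg)$ so that \eqref{gt-kp-adwave-mkp-x} becomes $\Phi^*_x=(1-E^{-1})(v\Phi^*)$, together with the operator identity $\Delta^{-1}(1-E^{-1})=E^{-1}$, one gets $E\Delta^{-1}\Phi^*_x=v\Phi^*$. Writing $P=\Delta^{-1}\Phi^*$ and applying the discrete Leibniz rule $\Delta(\Phi P)=(\Delta\Phi)(EP)+\Phi(\Delta P)$, the two contributions telescope into the clean local form
\[
(\Phi E\Delta^{-1}\Phi^*)_x=v\,\Delta(\Phi\,\Delta^{-1}\Phi^*).
\]
This reduces the theorem to the squared-eigenfunction potential identity $v\,\Delta(\Phi\,\Delta^{-1}\Phi^*)=-\sum_{s\ge1}\lambda^{-s}v_{t_s}$, equivalently $\Phi\,\Delta^{-1}\Phi^*=-\sum_{s}\lambda^{-s}\Delta^{-1}(\ln v)_{t_s}$ up to an $x$-independent term, since $v_{t_s}/v=(\ln v)_{t_s}$.

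To establish this identity I would cross over to the D$\Delta$KP side. By \eqref{gt-kp-transsp} and \eqref{bt-kp-adj-trans} one has $\Psi=f_n\Phi$ and $\Phi^*=f_n\Psi^*$, hence $\Phi\Phi^*=\Psi\Psi^*$, and the elementary relation $\Phi E\Delta^{-1}\Phi^*=\Phi\,\Delta^{-1}\Phi^*+\Phi\Phi^*$ lets \eqref{tau-psipsiconGs} enter directly. Differentiating the potential identity in $x$ and invoking the Miura transformation \eqref{MT} in the form $u_{t_s}+v_{t_s}=\Delta^{-1}\partial_x(\ln v)_{t_s}$ shows it is consistent with $(\Psi\Psi^*)_x=\sum_s\lambda^{-s}u_{t_s}$; the remaining content is the potential $\Phi\,\Delta^{-1}\Phi^*$ itself, which I would evaluate by inserting the $\tau$-function forms \eqref{taukp-theo-solu-psi}--\eqref{taukp-theo-solu-psicon} and using the bilinear identity \eqref{taukp-bili-z+-1}, paralleling the derivation of \eqref{tau-Gstn}.

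The hard part will be precisely this last step. Because the operator $\Delta^{-1}$ acts on $f_n\Psi^*$ rather than on the bare product, the clean termwise cancellation of the factors $e^{\pm\sum t_s\lambda^s}$ and $\lambda^{\pm n}$ that trivialized the D$\Delta$KP computation no longer occurs, and one obtains a genuine squared-eigenfunction potential. Closing it requires identifying the gauge potential $f_n$ with the appropriate $\tau$-function, so that $\Delta^{-1}(\ln v)_{t_s}=(\ln f_n)_{t_s}$, and summing the resulting anti-difference against the bilinear identity. Equivalently, one may bypass the explicit $\tau$-bookkeeping by invoking a standard resolvent/squared-eigenfunction identity that expresses the generating function $\sum_s\lambda^{-s}\,\mathrm{Res}_\Delta(L^s\Delta^{-1})$ attached to $L$ in terms of $\Phi\,\Delta^{-1}\Phi^*$; since $(\ln v)_{t_s}=\Delta\,\mathrm{Res}_\Delta(L^s\Delta^{-1})$ by \eqref{mkp-hie-hierarchy}, this is exactly the missing identity, and verifying it from the spectral representation of $(\lambda-L)^{-1}$ is where the real work lies.
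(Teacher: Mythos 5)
Your opening reduction is correct, and it is in fact the paper's own pivot: since $E\Delta^{-1}=1+\Delta^{-1}$, your local identity $(\Phi E\Delta^{-1}\Phi^*)_x=v\,\Delta(\Phi\,\Delta^{-1}\Phi^*)$ is exactly \eqref{tau-theo-squa-iden1}, and your crossing to the D$\Delta$KP side via $\Phi\Phi^*=\Psi\Psi^*$, \eqref{tau-psipsiconGs} and the Miura map \eqref{MT} is also the paper's route. The genuine gap is what comes next: you reduce the theorem to the potential identity for $\Phi\,\Delta^{-1}\Phi^*$ and then defer its proof to an explicit evaluation via $\tau$-functions or a resolvent identity for $(\lambda-L)^{-1}$, which you acknowledge you have not carried out; so the proposal ends with the essential step outstanding. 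Moreover, that deferred computation is unnecessary, because the three facts you already have close the proof. Write $X=\Phi E\Delta^{-1}\Phi^*$, $Y=\Phi\Phi^*$. Your identity reads $X_x=v\Delta(X-Y)$, i.e. $\Delta^{-1}\bigl(\frac{1}{v}X_x\bigr)=X-Y$ up to an $n$-independent term, and differentiating in $x$ gives
\begin{equation*}
\Bigl(\Delta^{-1}\partial_x\frac{1}{v}-1\Bigr)X_x=-Y_x.
\end{equation*}
Separately, $Y=\Psi\Psi^*$, the relation \eqref{tau-psipsiconGs} and the Miura map in the form $u_{t_s}=\Delta^{-1}\partial_x\frac{v_{t_s}}{v}-v_{t_s}$ give
\begin{equation*}
Y_x=\sum_{s=1}^{\infty}\lambda^{-s}u_{t_s}
=\Bigl(\Delta^{-1}\partial_x\frac{1}{v}-1\Bigr)\sum_{s=1}^{\infty}\lambda^{-s}v_{t_s}.
\end{equation*}
Adding the two displays, the operator $\Delta^{-1}\partial_x\frac{1}{v}-1$ annihilates $X_x+\sum_{s\geq1}\lambda^{-s}v_{t_s}$; granting its injectivity on the class of fields considered (an assumption the paper's proof also uses silently when it cancels this operator between its \eqref{eq0} and \eqref{eq1}), this is precisely \eqref{tau-psipsiconj-exp}.

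The idea you missed is that your own identity, read backwards as $\Phi\,\Delta^{-1}\Phi^*=\Delta^{-1}\bigl(\frac{1}{v}X_x\bigr)$, already expresses the squared-eigenfunction potential through $X_x$, so the potential never needs to be computed from $\tau$-functions or from a spectral representation. Relatedly, your ``consistency check'' (differentiate the potential identity, invoke Miura and \eqref{tau-psipsiconGs}) only shows that the unproven potential identity is equivalent to the unproven theorem; it proves neither, since it never uses the dynamical content of your local identity. The elimination above is what turns that circle into a proof, and it is exactly the paper's argument. As written, your proposal establishes the correct key identity but leaves the theorem unproved.
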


\begin{proof}
We prove this Theorem by using the results of the  D$\Delta$KP hierarchy and the gauge equivalence relations.
Making use of relations \eqref{MT}, \eqref{gt-kp-transsp} and \eqref{bt-kp-adj-trans},
from \eqref{tau-psipsiconGs} we can find
\begin{align}
        (\Phi\Phi^*)_{x}= &(\Psi\Psi^*)_{x} = \sum_{s=1}^{\infty} \lambda^{-s} \partial_{t_s} u \nonumber\\
        = & \sum_{s=1}^{\infty} \lambda^{-s} (\Delta^{-1}\partial_{x}\ln v -v)_{t_s} \nonumber \\
        = & \Bigl( \Delta^{-1}\partial_{x}\frac{1}{v} -1\Bigr ) \sum_{s=1}^{\infty} \lambda^{-s} v_{t_s}.\label{eq0}
\end{align}
Note that from \eqref{mkp-hie-spec-x} and \eqref{gt-kp-adwave-mkp-x} we can derive a relation
\begin{equation}\label{tau-theo-squa-iden1}
    ( \Phi E \Delta^{-1}\Phi^* )_{x}= v \Delta (\Phi E \Delta^{-1}\Phi^* - \Phi\Phi^* ),
\end{equation}
from which we can find
\begin{equation}\label{eq1}
(\Delta^{-1}\partial_{x}\frac{1}{v} -1 )(\Phi E \Delta^{-1}\Phi^*)_{x} =-(\Phi\Phi^*)_{x}.
\end{equation}
Thus, combining   \eqref{eq0} and \eqref{eq1} we immediately reach \eqref{tau-psipsiconj-exp}
and complete the proof.
\end{proof}

Note that \eqref{tau-psipsiconj-exp} indicates $\sigma=(\Phi E\Delta^{-1} \Phi^*)_{x}$
provides a symmetry for the whole D$\Delta$mKP hierarchy \eqref{mkp-hie}.
Such a symmetry can also be constructed using the additional symmetry approach \cite{oevel-pa-1993} (also see \cite{cheng-2018}).

\subsection{Symmetry constraint and the R-Toda hierarchy}\label{sec-4-2}

\subsubsection{Spectral problem}\label{sec-4-2-1}

Since both $v_x$ and $(\Phi E\Delta^{-1} \Phi^*)_{x}$ are symmetries of the D$\Delta$mKP hierarchy, we consider the following
symmetry constraint
\begin{equation}\label{sc-mkp}
v=\Phi E\Delta^{-1} \Phi^*.
\end{equation}
For convenience, introduce
\begin{equation}\label{rtoda-spec-ab}
  a_n = \Phi,~~b_n = E \Delta^{-1}\Phi^*,
\end{equation}
under which the pseudo-difference operator $L$ is written as
\begin{equation}\label{rtoda-abL}
  L=a_nb_n\Delta + a_n\Delta^{-1}b_n\Delta.
\end{equation}
Note that \eqref{sc-mkp} and \eqref{rtoda-spec-ab} indicate
\begin{equation}
v=a_nb_n.
\label{v-ab}
\end{equation}

To prove the form \eqref{rtoda-abL}, we need to express all $v_s$ in terms of $a_n$ and $b_n$.

\begin{prop}\label{lemma-rtoda-vab}
 Under the symmetry constraint \eqref{sc-mkp}, all the $\{v_s\}_{s=0}^{\infty}$ defined by \eqref{lemma-rtoda-vab-v-proof}
 can be expressed in terms of $a_n$ and $b_n$ as the following,
 \begin{equation}\label{lemma-rtoda-vab-v}
   v_s=(-1)^s a_n E^{-1-s}\Delta^s b_n,~~s=0,1,2,\cdots,
 \end{equation}
where $a_n$ and $b_n$ are defined in \eqref{rtoda-spec-ab}.
\end{prop}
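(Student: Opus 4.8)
The plan is to establish \eqref{lemma-rtoda-vab-v} by induction on $s$, using the recursion \eqref{lemma-rtoda-vab-v-proof} that already determines the $v_s$ from $v$. The essential input is the pair of $x$-evolution equations for the eigenfunctions $a_n=\Phi$ and $b_n=E\Delta^{-1}\Phi^*$ of \eqref{rtoda-spec-ab}. From the spectral problem \eqref{mkp-hie-spec-x} with $A_1=v\Delta$ as in \eqref{mkp-hie-A1A2-A1} one reads off directly
\begin{equation*}
a_{n,x}=v\Delta a_n=v(a_{n+1}-a_n).
\end{equation*}
For $b_n$ I would start from the adjoint equation \eqref{gt-kp-adwave-mkp-x}, $\Phi^*_x=-A_1^*\Phi^*$, compute $A_1^*=(E^{-1}-1)(v\,\cdot)$, and use $\Phi^*=E^{-1}\Delta b_n=b_n-b_{n-1}$. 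This gives $(1-E^{-1})\bigl(b_{n,x}-v(b_n-b_{n-1})\bigr)=0$, so $b_{n,x}-v(b_n-b_{n-1})$ is independent of $n$; fixing this constant to zero through the asymptotic behaviour of $(\Phi,\Phi^*)$ yields the companion relation
\begin{equation*}
b_{n,x}=v(b_n-b_{n-1})=vE^{-1}\Delta b_n.
\end{equation*}
Together with the constraint $v=a_nb_n$ from \eqref{v-ab}, these express every $x$-derivative appearing in \eqref{lemma-rtoda-vab-v-proof} purely in terms of $a_n$, $b_n$ and their shifts.

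For the base case $s=0$ the claim reads $v_0=a_nE^{-1}b_n=a_nb_{n-1}$, which I would verify against \eqref{lemma-rtoda-vab-v-proof-1}: expanding $v_x=(a_nb_n)_x$ with the two evolution equations collapses to $a_nb_n(a_{n+1}b_n-a_nb_{n-1})$, which is exactly $v\Delta v_0$. The step $s=0\to s=1$ is checked the same way against \eqref{lemma-rtoda-vab-v-proof-2}. For the general inductive step I would assume \eqref{lemma-rtoda-vab-v} for all indices up to $s$, insert these together with $v=a_nb_n$ into the right-hand side of \eqref{lemma-rtoda-vab-v-proof-3}, and observe that the left-hand side $(vE-E^{-1-s}v)v_{s+1}$ is a first-order difference operator in the unknown; it then suffices to show that the proposed $v_{s+1}=(-1)^{s+1}a_nE^{-2-s}\Delta^{s+1}b_n$ reproduces that right-hand side.

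The main obstacle is precisely this inductive step. After substituting $v_i=(-1)^ia_nE^{-1-i}\Delta^ib_n$ and $v=a_nb_n$, the summation $\sum_{i=1}^{s}(-1)^{s+1-i}v_iE^{-s-1}\Delta^{s+1-i}v$ in \eqref{lemma-rtoda-vab-v-proof-3} must be shown to combine with $v_{s,x}-v\Delta v_s$ so that everything collapses into a single product of the form $(-1)^{s+1}a_n(\cdots)\Delta^{s+1}b_n$. I expect this to hinge on the extended Leibniz rule \eqref{dKP:Leib} applied to $\Delta^{s+1-i}(a_nb_n)$ and on a Chu--Vandermonde-type identity among the coefficients $\mathrm{C}_s^i$; organising these double sums is where the genuine work lies.

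As an independent cross-check, and a route that sidesteps the combinatorics, I would note that \eqref{lemma-rtoda-vab-v} is equivalent to the operator identity $L=a_nb_n\Delta+a_n\Delta^{-1}b_n\Delta$ of \eqref{rtoda-abL}: expanding $a_n\Delta^{-1}b_n\Delta$ by \eqref{dKP:Leib} with $s=-1$, where $\mathrm{C}_{-1}^i=(-1)^i$, produces exactly $\sum_{s\ge0}(-1)^sa_n(E^{-1-s}\Delta^sb_n)\Delta^{-s}$, matching the claimed $v_s$ term by term. It would then suffice to verify that $\tilde{L}:=a_nb_n\Delta+a_n\Delta^{-1}b_n\Delta$ satisfies the same Lax equation $\tilde{L}_x=[v\Delta,\tilde{L}]$ as \eqref{mkp-hie-lax-x}, using only $a_{n,x}=v\Delta a_n$ and $b_{n,x}=vE^{-1}\Delta b_n$, and to invoke uniqueness (under the asymptotic conditions) of the operator with prescribed leading term $v\Delta$ solving that equation. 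This reduces the whole proposition to a single commutator computation rather than an $s$-indexed induction.
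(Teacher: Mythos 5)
Your plan follows the same route as the paper's own proof: derive the constrained $x$-flow $a_{n,x}=v\Delta a_n$, $b_{n,x}=vE^{-1}\Delta b_n$ (your treatment of the kernel of $1-E^{-1}$ and the asymptotic fixing of the integration constant is in fact more careful than the paper, which simply states \eqref{lemma-rtoda-vab-v-proof-abx}), check the base cases against \eqref{lemma-rtoda-vab-v-proof-1}--\eqref{lemma-rtoda-vab-v-proof-2}, and run an induction through \eqref{lemma-rtoda-vab-v-proof-3}; your observation that \eqref{lemma-rtoda-vab-v} is equivalent, via \eqref{dKP:Leib} with $s=-1$ and $\mathrm{C}_{-1}^i=(-1)^i$, to the operator form \eqref{rtoda-abL} is also correct and is exactly how the paper passes from this Proposition to \eqref{rtoda-abL}. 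The genuine gap is that you stop precisely where the mathematical content lies: the inductive step is announced but not performed (``organising these double sums is where the genuine work lies''), and the alternative route is likewise left at ``it would then suffice to verify'' the commutator identity. Neither computation is carried out, so what you have is a proof plan whose central verification is missing. The paper does complete the induction: it recasts the right-hand side of \eqref{lemma-rtoda-vab-v-proof-3} in the form $vE(X)-(E^{-1-s}v)X$ with $X=(-1)^{s+1}a_nE^{-2-s}\Delta^{s+1}b_n$ and concludes by comparison.

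Two remarks on the step you deferred, both of which show why deferring it matters. First, it is simpler than you fear: writing $h=E^{-1}\Delta b_n$ and $X$ as above, a short computation gives $v_{s,x}-v\Delta v_s=vEX+(-1)^s a_nE^{-1-s}\Delta^s(vh)$, and the induction then closes using only the discrete Leibniz rule $\Delta^s(vh)=\sum_{j=0}^s\mathrm{C}_s^j(\Delta^{s-j}v)(E^{s-j}\Delta^j h)$; Pascal's rule, not Chu--Vandermonde, is all that enters. Second, a literal verification against \eqref{lemma-rtoda-vab-v-proof-3} as printed would fail for $s\ge 2$: expanding $\Delta^{-i}\circ v$ by \eqref{dKP:Leib} produces the coefficient $\mathrm{C}_{-i}^{s+1-i}=(-1)^{s+1-i}\mathrm{C}_s^{i-1}$ in front of $v_i E^{-s-1}\Delta^{s+1-i}v$, and the factor $\mathrm{C}_s^{i-1}$ is missing from the printed formula (invisible at $s=1$, where it equals $1$); with it restored the collapse is immediate. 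You would only have discovered this by actually doing the computation you postponed. Finally, your commutator route is viable --- using the operator identities $\Delta f=(Ef)\Delta+(\Delta f)$ and $f\Delta=\Delta E^{-1}f-(E^{-1}\Delta f)$, the identity $\tilde{L}_x=[v\Delta,\tilde{L}]$ does follow in a few lines --- but it is not logically lighter than the induction: coefficient by coefficient that Lax equation \emph{is} the recursion \eqref{lemma-rtoda-vab-v-proof}, so it requires the same uniqueness-modulo-kernel argument (fixed by asymptotic conditions) on which both your induction and the paper's ``by comparison'' silently rely.
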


\begin{proof}
Under the constraint \eqref{sc-mkp}, the coupled system \eqref{mkp-hie-spec-x} and \eqref{gt-kp-adwave-mkp-x}
turns out to be
   \begin{equation}\label{lemma-rtoda-vab-v-proof-abx}
     a_{n,x}=a_nb_n(a_{n+1}-a_n),~~b_{n,x}=a_nb_n ( b_n-b_{n-1}).
   \end{equation}
This will be used to eliminate derivatives of $a_n$ and $b_n$ w.r.t. $x$ in \eqref{lemma-rtoda-vab-v-proof}.
Inserting \eqref{v-ab} and \eqref{lemma-rtoda-vab-v-proof-abx} into \eqref{lemma-rtoda-vab-v-proof-1} and \eqref{lemma-rtoda-vab-v-proof-2} respectively,
$v_0$ and $v_1$ are written as
     \begin{equation}\label{lemma-rtoda-vab-v-proof-v0v1ab}
       v_0=a_nb_{n-1},~~v_1= -a_n E^{-2}\Delta b_n.
     \end{equation}
Then we assume
\begin{equation}
     v_i=(-1)^i a_n E^{-1-i}\Delta^i b_n,~~i=0,1,\cdots,m,
   \end{equation}
by which the right hand side of \eqref{lemma-rtoda-vab-v-proof-3} with $s=m$ is expressed as
    \begin{equation}\label{lemma-rtoda-vab-v-proof-vs}
     a_nb_nE \big( (-1)^{m+1} a_n E^{-2-m}\Delta^m b_n \big)-(E^{-1-s}a_nb_n) (-1)^{m+1} a_n E^{-2-m}\Delta^m b_n.
    \end{equation}
By comparison we immediately from \eqref{lemma-rtoda-vab-v-proof-3} find
\begin{equation}
      v_{m+1}=(-1)^{m+1} a_n E^{-2-m}\Delta^m b_n.
\end{equation}
This means on basis of mathematical induction the expression \eqref{lemma-rtoda-vab-v}
is valid for all $s=0,1,\cdots$.

\end{proof}

Substituting  \eqref{lemma-rtoda-vab-v} into \eqref{mkp-hie-operator}
and make use of formula \eqref{dKP:Leib} with $s=-1$, one can arrive at \eqref{rtoda-abL}.
This indicates that the D$\Delta$mKP system is closed under the constraint \eqref{sc-mkp}.

Now, with \eqref{rtoda-abL} in hand, the spectral problem \eqref{mkp-hie-spec-1} turns out to be
\begin{equation}\label{rtoda-abspectral}
  \big(a_nb_n\Delta + a_n\Delta^{-1}b_n\Delta \big)\Phi = \lambda \Phi,
\end{equation}
which can be written as
\begin{equation}\label{rt-sp-1}
\left(
  \begin{array}{c}
    \phi_{1,n+1} \\
    \phi_{2,n+1} \\
  \end{array}
\right) = \left(
            \begin{array}{cc}
              {(\eta^2-z_n)}/{r_n} & -\eta/r_n  \\
              \eta & 0 \\
            \end{array}
          \right) \left(
                    \begin{array}{c}
                      \phi_{1,n} \\
                      \phi_{2,n} \\
                    \end{array}
                  \right),
\end{equation}
where $\phi_{1,n}=\Phi/a_n,~\lambda=\eta^2$, and
\begin{equation}\label{rtoda-abgt1abqr}
 z_n= -a_nb_n,~~r_n= a_{n+1}b_n,
\end{equation}
Here we specially note that we employ $z_n$ to avoid making confusion with $v$ used
in the pseudo-difference operator $L$, and also note that $z_n=-v$ in light of \eqref{v-ab}.

The spectral problem \eqref{rt-sp-1} can be further gauge transformed into
\begin{equation}\label{rtoda-absp2}
\left(
  \begin{array}{c}
    \phi'_{1,n+1} \\
    \phi'_{2,n+1} \\
  \end{array}
\right) = \left(
            \begin{array}{cc}
              \eta^2-z_n & -\eta  \\
              \eta \, r_n & 0 \\
            \end{array}
          \right) \left(
                    \begin{array}{c}
                      \phi'_{1,n} \\
                      \phi'_{2,n} \\
                    \end{array}
                  \right)
\end{equation}
by taking
\begin{equation}\label{rtoda-abgt2}
  (\phi_{1,n}, \phi_{2,n})=e^{-\Delta^{-1}\ln r_n} (\phi'_{1,n},  \phi'_{2,n}).
\end{equation}
After a further gauge transformation \cite{chen-2016}
 \begin{equation}
   \left(
     \begin{array}{c}
       \phi'_{1,n} \\
       \phi'_{2,n} \\
     \end{array}
   \right) = (-\eta)^n\left(
               \begin{array}{cc}
                 1 & 0 \\
                 0 & \frac{r_{n-1}}{\alpha} \\
               \end{array}
             \right)\Omega_n,~\eta=\frac{1}{\zeta},~z_n=1+\alpha R'_n,~r_n=\alpha^2Q'_n,
 \end{equation}
the spectral problem \eqref{rtoda-absp2} goes to the R-Toda spectral problem that reads \cite{suris-2003}
\begin{equation}\label{rtoda-sprtoda}
  \Omega_{n+1} = \left(
                   \begin{array}{cc}
                     \zeta(1+\alpha R'_n)-\zeta^{-1} & \zeta Q'_{n-1} \\
                     -\alpha & 0 \\
                   \end{array}
                 \right)\Omega_n,
\end{equation}
where $\alpha$ is an arbitrary constant.

Note that the one-component form of \eqref{rtoda-absp2}, which reads
\begin{equation}\label{rtoda-spuniversity}
  \phi'_{1,n+1} + z_n\phi'_{1,n}=\eta^2(\phi'_{1,n} -r_{n-1} \phi'_{1,n-1}),
\end{equation}
has been used as an unusual spectral problem (see Eq.(2.1) in \cite{Kharchev-1997-jmpa})
to study the R-Toda lattice;
\eqref{rtoda-absp2} was also restudied in \cite{Wen-RepMP-2011} without mentioning
the connection with the R-Toda lattice.

\subsubsection{The R-Toda hierarchy}\label{sec-4-2-2}

Under the constraint \eqref{sc-mkp}, the two equations \eqref{mkp-hie-spec-x} and \eqref{gt-kp-adwave-mkp-x}
in the Lax triplets \eqref{mkp-hie-spec} and \eqref{gt-kp-adwave-mkp}
can be cast into an evolution equation in terms of $(a_n,b_n)$ like \eqref{lemma-rtoda-vab-v-proof-abx}, i.e.
\begin{equation}\label{rtoda-ab-x}
    (\ln a_n)_{x} = b_n (a_{n+1} -a_n),~~(\ln b_n)_{x}=a_n(b_n-b_{n-1}),
\end{equation}
where $a_n$ and $b_n$ are given in \eqref{rtoda-spec-ab}.
To look at explicit forms of \eqref{mkp-hie-spec-2} and \eqref{gt-kp-adwave-mkp-2} in terms of $(a_n,b_n)$,
we need to use some recursive structures.

\begin{prop}\label{rtodap-prop-As}
 With the compact form \eqref{rtoda-abL} of $L$, $A_s$ allows the following two recursive relations
 \begin{subequations}\label{rtodap-prop-As-property}
   \begin{align}
      & A_{s+1} = L A_s +a_n b_n (E(L^s)_0)\Delta - a_n\Delta^{-1}(E\Delta^{-1}(A_s)^*E^{-1}\Delta b_n )\Delta,
      \label{rtodap-prop-As-property-1}\\
      & A_{s+1} = A_s L +(L^s)_0a_nb_n\Delta -(A_sa_n)\Delta^{-1}b_n\Delta,
      \label{rtodap-prop-As-property-2}
   \end{align}
 \end{subequations}
where  $(L^s)_0$ stands for the constant term of the operator $L^s$ with respect to $\Delta$.
 \end{prop}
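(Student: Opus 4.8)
The plan is to prove both recursions by expanding $L^{s+1}$ in two different ways, namely $L^{s+1}=L\,L^{s}$ for \eqref{rtodap-prop-As-property-1} and $L^{s+1}=L^{s}L$ for \eqref{rtodap-prop-As-property-2}, and then tracking which pieces survive the projection onto the $\Delta^{\ge1}$ part. Throughout I write $L^{s}=A_{s}+(L^{s})_{0}+(L^{s})_{<0}$, where $A_{s}=(L^{s})_{\ge1}$ collects the $\Delta^{1},\dots,\Delta^{s}$ terms, $(L^{s})_{0}$ is the constant term and $(L^{s})_{<0}$ the negative tail. I use the compact form \eqref{rtoda-abL}, $L=a_nb_n\Delta+a_n\Delta^{-1}b_n\Delta$, the extended Leibniz rule \eqref{dKP:Leib}, and the elementary fact $\mathrm{C}_{-1}^{i}=(-1)^{i}$, which follows at once from \eqref{def:Cij}.

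First I would do the bookkeeping for \eqref{rtodap-prop-As-property-1}. Expanding $L^{s+1}=LA_{s}+L(L^{s})_{0}+L(L^{s})_{<0}$, a degree count via \eqref{dKP:Leib} shows that $L(L^{s})_{<0}$ contains only $\Delta^{\le0}$ terms and so drops out of $(\cdot)_{\ge1}$; that $L(L^{s})_{0}$ contributes to the $\Delta^{\ge1}$ part only its single $\Delta^{1}$ term $a_nb_n(E(L^s)_0)\Delta$, all other pieces being $\Delta^{\le0}$; and that the first summand $a_nb_n\Delta A_{s}$ of $LA_{s}$ is already purely $\Delta^{\ge1}$. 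Consequently $A_{s+1}=LA_{s}+a_nb_n(E(L^s)_0)\Delta-(LA_{s})_{\le0}$ with $(LA_{s})_{\le0}=\big(a_n\Delta^{-1}b_n\Delta A_{s}\big)_{\le0}$, so the whole problem reduces to identifying this negative part.

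The heart of the argument, and the step I expect to be the main obstacle, is the discrete adjoint identity that evaluates $\big(\Delta^{-1}b_n\Delta A_{s}\big)_{\le0}$. I would first establish it on a single monomial: for a function $w$ and $k\ge1$, expanding $\Delta^{-1}w\Delta^{k}$ by \eqref{dKP:Leib} and collecting the orders $\le0$ gives, after using $\mathrm{C}_{-1}^{i}=(-1)^{i}$, the closed form $\big(\Delta^{-1}w\Delta^{k}\big)_{\le0}=\Delta^{-1}\big((-1)^{k-1}E^{-(k-1)}\Delta^{k-1}w\big)\Delta$. Writing $W=b_n\Delta A_{s}=\sum_{k\ge1}w_k\Delta^{k}$ and summing over $k$ yields $\big(\Delta^{-1}W\big)_{\le0}=\Delta^{-1}\big(-E\Delta^{-1}(W^{*}1)\big)\Delta$, where $W^{*}1$ is the formal adjoint of $W$ applied to the constant $1$. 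Using $(b_n\Delta A_{s})^{*}=A_{s}^{*}\Delta^{*}b_n$ with $\Delta^{*}=-E^{-1}\Delta$ one finds $W^{*}1=-A_{s}^{*}(E^{-1}\Delta b_n)$, so that the negative part equals $a_n\Delta^{-1}\big(E\Delta^{-1}(A_{s})^{*}E^{-1}\Delta b_n\big)\Delta$ after restoring the left factor $a_n$; substituting this back produces exactly \eqref{rtodap-prop-As-property-1}. The delicate points are the bookkeeping of the infinite $\Delta^{-1}$ tail and the verification that the monomial formula is consistent at every negative order $\Delta^{-m}$, which amounts to checking $\mathrm{C}_{-1}^{k-1+m}(\Delta^{k-1+m}w)_{n-k-m}=\mathrm{C}_{-1}^{m}(\Delta^{m}g_k)_{n-1-m}$ with $g_k=(-1)^{k-1}E^{-(k-1)}\Delta^{k-1}w$.

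The second recursion \eqref{rtodap-prop-As-property-2} I would treat in the same spirit, but it is lighter since no adjoint enters. Expanding $L^{s+1}=A_{s}L+(L^{s})_{0}L+(L^{s})_{<0}L$, the same degree count discards $(L^{s})_{<0}L$, keeps from $(L^{s})_{0}L$ only the $\Delta^{1}$ term $(L^{s})_{0}a_nb_n\Delta$, and leaves $A_{s}a_nb_n\Delta$ purely $\Delta^{\ge1}$; it then remains to compute $(A_{s}L)_{\le0}=\big(A_{s}a_n\Delta^{-1}b_n\Delta\big)_{\le0}$. Here I would expand the composition $A_{s}a_n=\sum_{k=0}^{s}\gamma_k\Delta^{k}$ and observe that its $\Delta^{0}$ coefficient is precisely the function $A_{s}(a_n)$; since for each $k\ge1$ the block $\gamma_k\Delta^{k-1}b_n\Delta$ is purely $\Delta^{\ge1}$, only the $k=0$ term contributes to the negative part, giving $(A_{s}L)_{\le0}=(A_{s}a_n)\Delta^{-1}b_n\Delta$. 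Substituting yields \eqref{rtodap-prop-As-property-2}.
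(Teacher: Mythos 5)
Your proposal is correct and follows essentially the same route as the paper: the paper's proof also expands $L^{s+1}$ as $(LL^s)_{\geq 1}$ and $(L^sL)_{\geq 1}$ and invokes precisely your two key identities, $(A_s a_n \Delta^{-1} b_n\Delta)_{\leq 0} = (A_s a_n)\Delta^{-1}b_n\Delta$ and $(\Delta^{-1} b_n\Delta A_s)_{\leq 0} = \Delta^{-1}(E\Delta^{-1} A_s^* E^{-1}\Delta b_n)\Delta$, deferring their verification to the analogous D$\Delta$KP computation in the cited reference. The only difference is that you supply the Leibniz-rule/adjoint bookkeeping (via $\mathrm{C}_{-1}^i=(-1)^i$ and $\Delta^*=-E^{-1}\Delta$) explicitly, which the paper omits.
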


\begin{proof}
By means of the identities
   \begin{equation}
       (A_s a_n \Delta^{-1} b_n\Delta )_{\leq 0} = (A_s a_n) \Delta^{-1}b_n\Delta,~~
      ~  \Delta^{-1} b_n\Delta A_s = \Delta^{-1} ( E\Delta^{-1} A_s^* E^{-1}\Delta b_n ) \Delta,
   \end{equation}
\eqref{rtodap-prop-As-property-1} and \eqref{rtodap-prop-As-property-2} can be derived from
assuming $A_{s+1}=(L^sL)_{\geq 1}$  and $A_{s+1}=(LL^s)_{\geq 1}$, respectively.
The detailed procedures are similar to the D$\Delta$KP case in \cite{chen-2017-jnmp}.
 \end{proof}

Now let us come to  \eqref{mkp-hie-spec-2} and \eqref{gt-kp-adwave-mkp-2}.

\begin{theorem}\label{rtodap-theo-hie}
\eqref{mkp-hie-spec-2} and \eqref{gt-kp-adwave-mkp-2} give rise to the recursive hierarchy
 \begin{equation}\label{rtodap-theo-hie-s}
    \left(
      \begin{array}{c}
       \ln a_{n} \\
       \ln b_{n} \\
      \end{array}
    \right)_{t_{s+1}} = L_{R}     \left(
      \begin{array}{c}
        \ln a_{n}\\
        \ln b_{n} \\
      \end{array}
    \right)_{t_{s}},~~s=1,2,\cdots,
  \end{equation}
where the initial member reads
  \begin{equation}\label{rtodap-theo-hie-1}
    (\ln a_n)_{t_1} = a_{n+1}b_n -a_nb_n,~~(\ln b_n)_{t_1}=a_nb_n-a_nb_{n-1},
  \end{equation}
and the recursion operator $L_{R}$ is
\begin{equation}
  L_{R} = \left(
            \begin{array}{cc}
              L_{R}^{11} & L_{R}^{12} \\
              L_{R}^{21} & L_{R}^{22} \\
            \end{array}
          \right)
\end{equation}
with elements
    \begin{align*}
       &L_{R}^{11} =b_n \Delta a_n +\Delta^{-1}b_n\Delta a_n + (\Delta a_n)b_nE\Delta^{-1},   \\
       &L_{R}^{12} =  \Delta^{-1}b_n (\Delta a_n) +  (\Delta a_n) b_nE\Delta^{-1},  \\
       &L_{R}^{21} =  E\Delta^{-1}a_n (E^{-1}\Delta b_n)+a_n (E^{-1}\Delta b_n )\Delta^{-1},     \\
       &L_{R}^{22} = -a_n  E^{-1}\Delta b_n + E\Delta^{-1}a_nE^{-1}\Delta b_n + a_n (E^{-1}\Delta b_n)\Delta^{-1}.
    \end{align*}
\end{theorem}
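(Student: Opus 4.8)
The plan is to translate the operator evolution equations \eqref{mkp-hie-spec-2} and \eqref{gt-kp-adwave-mkp-2} into scalar evolution equations for $a_n$ and $b_n$, and then convert the two operator recursions of Proposition \ref{rtodap-prop-As} into the matrix recursion \eqref{rtodap-theo-hie-s}. First I would record the flows explicitly. Since $a_n=\Phi$, equation \eqref{mkp-hie-spec-2} gives $a_{n,t_s}=A_s a_n$, so that $(\ln a_n)_{t_s}=(A_s a_n)/a_n$. For the adjoint side, $b_n=E\Delta^{-1}\Phi^*$ yields $\Phi^*=E^{-1}\Delta b_n$ (using $\Delta E^{-1}=E^{-1}\Delta$), whence \eqref{gt-kp-adwave-mkp-2} gives $b_{n,t_s}=-E\Delta^{-1}A_s^{*}E^{-1}\Delta b_n$ and $(\ln b_n)_{t_s}=-(E\Delta^{-1}A_s^{*}E^{-1}\Delta b_n)/b_n$. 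Writing $p_s=(\ln a_n)_{t_s}$ and $q_s=(\ln b_n)_{t_s}$, the goal becomes $(p_{s+1},q_{s+1})^{T}=L_R(p_s,q_s)^{T}$.

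The crucial preparatory step is to express the constant term $(L^s)_0$ appearing in \eqref{rtodap-prop-As-property-1}--\eqref{rtodap-prop-As-property-2} through the flows themselves, so as not to reintroduce the spectral parameter. Reading the residue identity \eqref{mkp-hie-hierarchy} as $(\ln v)_{t_s}=\Delta\,\mathrm{Res}_{\Delta}(L^s\Delta^{-1})=\Delta (L^s)_0$ and using $v=a_nb_n$ from \eqref{v-ab}, I obtain $(L^s)_0=\Delta^{-1}\big((\ln a_n)_{t_s}+(\ln b_n)_{t_s}\big)=\Delta^{-1}(p_s+q_s)$. This is the identity that closes the recursion at the scalar level, and it is precisely why no $\lambda$-dependence survives: $L$ always acts here as a difference operator on known functions rather than through $L\Phi=\lambda\Phi$.

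With this in hand, I would derive the first row by applying \eqref{rtodap-prop-As-property-1} to $a_n$. Composition gives $A_{s+1}a_n=L(a_{n,t_s})+a_nb_n\big(E(L^s)_0\big)(\Delta a_n)-a_n\Delta^{-1}\big[(E\Delta^{-1}A_s^{*}E^{-1}\Delta b_n)(\Delta a_n)\big]$; substituting $a_{n,t_s}=a_np_s$, the $b$-flow $E\Delta^{-1}A_s^{*}E^{-1}\Delta b_n=-b_nq_s$, and $E(L^s)_0=E\Delta^{-1}(p_s+q_s)$, then dividing by $a_n$ and expanding $L=a_nb_n\Delta+a_n\Delta^{-1}b_n\Delta$, collects precisely into $L_R^{11}p_s+L_R^{12}q_s$. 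The second row follows symmetrically from the adjoint of \eqref{rtodap-prop-As-property-2}: I would form $A_{s+1}^{*}$ term by term (with $\Delta^{*}=-E^{-1}\Delta$, $(\Delta^{-1})^{*}=-E\Delta^{-1}$, $E^{*}=E^{-1}$), apply $-E\Delta^{-1}(\cdot)E^{-1}\Delta b_n$, again insert $(L^s)_0=\Delta^{-1}(p_s+q_s)$ and rewrite the $A_s^{*}$-terms through $p_s,q_s$, and match $L_R^{21}p_s+L_R^{22}q_s$. The initial member is then a direct check: $A_1=(L)_{\geq 1}=a_nb_n\Delta$ by \eqref{mkp-hie-A1A2-A1} and \eqref{v-ab}, so $a_{n,t_1}=a_nb_n(\Delta a_n)$ and the analogous $b$-flow reduce to \eqref{rtodap-theo-hie-1}, consistent with \eqref{lemma-rtoda-vab-v-proof-abx}.

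I expect the main obstacle to be this adjoint (second) row. Forming $A_{s+1}^{*}$ from \eqref{rtodap-prop-As-property-2} reverses every factor and turns each $\Delta$, $\Delta^{-1}$, $E$ into its adjoint, and one must carefully distinguish which expressions are scalar functions (hence equal to a flow and free to be pulled out) and which remain genuine operators before sandwiching with $-E\Delta^{-1}(\cdot)E^{-1}\Delta b_n$. Reconciling the resulting nonlocal $\Delta^{-1}$ strings with the compact closed forms of $L_R^{21}$ and $L_R^{22}$ is where the bookkeeping is heaviest; the first row, by contrast, falls out almost immediately once the constant-term identity is available.
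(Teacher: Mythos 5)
Your proposal is correct and follows essentially the same route as the paper's proof: it rests on the same constant-term identity $(L^s)_0=\Delta^{-1}\big((\ln a_n)_{t_s}+(\ln b_n)_{t_s}\big)$ obtained from \eqref{mkp-hie-hierarchy} and \eqref{v-ab}, then applies \eqref{rtodap-prop-As-property-1} to $a_n$ for the first row and \eqref{rtodap-prop-As-property-2} (through the adjoint, sandwiched by $-E\Delta^{-1}(\cdot)E^{-1}\Delta b_n$) for the second row. The only difference is one of exposition: the paper compresses the second row into a ``similarly,'' whereas you spell out the adjoint bookkeeping that this word conceals.
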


    \begin{proof}
      Based on \eqref{mkp-hie-hierarchy} and \eqref{v-ab}, one has
      \begin{equation}\label{L0}
      (L^s)_0 = \Delta^{-1}(\ln v)_{t_s} = \Delta^{-1}(\ln a_nb_n)_{t_s} =\Delta^{-1}\frac{1}{a_n} a_{n,t_s}+ \Delta^{-1}\frac{1}{b_n} b_{n,t_s}.
      \end{equation}
      Then, using  the recursive form  \eqref{rtodap-prop-As-property-1} and expression \eqref{rtoda-abL}, one has
\[ a_{n,t_{s+1}} = A_{s+1} a_n= (a_nb_n \Delta +a_n\Delta^{-1}b_n\Delta)a_{n,t_s} +  a_n b_n (E(L^s)_0)\Delta a_n+ a_n\Delta^{-1}b_{n,t_s}\Delta a_n,
\]
which, coupled with \eqref{L0}, gives rise to
\begin{equation*}
(\ln a_n)_{t_{s+1}} = L_{R}^{11} (\ln a_{n})_{t_s} + L_{R}^{12} (\ln b_{n})_{t_s}.
\end{equation*}
Similarly, using \eqref{rtodap-prop-As-property-2} and \eqref{L0} one can find
\begin{equation*}
(\ln b_n)_{t_{s+1}} = L_{R}^{21} (\ln a_{n})_{t_s} + L_{R}^{22} (\ln b_{n})_{t_s}.
\end{equation*}
\end{proof}

Further, after some calculations we can find that
\begin{theorem}\label{rtodap-coro-rtoda}
Under transformation \eqref{rtoda-abgt1abqr},  the hierarchy \eqref{rtodap-theo-hie-s} gives rise to the R-Toda(+) hierarchy
  \begin{equation}\label{rtodap-coro-rtoda-hie}
    \left(
      \begin{array}{c}
       \ln z_{n} \\
       \ln r_{n} \\
      \end{array}
    \right)_{t_{s+1}}
    = L_{R^+}  \left(
      \begin{array}{c}
       \ln z_{n} \\
       \ln r_{n} \\
      \end{array}
    \right)_{t_{s}}, ~~s=1,2,\cdots,
  \end{equation}
  where the first member is the R-Toda lattice (denoted by R-Toda$(+1)$)
  \begin{subequations}\label{rtodap-eq}
  \begin{align}
     & z_{n,t_1} = z_n (r_n- r_{n-1}), \\
     & r_{n,t_1} = r_n (r_{n+1}+z_{n+1}-r_{n-1}-z_{n}),
  \end{align}
\end{subequations}
  and the recursion operator $L_{R^+}$ reads
  \begin{equation}\label{rtodap-coro-rtoda-oper}
    L_{R^+} = \left(
               \begin{array}{cc}
                 z_n & (r_nE- E^{-1}r_n)\Delta^{-1} \\
                 (E+1)z_n & (Er_n E-E^{-1}r_n)\Delta^{-1}+r_n+ \Delta z_n \Delta^{-1} \\
               \end{array}
             \right).
  \end{equation}
\end{theorem}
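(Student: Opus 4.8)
The plan is to read Theorem~\ref{rtodap-coro-rtoda} as nothing more than a change of dependent variables applied to the already-established hierarchy \eqref{rtodap-theo-hie-s}. Since the transformation \eqref{rtoda-abgt1abqr} gives $v_n=a_nb_n$ and $r_n=-a_{n+1}b_n$, at the level of the flows the logarithmic variables are related \emph{linearly}: $(\ln v_n)_{t}=(\ln a_n)_t+(\ln b_n)_t$ and $(\ln r_n)_t=E(\ln a_n)_t+(\ln b_n)_t$, the sign in $r_n$ being irrelevant to a logarithmic derivative. Writing
\begin{equation*}
P=\begin{pmatrix}1 & 1\\ E & 1\end{pmatrix},
\end{equation*}
one therefore has $(\ln v_n,\ln r_n)^T_{t}=P(\ln a_n,\ln b_n)^T_{t}$. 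First I would record that $P$ is formally invertible over the ring of (pseudo-)difference operators, with $\det P=1-E=-\Delta$ and
\begin{equation*}
P^{-1}=-\Delta^{-1}\begin{pmatrix}1 & -1\\ -E & 1\end{pmatrix}.
\end{equation*}

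Feeding $(\ln a_n,\ln b_n)^T_{t_{s+1}}=L_R(\ln a_n,\ln b_n)^T_{t_s}$ through $P$ and substituting $(\ln a_n,\ln b_n)^T_{t_s}=P^{-1}(\ln v_n,\ln r_n)^T_{t_s}$ reduces the entire claim to the single operator identity
\begin{equation*}
L_{R^+}=P\,L_R\,P^{-1}.
\end{equation*}
Before conjugating I would check that the right-hand side is genuinely a function of $(v_n,r_n)$ alone: every entry of $L_R$ in Theorem~\ref{rtodap-theo-hie} carries exactly one factor built from $a$ and one built from $b$, so it is invariant under the scaling $a_n\mapsto\mu a_n$, $b_n\mapsto\mu^{-1}b_n$ that fixes both $v_n$ and $r_n$; consequently each coefficient may be rewritten through $a_nb_n=v_n$ and $a_{n+1}b_n=-r_n$ together with their shifts. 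This is the step that makes the conjugation well defined in the new variables.

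The substantive work is then the operator computation of $P\,L_R\,P^{-1}$ and its simplification to \eqref{rtodap-coro-rtoda-oper}. I would multiply out the $2\times2$ product, commute the multiplication operators $v_n,r_n$ past the shifts using $Ef_n=f_{n+1}E$, and collect terms so that the stray $\Delta^{-1}$ coming from $P^{-1}$ recombines with the internal $\Delta^{-1}$ of $L_R$ into the single tails $(r_nE-E^{-1}r_n)\Delta^{-1}$ and $(Er_nE-E^{-1}r_n)\Delta^{-1}$ appearing in $L_{R^+}$. As a cheaper independent check on the overall normalisation I would apply $P$ directly to the base member \eqref{rtodap-theo-hie-1}, obtaining $(\ln v_n)_{t_1}=a_{n+1}b_n-a_nb_{n-1}$ and $(\ln r_n)_{t_1}=E(a_{n+1}b_n-a_nb_n)+(a_nb_n-a_nb_{n-1})$, and rewrite these via $v_n=a_nb_n$, $r_n=-a_{n+1}b_n$ to recover the R-Toda lattice \eqref{rtodap-eq} (up to the time-orientation convention fixed there).

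The main obstacle is the bookkeeping in the conjugation: $L_R$ already contains nonlocal blocks of the form $\Delta^{-1}(\cdot)$ and $(\cdot)E\Delta^{-1}$, and sandwiching it between $P$ and $P^{-1}=-\Delta^{-1}(\cdots)$ produces products of three pseudo-difference operators whose nonlocal parts must be shown to telescope \emph{exactly} into the two entries of \eqref{rtodap-coro-rtoda-oper}, where any stray sign or shift is invisible until the final collection. I expect no conceptual difficulty beyond this, because invertibility of $P$ together with the scaling invariance of $L_R$ already guarantees that $P\,L_R\,P^{-1}$ is a legitimate recursion operator in $(v_n,r_n)$; the content of the theorem is precisely that it is \emph{this} operator.
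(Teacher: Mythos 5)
Your overall framework is the right one, and the setup is sound: the matrix $P$, its inverse $P^{-1}=-\Delta^{-1}\begin{pmatrix}1&-1\\-E&1\end{pmatrix}$, and the gauge/scaling argument showing the conjugated coefficients depend only on $(v_n,r_n)$ are all correct, and since the paper offers nothing beyond ``after some calculations'', conjugation by $P$ is indeed the computation that has to be done. The genuine problem is that the identity you reduce everything to, $L_{R^+}=P\,L_R\,P^{-1}$, is \emph{false}: the conjugation produces $-L_{R^+}$, not $L_{R^+}$. Equivalently $P\,L_R=-L_{R^+}P$, which is quickest to see on the $(1,1)$ entry. Rewriting the coefficients of $L_R$ via \eqref{rtoda-abgt1abqr} (as operator identities, $b_n\Delta a_n=-(r_nE+v_n)$, $(\Delta a_n)b_n=-(r_n+v_n)$, $a_n(E^{-1}\Delta b_n)=v_n+r_{n-1}$), one gets
\begin{align*}
(PL_R)_{11} &= L_R^{11}+L_R^{21}
= -(r_nE+v_n)-\Delta^{-1}(r_nE+v_n)-(r_n+v_n)E\Delta^{-1}
+E\Delta^{-1}(v_n+r_{n-1})+(v_n+r_{n-1})\Delta^{-1}\\
&= -r_nE-(r_n+v_n)E\Delta^{-1}+(v_n+r_{n-1})\Delta^{-1}
= -v_n-r_nE^2\Delta^{-1}+r_{n-1}\Delta^{-1},
\end{align*}
where the second equality uses $-\Delta^{-1}(r_nE+v_n)+E\Delta^{-1}(v_n+r_{n-1})=\Delta^{-1}(v_{n+1}E-v_n)=\Delta^{-1}\Delta\, v_n=v_n$, and the third uses $-r_nE(1+\Delta^{-1})=-r_nE^2\Delta^{-1}$ and $-v_n(E-1)\Delta^{-1}=-v_n$. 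Meanwhile the $(1,1)$ entry of $L_{R^+}P$ is $v_n+(r_nE-E^{-1}r_n)E\Delta^{-1}=v_n+r_nE^2\Delta^{-1}-r_{n-1}\Delta^{-1}$, the exact negative. The same minus sign appears in the other entries, and it is confirmed on flows: the direct transform of the $t_2$ flow (computed from $A_2=v(Ev)\Delta^2+v(Ev_0+v_0+\Delta v)\Delta$) equals $L_{R^+}$ applied to the right-hand side of \eqref{rtodap-eq}, i.e.\ to \emph{minus} the transform of \eqref{rtodap-theo-hie-1}. So the ``substantive work'' you plan would not simplify to \eqref{rtodap-coro-rtoda-oper}; it simplifies to its negative, and the proof fails at its central step.

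Note also that your hedge ``up to the time-orientation convention'' cannot absorb this. Under your claimed identity, the transformed hierarchy would be \eqref{rtodap-coro-rtoda-hie} after a single global reversal $t_s\to-t_s$ (operator unchanged, every flow negated). What actually happens is that \emph{both} the operator and the base member come out negated, so the transformed flows equal $(-1)^s$ times those generated by \eqref{rtodap-eq} and $L_{R^+}$: even flows agree, odd flows are reversed. A correct proof must therefore establish $P\,L_R\,P^{-1}=-L_{R^+}$ and then pass to \eqref{rtodap-coro-rtoda-hie} by the relabelling $t_s\to(-1)^st_s$ --- or, put differently, acknowledge that \eqref{rtodap-eq} and \eqref{rtodap-coro-rtoda-oper} as printed carry mutually incompatible signs relative to Theorem \ref{rtodap-theo-hie} (a defect of the statement itself, since the flows $\pm h$ generate the same symmetry algebra). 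This sign bookkeeping is precisely the content that has to be made explicit; as written, your reduction asserts an operator identity that direct computation refutes.
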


\subsection{Further discussion on the R-Toda hierarchy}\label{sec-4-3}

\subsubsection{The R-Toda$(-)$ hierarchy}\label{sec-4-3-1}

The R-Toda(+) hierarchy \eqref{rtodap-coro-rtoda-hie} can be also derived
from the spectral problem \eqref{rtoda-absp2} by considering
the compatible condition with the time part
\begin{equation}\label{rtoda-absp2-time}
\left(
  \begin{array}{c}
    \phi'_{1,n} \\
    \phi'_{2,n} \\
  \end{array}
\right)_{t_s} = \left(
            \begin{array}{cc}
              A'_s & B_s'  \\
              C_s' & D_s' \\
            \end{array}
          \right)
\left(
  \begin{array}{c}
    \phi'_{1,n} \\
    \phi'_{2,n} \\
  \end{array}
\right),~~s=1,2,\cdots,
\end{equation}
where $A_s', B'_s, C'_s$ and $D_s'$ are polynomials of $\eta$.
On the other side,
if expanding $A_s', B'_s, C'_s$ and $D_s'$  into polynomials of $1/\eta$,
one can derive the so-called  R-Toda$(-)$ hierarchy
\begin{equation}\label{rtoda-hieneg}
    \left(
      \begin{array}{c}
        \ln z'_{n} \\
        \ln r'_{n} \\
      \end{array}
    \right)_{t_{-(s+1)}} = L_{R^-} \left(
      \begin{array}{c}
        \ln z'_{n} \\
        \ln r'_{n} \\
      \end{array}
    \right)_{t_{-s}},~~s=1,2,\cdots,
  \end{equation}
where the recursion operator $L_{R^-}$ is
  \begin{equation}\label{rtoda-hieneg-ope}
L_{R^-}= L_{R^+}^{-1} = \left(
               \begin{array}{cc}
                 \frac{1}{z'_n}+\frac{1}{z'_n} (r'_nE-E^{-1}r'_n )\frac{1}{z'_n}(E+1)\Delta^{-1}
                  & -\frac{1}{z'_n}(r'_nE-E^{-1}r'_n )\frac{1}{z'_n}\Delta^{-1} \\
                 -\Delta \frac{1}{z'_n}(E+1)\Delta^{-1} & \Delta \frac{1}{z'_n}\Delta^{-1} \\
               \end{array}
             \right).
  \end{equation}
Here for the purpose of identification, we have used $(z'_n, r'_n)$ in stead of $(v_n, r_n)$ in the R-Toda$(-)$ hierarchy.
The  first equation in the R-Toda(--) hierarchy is, denoted by the R-Toda(--1),
\begin{equation}\label{rtoda-hieneg1}
\left(
  \begin{array}{c}
    z'_{n,t_{-1}} \\
    r'_{n,t_{-1}} \\
  \end{array}
\right) = \left(
            \begin{array}{c}
              r'_n/z'_{n+1}-r'_{n-1}/z'_{n-1} \\
              r'_{n}/z'_n-r'_n/z'_{n+1} \\
            \end{array}
          \right).
\end{equation}

Ref.\cite{Kharchev-1997-jmpa} used to introduce a transformation
\begin{equation}\label{rtoda-trans+-1}
  z_n\rightarrow \frac{1}{z'_n},~~r_n\rightarrow \frac{r'_n}{z'_n z'_{n+1}},~~ t_1\rightarrow -t_{-1},
\end{equation}
by which the R-Toda(+1) equation \eqref{rtodap-eq} and the R-Toda(--1) equation \eqref{rtoda-hieneg1}
can be transformed to each other.
Next, we will show that the same transformation can be extended to the whole hierarchy of the R-Toda lattice.

\begin{theorem}\label{rtoda-theo-rtoda+-}
  The R-Toda(+) hierarchy \eqref{rtodap-coro-rtoda-hie} and the R-Toda(--) hierarchy \eqref{rtoda-hieneg}
  are equivalent to each other up to the transformation
  \begin{equation}\label{rtoda-theo-rtoda+-trans}
    z_n= \frac{1}{z'_n},~~r_n = \frac{r'_n}{z'_n z'_{n+1}},~~t_s = -t_{-s},~~s=1,2,\cdots.
  \end{equation}
\end{theorem}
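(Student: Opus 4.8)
The plan is to reduce the assertion to a single conjugation identity between the two recursion operators, exploiting the relation $L_{R^-}=L_{R^+}^{-1}$ already recorded in \eqref{rtoda-hieneg-ope} together with the first-flow correspondence \eqref{rtoda-trans+-1}. The first step is to determine how \eqref{rtoda-theo-rtoda+-trans} acts on the logarithmic fields in which both hierarchies \eqref{rtodap-coro-rtoda-hie} and \eqref{rtoda-hieneg} are written. From $v_n=1/v'_n$ and $r_n=r'_n/(v'_nv'_{n+1})$ one gets $\ln v_n=-\ln v'_n$ and $\ln r_n=\ln r'_n-(1+E)\ln v'_n$, so the induced map on $(\ln v_n,\ln r_n)^T$ is \emph{linear} and is given by the constant-coefficient matrix difference operator
\begin{equation*}
J=\left(\begin{array}{cc} -1 & 0 \\ -(1+E) & 1 \end{array}\right),
\end{equation*}
which is an involution, $J^{2}=I$, hence $J^{-1}=J$. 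The time reversal $t_s=-t_{-s}$ supplies $\partial_{t_s}=-\partial_{t_{-s}}$.

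Next I would feed these two observations into the R-Toda$(+)$ recursion \eqref{rtodap-coro-rtoda-hie}. Setting $w=(\ln v_n,\ln r_n)^T$ and $w'=(\ln v'_n,\ln r'_n)^T$, so that $w=Jw'$ and $J$ commutes with the $t$-derivatives, the chain rule converts \eqref{rtodap-coro-rtoda-hie} into
\begin{equation*}
w'_{t_{-(s+1)}}=\bigl(J\,\widetilde{L}_{R^+}\,J\bigr)\,w'_{t_{-s}},
\end{equation*}
where $\widetilde{L}_{R^+}$ denotes the operator \eqref{rtodap-coro-rtoda-oper} after the \emph{nonlinear} substitution $v_n=1/v'_n,\ r_n=r'_n/(v'_nv'_{n+1})$; the two sign factors coming from $\partial_{t_{s+1}}=-\partial_{t_{-(s+1)}}$ and $\partial_{t_s}=-\partial_{t_{-s}}$ cancel. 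Thus the theorem is equivalent to the operator identity
\begin{equation}\label{key-conj}
J\,\widetilde{L}_{R^+}\,J=L_{R^-},
\end{equation}
with $L_{R^-}$ as in \eqref{rtoda-hieneg-ope}. I would prove \eqref{key-conj} entry by entry: substitute the transformation into the four blocks of \eqref{rtodap-coro-rtoda-oper}, conjugate by $J$, and simplify the resulting compositions of $E^{\pm1}$, $\Delta^{-1}$ and multiplication operators until they coincide with \eqref{rtoda-hieneg-ope}. Once \eqref{key-conj} holds, the transformation carries the positive recursion relation exactly onto the negative one; combined with the first-flow matching already provided by \eqref{rtoda-trans+-1} (i.e.\ \eqref{rtodap-eq} is sent to \eqref{rtoda-hieneg1}), a straightforward induction on $s$ then identifies the $s$-th member of R-Toda$(+)$ with the $s$-th member of R-Toda$(-)$.

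The main obstacle is the verification of \eqref{key-conj}. The difficulty is twofold: the substitution $v_n=1/v'_n$, $r_n=r'_n/(v'_nv'_{n+1})$ is nonlinear and shift-dependent, so inside each block one must carefully move the multiplication operators past $E^{\pm1}$ and $\Delta^{-1}$; and the conjugation by $J$ mixes the blocks through the factor $(1+E)$, producing terms that collapse only after commutation identities such as $(E+1)\Delta^{-1}=\Delta^{-1}(E+1)$ and resummation of the $\Delta^{-1}$-tails are applied. A cleaner, computation-light alternative that I would keep in reserve is to realize \eqref{rtoda-theo-rtoda+-trans} directly on the spectral problem \eqref{rtoda-absp2}: one seeks a gauge transformation $\phi'_n\mapsto g_n\phi'_n$ together with the spectral inversion $\eta\mapsto\eta^{-1}$ that carries the coefficient matrix of \eqref{rtoda-absp2} into that of the primed system. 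Since the $\eta$- and $\eta^{-1}$-expansions of the time part \eqref{rtoda-absp2-time} are precisely what generate R-Toda$(+)$ and R-Toda$(-)$, such a symmetry of the Lax pair would establish the equivalence of the full hierarchies in one stroke, with the time reversal accounting for the sign in $t_s=-t_{-s}$.
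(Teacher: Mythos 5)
Your proposal follows essentially the same route as the paper's proof: your involution $J$ is exactly the paper's matrix $T$ in \eqref{rtoda-theo-rtoda+-trans-1}, and the paper likewise reduces the theorem to the single conjugation identity $L_{R^-}=TL_{R^+}T$ under the substitution \eqref{rtoda-theo-rtoda+-trans}, asserting it without carrying out the entry-by-entry verification you outline. Your explicit treatment of the sign cancellation under $t_s=-t_{-s}$ and the induction from the first flow is just the unpacked content of the paper's one-line conclusion, so the two arguments coincide in substance.
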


\begin{proof}
 The relation \eqref{rtoda-theo-rtoda+-trans} indicates
    \begin{equation}\label{rtoda-theo-rtoda+-trans-1}
      \left(
        \begin{array}{c}
          \ln z_n \\
          \ln r_n \\
        \end{array}
      \right) = T  \left(
        \begin{array}{c}
          \ln z'_n \\
          \ln r'_n \\
        \end{array}
      \right),
      ~~ T=\left(
                  \begin{array}{cc}
                    -1 & 0 \\
                    -E-1 & 1 \\
                  \end{array}
                \right).
    \end{equation}
    Noting that $T^{-1}=T$, and under \eqref{rtoda-theo-rtoda+-trans} there is
    \[L_{R^-}=  T L_{R^+} T,\]
    we can unify the R-Toda$(\pm)$ hierarchies and then complete the proof.
\end{proof}

\subsubsection{The differential-difference Burgers hierarchy}\label{sec-4-3-2}

In the following we will see that one-field reduction of the R-Toda$(+)$ hierarchy can give rise to
the differential-difference Burgers hierarchy.

Imposing reduction
\begin{equation}\label{red-Burgers-1}
r_{n}=-z_{n}
\end{equation}
on the  R-Toda$(+1)$ equation \eqref{rtodap-eq}, we have
\begin{equation}\label{d-Bur-1}
  (\ln z_{n})_{t_1} =W_1= -\Delta z_{n-1} ,
\end{equation}
and on the R-Toda$(+)$ hierarchy \eqref{rtodap-coro-rtoda-hie} we get
\begin{subequations}\label{Bur-hie-1}
\begin{equation}\label{Bur-hie-1-a}
(\ln z_n)_{t_{s+1}}=T_1 (\ln z_n)_{t_{s}}=-T^s_1 \Delta z_{n-1}=W_{s+1},
\end{equation}
where the recursion operator is
\begin{equation}\label{T1}
T_1=-\Delta E^{-1} z_n \Delta^{-1}.
\end{equation}
\end{subequations}
This hierarchy can be explicitly written as
\begin{equation}\label{Bur-hie-1-b}
(\ln z_n)_{t_{s}}=W_s=(-1)^s \Delta \prod^{s}_{j=1} z_{n-j},~~ s=1,2,\cdots.
\end{equation}
Note that by a discrete Cole-Hopf transformation
\begin{equation}\label{Cole-H-1}
z_n=\frac{\alpha_{n}}{\alpha_{n+1}},
\end{equation}
the hierarchy \eqref{Bur-hie-1-b} can be linearized as
\begin{equation}
\alpha_{n,t_s}=(-1)^{s+1}\alpha_{n-s}+c_s(t)\alpha_n,
\end{equation}
where $c_s(t)$ is an arbitrary function of $t$ but independent of $n$.

Equation \eqref{d-Bur-1} is known as the differential-difference Burgers equation.
In fact, taking
\begin{equation}\label{cont-1}
z_n=1+\varepsilon \gamma,~~\partial_{t'_{1}}=\frac{2}{\varepsilon^2}( \partial_{t_1}+ \varepsilon \partial_x)
\end{equation}
and letting $n\to \infty$, $\varepsilon\to 0$ while $n\varepsilon =x$,
\eqref{d-Bur-1} gives rises to
\begin{equation}\label{Bur-1}
\gamma_{t'_1}+2\gamma\gamma_x-\gamma_{xx}=0
\end{equation}
in its leading term, which is the Burgers equation.

To consider the continuum limit of the whole hierarchy \eqref{Bur-hie-1-a}, we introduce
\begin{equation}\label{gamma}
z_n=e^{\varepsilon \gamma}, ~~ x=\varepsilon n.
\end{equation}
Noticing that formally
\begin{subequations}\label{continuum}
  \begin{align}
     & z_n=e^{\varepsilon \gamma}=1+\varepsilon \gamma +O(\varepsilon^2),\\
     & E=e^{\varepsilon \partial_x}=1+\varepsilon \partial_x + O(\varepsilon^2),\\
     & E^{-1}=e^{-\varepsilon \partial_x}=1-\varepsilon \partial_x + O(\varepsilon^2),\\
     & \Delta= (E-1)= \varepsilon \Bigl(\partial_x + \frac{\varepsilon}{2}\partial_x^2 + O(\varepsilon^2)\Bigr),\\
     & \Delta^{-1}= \frac{1}{\varepsilon}\Bigl(\partial^{-1}_x -\frac{\varepsilon}{2} + O(\varepsilon^2)\Bigr),
  \end{align}
\end{subequations}
from \eqref{T1} we have
\begin{equation}
T_1= -1 + \varepsilon \partial_x(\partial_x-\gamma)\partial^{-1}_x +  O(\varepsilon^2).
\end{equation}
This means, after a combination of the flows $W_j$,
\begin{equation}\label{d-Bur-1-a}
(\ln z_{n})_{t_s}=W'_s=(T_1+1)^{s-1}W_1,~~ s=1,2,\cdots,
\end{equation}
can be considered as the differential-difference Burgers hierarchy, as with the continuum limit scheme \eqref{continuum}
it gives rise to the Burgers hierarchy (cf.\cite{Zhang-PS-2011})
\begin{equation}
\gamma_{t'_{s}}=\partial_x(\partial_x-\gamma)^{s-1}\gamma,
\end{equation}
where $\partial_{t_s}$ has been replaced with $-\varepsilon^s\partial_{t'_s}$.

There is another reduction
\begin{equation}\label{red-Bur-2}
r_n=-z_{n+1},
\end{equation}
which leads the R-Toda$(+)$ hierarchy \eqref{rtodap-coro-rtoda-hie} to
\begin{subequations}\label{Bur-hie-2}
\begin{equation}\label{Bur-hie-2-a}
(\ln z_n)_{t_{s+1}}=T_2 (\ln z_n)_{t_{s}}=-T^s_2 \Delta z_{n}=Y_{s+1},
\end{equation}
where the recursion operator is
\begin{equation}\label{T2}
T_2=-\Delta z_n E \Delta^{-1}.
\end{equation}
\end{subequations}
Note that the two hierarchies \eqref{Bur-hie-1} and \eqref{Bur-hie-2}
are simply related by $z_{n+j}\to z_{n-j}$ and $t_s\to -t_s$.
The first equation in the hierarchy in \eqref{Bur-hie-2} reads
\begin{equation}\label{d-Bur-2}
  (\ln z_{n})_{t_1} = Y_1= -\Delta z_{n},
\end{equation}
which, under \eqref{cont-1}, gives the Burgers equation
\begin{equation}\label{Bur-2}
\gamma_{t'_1}+2\gamma\gamma_x+\gamma_{xx}=0.
\end{equation}
The whole hierarchy can also be explicitly written as
\begin{equation}\label{Bur-hie-2-b}
(\ln z_n)_{t_{s}}=Y_s=(-1)^s \Delta \prod^{s-1}_{j=0} z_{n+j},~~ s=1,2,\cdots,
\end{equation}
which, by a discrete Cole-Hopf transformation,
\begin{equation}\label{Cole-H-2}
z_n=\frac{\beta_{n+1}}{\beta_{n}},
\end{equation}
are linearized as
\begin{equation}
\beta_{n,t_s}=(-1)^{s}\beta_{n+s}+c_s(t)\beta_n,
\end{equation}
where $c_s(t)$ is an arbitrary function of $t$ but independent of $n$.
The combined  hierarchy
\begin{equation}\label{d-Bur-2-a}
(\ln z_n)_{t_s}=Y'_s=(T_2+1)^{s-1}Y_1,~~ s=1,2,\cdots
\end{equation}
can be considered as a second differential-difference Burgers hierarchy with continuum limit (cf.\cite{ChenDY-JMP-2002,Olver-1977-jmp,Zhang-PS-2011})
\begin{equation}\label{BH}
\gamma_{t'_{s}}=\partial_x(\partial_x+\gamma)^{s-1}\gamma.
\end{equation}
Note that in the continuum limit scheme \eqref{continuum}
the equation
\begin{equation}\label{Bur-d-2}
(\ln z_n)_{t_2}=\Delta z_nz_{n+1}
\end{equation}
goes to the Burgers equation \eqref{Bur-2} when $n\to +\infty$,
and \eqref{Bur-d-2} was derived in \cite{Levi-Bur-1983} as a discrete Burgers equation.

The derivation of the Burgers hierarchy in \cite{Levi-Bur-1983} indicates that the spectral problem for the continuous Burgers hierarchy
is (also see Eq.(5.2) in \cite{Konopelchenko-rmp-1990})
\begin{equation}\label{sp-Bur}
\psi_x-\gamma \psi =\lambda \psi.
\end{equation}
For the differential-difference Burgers hierarchy \eqref{Bur-hie-1}, following from \eqref{rtoda-spuniversity} and \eqref{red-Burgers-1},
its spectral problem reads
\begin{equation}
  \phi'_{1,n+1} + z_n\phi'_{1,n}=\eta^2(\phi'_{1,n} +z_{n-1} \phi'_{1,n-1}),
\end{equation}
which is gauge equivalent to
\begin{equation}\label{sp-Bur-d1}
\varphi_{n+1}=   \zeta z_n\varphi_{n},
\end{equation}
where
\begin{equation}\label{trans}
\varphi_n=(-\zeta)^{-n}\phi'_n,~~ \zeta=\eta^{-2}.
\end{equation}
Now we take
\begin{equation}
\varphi_n=e^{\varepsilon \psi_n},~~ \zeta=e^{\varepsilon \lambda},
\label{continuum-2}
\end{equation}
by which, together with \eqref{continuum}, the discrete spectral problem \eqref{sp-Bur-d1} gives rise to the continuous spectral problem
\eqref{sp-Bur}.

The reduction \eqref{red-Bur-2} leads to a second discrete spectral problem from \eqref{red-Bur-2}, which is
\begin{equation}\label{sp-Bur-d2}
  \phi'_{1,n+1} + z_n(\phi'_{1,n}-\eta^2 \phi'_{1,n-1})=\eta^2 \phi'_{1,n}.
\end{equation}
Again, employing the transformation \eqref{trans}, together with defining $\varphi'_n=\Delta \varphi_{n-1}$,
we arrive at
$\varphi'_{n+1}=   \zeta z_n\varphi'_{n}$, i.e. \eqref{sp-Bur-d1},
which leads to the spectral problem \eqref{sp-Bur} in continuum limit.

We will have a closer look at the differential-difference Burgers hierarchies in Appendix \ref{app-1}.


\section{The D$\Delta$mKP(E) and constraint}\label{sec-5}

\subsection{The D$\Delta$mKP(E) hierarchy}\label{sec-5-1}

Note that the negative powers of $\Delta$ can be expressed in terms of backward shifts like
\begin{equation}\label{mkp2-hieidEg}
  \Delta^{-s}= \sum_{i=s}^{\infty}\mathrm{C}^{s-1}_{i-1} E^{-i},~~s=1,2,\cdots,
\end{equation}
by which we can rewrite the pseudo-difference operator \eqref{mkp-hie-operator}  as the following
\begin{equation}\label{mkp2hie-operator}
  \bar L= wE +w_0+w_1E^{-1}+\cdots,
\end{equation}
where the new valuables $\{w,w_s\}$ are related to $\{v,v_s\}$ through
\begin{equation}\label{mkp2hie-vw}
  w=v,~~w_0=v_0-v,~~w_s=\sum_{j=1}^{s}\mathrm{ C}_{s-1}^{j-1}\, v_j,~~s=1,2,\cdots.
\end{equation}
Note also that asymptotically
\[w \to 1,~~ w_0\to -1,~~ w_s\to 0,~ (s=1,2,\cdots)
\]
as $|n|\to +\infty$.

The hierarchy resulted from \eqref{mkp2hie-operator} is named as the D$\Delta$mKP(E) hierarchy,
which is generated from the Lax triplet
\begin{subequations}\label{mkp2hie-sp}
  \begin{align}
     & \bar{L}\Theta = \lambda \Theta, \label{mkp2hie-sp-1}\\
     & \Theta_{x} = \bar A_1 \Theta, \label{mkp2hie-sp-x}\\
     & \Theta_{t_s} = \bar A_s \Theta,~~s=1,2,\cdots, \label{mkp2hie-sp-2}
  \end{align}
\end{subequations}
where $\bar A_s=(\bar L^s)_{\geq 1}$, of which the first two of $\bar A_s$ are
\begin{subequations}\label{mkp2hie-C12}
  \begin{align}
     &  \bar A_1=   w E,\\
     &  \bar A_2=   w(Ew)E^2+ ( w(Ew_0)+ww_0)E.
  \end{align}
\end{subequations}
The compatibility of \eqref{mkp2hie-sp} is
\begin{subequations}\label{mkp2-lax}
  \begin{align}
     & \bar{L}_{t_s} = [\bar A_s, \bar L], \label{mkp2-lax-1}\\
     & \bar{L}_{x} = [\bar A_1, \bar L], \label{mkp2-lax-2}\\
     & \bar A_{1,t_s}-\bar A_{s,x}+  [\bar A_1, \bar A_s]=0,~~ s=1,2,\cdots. \label{mkp2-lax-3}
  \end{align}
\end{subequations}
From \eqref{mkp2-lax-2} one can express $w_s$ in terms of $w$ as the following,
\begin{equation}\label{mkp2hie-laxs1v0}
  w_0 = \Delta^{-1} (\ln w)_{x}-1,~~ w_1= \frac{\Delta^{-2}( \ln w)_{xx} }{E^{-1}w},~~
  w_{s+1} = \pi_{s+1}^{-1} \Delta^{-1}\pi_{s} \, w_{s,x},~~s=1,2,\cdots,
\end{equation}
where $\{\pi_s\}$ is defined by
$\pi_s= \prod_{i=1}^{s} (E^{-i}  w)$, for $s=1,2,\cdots$;
and from \eqref{mkp2-lax-3} we have
\begin{equation}\label{mkp2-hie}
 w_{t_s}= \bar K_s=(\bar A_{s,x}-   [\bar A_1, \bar A_s])E^{-1},~~ s=1,2,\cdots,
\end{equation}
which provides a zero curvature expression of the scalar D$\Delta$mKP(E) hierarchy.
An alternative expression of  \eqref{mkp2-hie} is
\begin{equation}\label{mkp2hie-hie}
 w_{t_s} = \bar K_s = w \Delta\underset{E}{\mathrm{ Res}}(\bar{L}^sE^{-1}),~~s=1,2,\cdots.
\end{equation}
The first two equations are
\begin{equation}\label{mkp2hie-hies1}
  w_{t_1} =\bar K_1= w_x
\end{equation}
and
\begin{equation}\label{mkp2hie-hies2}
  (\ln w)_{t_2} = (1+2\Delta^{-1})(\ln w)_{xx} +(\ln w)_{x} (1+2\Delta^{-1})(\ln w)_{x}-2(\ln w)_{x},
\end{equation}
or in the form
\begin{equation}\label{mkp2hie-hies2potential}
  \hat{w}_{t_2} = (1+2\Delta^{-1}) \hat{w}_{xx}  +\hat{w}_{x}(1+2\Delta^{-1}) \hat{w}_{x}-2  \hat{w}_{x}
\end{equation}
with $w=e^{\hat{w}}$.

In addition, similar to \cite{fu-2013-nonlinearity}, we can prove that
\begin{theorem}\label{mkp2-theo-symmetry}
  The $\{S_s\}$ defined by \eqref{mkp2hie-hie} are the infinitely many symmetries of the scalar D$\Delta$mKP(E) hierarchy \eqref{mkp2hie-hie},
  i.e. $[\![S_i,S_j]\!]=0$.
\end{theorem}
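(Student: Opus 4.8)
The plan is to prove the commutativity of the flows first at the level of the Lax operator $\bar L$ and then transfer it to the bracket statement; throughout, the symmetries in question are the hierarchy flows themselves, $S_s=\bar K_s$, defined through $\bar L_{t_s}=[\bar A_s,\bar L]$ with $\bar A_s=(\bar L^s)_{\geq 1}$. The essential structural observation is that, with respect to the grading by powers of the shift $E$, the algebra $\mathcal G$ of pseudo-difference operators splits as a vector-space direct sum $\mathcal G=\mathcal G_{\geq 1}\oplus\mathcal G_{\leq 0}$ in which \emph{both} summands are Lie subalgebras: a product of two operators containing only $E^k$ with $k\geq 1$ contains only $E^k$ with $k\geq 2$, while a product of two operators with $k\leq 0$ again has $k\leq 0$. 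This splitting is exactly what forces the $\bar A_s=(\bar L^s)_{\geq 1}$ flows to commute, in complete parallel with the $\Delta$-graded computation of \cite{fu-2013-nonlinearity}.

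First I would record the auxiliary identity $(\bar L^s)_{t_j}=[\bar A_j,\bar L^s]$, which follows by differentiating $\bar L^s$ via the Leibniz rule and telescoping with the Lax equation \eqref{mkp2-lax-1}. Writing $P=(\cdot)_{\geq 1}$ and $\bar A_s^{\perp}=(\bar L^s)_{\leq 0}=\bar L^s-\bar A_s$, and using $[\bar L^j,\bar L^s]=0$, I would obtain $\bar A_{s,t_j}=P([\bar A_j,\bar L^s])=-P([\bar A_j^{\perp},\bar L^s])=-P([\bar A_j^{\perp},\bar A_s])$, where the last step drops $P([\bar A_j^{\perp},\bar A_s^{\perp}])$ because $[\bar A_j^{\perp},\bar A_s^{\perp}]\in\mathcal G_{\leq 0}$. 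Projecting the trivial relation $0=[\bar L^s,\bar L^j]=[\bar A_s+\bar A_s^{\perp},\bar A_j+\bar A_j^{\perp}]$ onto $\mathcal G_{\geq 1}$ and again discarding the $\mathcal G_{\leq 0}$ commutator yields $[\bar A_s,\bar A_j]=-P([\bar A_s,\bar A_j^{\perp}])-P([\bar A_s^{\perp},\bar A_j])$. Combining these two computations gives the Zakharov--Shabat (zero-curvature) identity
\[
\bar A_{s,t_j}-\bar A_{j,t_s}+[\bar A_s,\bar A_j]=0,\qquad s,j=1,2,\cdots,
\]
which is precisely the consistency condition of the linear equations $\Theta_{t_s}=\bar A_s\Theta$ and $\Theta_{t_j}=\bar A_j\Theta$ in \eqref{mkp2hie-sp-2}.

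Finally I would translate this operator identity into the claimed bracket relation. The zero-curvature identity means the evolutions $\bar L_{t_s}=[\bar A_s,\bar L]$ and $\bar L_{t_j}=[\bar A_j,\bar L]$ are compatible, i.e. $\partial_{t_j}\partial_{t_s}\bar L=\partial_{t_s}\partial_{t_j}\bar L$; reading off the evolution of the leading coefficient $w$ of $\bar L$ (equivalently, extracting \eqref{mkp2hie-hie}) this says $\partial_{t_j}\bar K_s=\partial_{t_s}\bar K_j$ as identities in the field $w$. Since along the hierarchy $w_{t_s}=\bar K_s$, the chain rule gives $\partial_{t_j}\bar K_s=\bar K_s'[\bar K_j]$ and $\partial_{t_s}\bar K_j=\bar K_j'[\bar K_s]$, so that $\llbracket \bar K_s,\bar K_j\rrbracket=\bar K_s'[\bar K_j]-\bar K_j'[\bar K_s]=0$, which is the assertion with $S_s=\bar K_s$.

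I expect the main obstacle to be not the algebraic identity itself but the reduction bookkeeping: one must verify that the single-field description is consistent, namely that every $t_s$-flow preserves the constraints \eqref{mkp2hie-laxs1v0} expressing $w_0,w_1,\dots$ through $w$, so that $\bar K_s$ is a genuinely well-defined vector field in $w$ alone. This is secured by the $j=1$ instance of the zero-curvature identity, which is the compatibility \eqref{mkp2-lax-3} between $\bar A_1=wE$ and $\bar A_s$. Keeping careful track of the $E$-grading here, as opposed to the $\Delta$-grading used for the D$\Delta$mKP hierarchy, where the projections and hence the flows $\bar K_s$ differ from $K_s$, is the part of the argument that requires the most care.
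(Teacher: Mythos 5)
Your proposal is correct and is essentially the argument the paper intends: the paper offers no written proof here, deferring to the Sato-scheme argument of \cite{fu-2013-nonlinearity}, and your write-up is exactly that scheme transplanted to the $E$-grading --- the splitting of the pseudo-difference algebra into the two Lie subalgebras $\mathcal{G}_{\geq 1}$ and $\mathcal{G}_{\leq 0}$, the resulting Zakharov--Shabat identity $\bar A_{s,t_j}-\bar A_{j,t_s}+[\bar A_s,\bar A_j]=0$, and the transfer to the scalar bracket $\llbracket \bar K_i,\bar K_j\rrbracket=0$ via the chain rule, with the single-field consistency secured by the $x$-flow ($t_1$) compatibility \eqref{mkp2-lax-3}. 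Your closing remarks correctly flag the only genuinely delicate points (preservation of the constraints \eqref{mkp2hie-laxs1v0} and the distinction between the $E$- and $\Delta$-graded projections), so nothing essential is missing.
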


\subsection{Squared eigenfunction symmetry}\label{sec-5-2}

Consider the Lax triplet \eqref{mkp2hie-sp} and its adjoint form
\begin{subequations}\label{mkp2hie-sp*}
  \begin{align}
     & \bar{L}^*\Theta^* = \lambda \Theta^*, \label{mkp2hie-sp*-1}\\
     & \Theta^*_{x} = -\bar A^*_1 \Theta^*, \label{mkp2hie-sp*-x}\\
     & \Theta^*_{t_s} = -\bar A^*_s \Theta^*,~~s=1,2,\cdots. \label{mkp2hie-sp*-2}
  \end{align}
\end{subequations}
One can verify that if
 \begin{equation}\label{mkp2sc-spec-laxasso}
   \bar{L}_z =[ -\Theta E\Delta^{-1}\Theta^*,\bar{L}],
 \end{equation}
then $[\partial_z,\partial_{t_s}]\bar L =0$.
\eqref{mkp2sc-spec-laxasso} indicates that
\begin{equation}\label{mkp2sc-spec-asso-wTheta}
  w_z= w(\Delta \Theta \Theta^*),
\end{equation}
which means $\sigma = w(\Delta \Theta \Theta^*)$ is a (squared eigenfunction) symmetry of the whole D$\Delta$mKP(E) hierarchy \eqref{mkp2-hie}.

\subsection{Symmetry constraint}\label{sec-5-3}

\subsubsection{Spectral problem}\label{sec-5-3-1}

Now let us consider a constraint
\begin{equation}\label{mkp2sc-spec-sc}
  w_x=w(\Delta \Theta \Theta^*).
\end{equation}
For convenience we take
$\Theta=c_n,~ \Theta^*=d_n$.
Then, first, compared with $w_0$ in \eqref{mkp2hie-laxs1v0} we find
\begin{equation}\label{mkp2sc-spec-sc-w0}
  w_0= c_n d_n.
\end{equation}
Next, noting that $\Theta$ and $\Theta^*$ satisfy \eqref{mkp2hie-sp-x} and \eqref{mkp2hie-sp*-x},
by calculation from \eqref{mkp2sc-spec-sc} we can find
\begin{equation}\label{mkp2sc-spec-sc-w}
  w= c_n d_{n+1},
\end{equation}
which gives an explicit form of the symmetry constraint  \eqref{mkp2sc-spec-sc}.
Further,  those $w_s$ defined in \eqref{mkp2hie-laxs1v0} can be explicitly expressed as
\begin{equation}\label{mkp2sc-spec-sc-ws}
  w_s=c_n d_{n-s},~~s=1,2,\cdots.
\end{equation}
Then, making use of the formula of $\Delta^{-1}$ in \eqref{mkp2-hieidEg},
we can rewrite $\bar L$ in terms of $(c_n, d_n)$ as the compact form
\begin{equation}\label{mkp2sc-spec-sc-L}
  \bar{L} = c_n E^{2}\Delta^{-1} d_n.
\end{equation}

Thus, the spectral problem \eqref{mkp2hie-sp-1} reads
\begin{equation}\label{mkp2sc-spec-sc-sp}
  c_n E^{2}\Delta^{-1} d_n \Theta= \lambda \Theta,
\end{equation}
which is gauge equivalent to
\begin{equation}\label{mkp2sc-spec-sc-spVol}
  \Theta'_{n+1} - e^{q_n} \Theta'_{n-1}=\xi \Theta'_{n}
\end{equation}
by the transformation
\begin{equation}\label{mkp2sc-spec-sc-gt}
  \Theta= c_n \lambda^{n/2} \left(\prod^{n}_{k=-\infty}\frac{-1}{c_k d_k}\right)\Theta'_{n} ,~~\lambda = \xi^2,
\end{equation}
and
\begin{equation}\label{qn-cndn}
e^{q_n}=-c_nd_n.
\end{equation}
\eqref{mkp2sc-spec-sc-spVol} is nothing but the spectral problem of the Volterra lattice hierarchy.


\subsubsection{Decomposition of the Volterra hierarchy}\label{sec-5-3-2}

With new valuables $c_n$ and $d_n$, \eqref{mkp2hie-sp-2} and \eqref{mkp2hie-sp*-2} read
\begin{equation}\label{mkp2schie-abts}
  c_{n,t_s} = \bar A_s c_n,~~d_{n,t_s}= -\bar A^*_s d_n,~~s=1,2,\cdots,
\end{equation}
where $\bar A_s= ((c_n E^{2}\Delta^{-1} d_n)^s)_{\geq 1}$.
Let us see what the known integrable hierarchy is related to the above system.

First, \eqref{mkp2hie-sp-x} and \eqref{mkp2hie-sp*-x} are
\begin{equation}\label{mkp2schie-abt1}
  c_{n,x} = c_nc_{n+1}d_{n+1},~~d_{n,x}= -d_nd_{n-1}c_{n-1},
\end{equation}
which gives rise to the Volterra equation
\begin{equation}\label{mkp2schie-Voleq}
  q_{n,x}=- e^{q_{n+1}} + e^{q_{n-1}}
\end{equation}
provided $q_n$ is defined through \eqref{qn-cndn}.
In order to understand the recursive structure behind \eqref{mkp2schie-abts}, we need
the following relations
  \begin{subequations}
    \begin{align}
       & \bar A_s c_n E\Delta^{-1} d_n = (\bar A_s c_n E\Delta^{-1} d_n )_{\geq 1} + (\bar A_s c_n) E\Delta^{-1} d_n , \\
       & c_n E\Delta^{-1} d_n \bar A_s =  (c_n E\Delta^{-1} d_n \bar A_s )_{\geq 1} + c_n E\Delta^{-1} (\bar A^*_s d_n),
    \end{align}
  \end{subequations}
which holds by considering the definition of $\bar A_s$ and the formula of $ \Delta^{-1}$  in \eqref{mkp2-hieidEg}.
Then we have  the recursive structure for $\bar A_s$ and $\bar A^*_s$:
\begin{subequations}\label{mkp2schie-Csrecursion}
  \begin{align}
     & \bar A_{s+1}    = c_n E d_n ( \bar A_s+ ( \bar L^s)_0 ) + c_nE\Delta^{-1} d_n\bar A_s- c_n E\Delta^{-1} (\bar A^*_s d_n ),\\
     & \bar A^*_{s+1} = d_n E^{-1}c_n ( \bar A^*_s + (\bar{L}^s)_0) - d_n \Delta^{-1}c_n \bar A^*_s + d_n \Delta^{-1} (\bar A_s c_n),
  \end{align}
\end{subequations}
in which the term $(\bar{L}^s)_0$ is written as
\begin{equation}\label{mkp2schie-wabts}
  (\bar{L}^s)_0 = \Delta^{-1} (\ln c_n)_{t_s} + \Delta^{-1}E (\ln d_n)_{t_s},~~s=1,2,\cdots
\end{equation}
by noting that \eqref{mkp2sc-spec-sc-w} and \eqref{mkp2hie-hie}.
Then, with the help of the above relations,  it is not difficult to prove that
\begin{theorem}\label{mkp2schie-theo1}
  \eqref{mkp2schie-abts} can be written as the recursive form
  \begin{equation}\label{mkp2schie-theo1-recursion}
    \left(
      \begin{array}{c}
        \ln c_n \\
        \ln d_n \\
      \end{array}
    \right)_{t_s} = T^{s-1} \left(
                                       \begin{array}{c}
                                         c_{n+1}d_{n+1} \\
                                         -d_{n-1}c_{n-1} \\
                                       \end{array}
                                     \right),~~s=1,2,\cdots,
  \end{equation}
  where the recursion operator $T$ is defined as
  \begin{equation}\label{mkp2schie-theo1-recursion-oper}
    T= \left(
                      \begin{array}{cc}
                        E & 0 \\
                        0 & -E^{-1} \\
                      \end{array}
                    \right) \Delta^{-1} (Ec_nd_nE- c_nd_n )\Delta^{-1} \left(
                                                         \begin{array}{cc}
                                                           1 & 1 \\
                                                           1 & 1 \\
                                                         \end{array}
                                                       \right).
  \end{equation}
\end{theorem}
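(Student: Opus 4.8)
The plan is to establish \eqref{mkp2schie-theo1-recursion} by induction on $s$, reducing the whole statement to a single one-step recursion. For the base case $s=1$, recall that $t_1\equiv x$ in the isospectral setting, so \eqref{mkp2schie-abt1} written in logarithmic form gives exactly $(\ln c_n)_{t_1}=c_{n+1}d_{n+1}$ and $(\ln d_n)_{t_1}=-c_{n-1}d_{n-1}$, which is the $s=1$ instance of the claimed recursion. It therefore suffices to prove
\begin{equation*}
(\ln c_n,\ \ln d_n)^{T}_{t_{s+1}} = T\,(\ln c_n,\ \ln d_n)^{T}_{t_s},
\end{equation*}
with $T$ as in \eqref{mkp2schie-theo1-recursion-oper}; iterating from the base case then produces $T^{s-1}$ acting on the seed $(c_{n+1}d_{n+1},-c_{n-1}d_{n-1})^{T}$.

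For the inductive step I would start from the flows \eqref{mkp2schie-abts} at level $s+1$, namely $c_{n,t_{s+1}}=\bar A_{s+1}c_n$ and $d_{n,t_{s+1}}=-\bar A^*_{s+1}d_n$, and insert the operator recursions \eqref{mkp2schie-Csrecursion}. The crucial simplification is that $\bar A_s$ and $\bar A^*_s$ never have to be computed explicitly: whenever they act on $c_n$ or $d_n$ they are replaced, via \eqref{mkp2schie-abts}, by the known flows $\bar A_s c_n=c_{n,t_s}$ and $\bar A^*_s d_n=-d_{n,t_s}$. Likewise the constant term is replaced by its logarithmic expression \eqref{mkp2schie-wabts}, $(\bar L^s)_0=\Delta^{-1}\bigl[(\ln c_n)_{t_s}+E(\ln d_n)_{t_s}\bigr]$, which itself follows from $w=c_nd_{n+1}$ in \eqref{mkp2sc-spec-sc-w}. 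After these substitutions one divides the first equation by $c_n$ and the second by $d_n$, so the left-hand sides become $(\ln c_n)_{t_{s+1}}$ and $(\ln d_n)_{t_{s+1}}$, and then collects the right-hand sides into a $2\times2$ operator matrix acting on $((\ln c_n)_{t_s},(\ln d_n)_{t_s})^{T}$.

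The remaining task is to recognise this matrix as the factored operator \eqref{mkp2schie-theo1-recursion-oper}. The top row inherits a clean outer shift $E$ because the common leading factor $c_nE$ in all three terms of the first relation in \eqref{mkp2schie-Csrecursion} cancels against the division by $c_n$; the bottom row only acquires its outer $-E^{-1}$ after commuting $E^{-1}$ through $\Delta^{-1}$ (using $E^{-1}\Delta^{-1}=\Delta^{-1}E^{-1}$) to reorganise the adjoint terms. The common scalar kernel $\Delta^{-1}(Ec_nd_nE-c_nd_n)\Delta^{-1}$ emerges once the $(\bar L^s)_0$ contribution is merged with the $\Delta^{-1}$-integration pieces coming from the $c_nE\Delta^{-1}d_n\bar A_s$ and $c_nE\Delta^{-1}(\bar A^*_sd_n)$ terms, and the rank-one matrix $\bigl(\begin{smallmatrix}1&1\\1&1\end{smallmatrix}\bigr)$ encodes that, once the dust settles, both new flows are driven by the single source $(\ln c_n)_{t_s}+(\ln d_n)_{t_s}$.

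The main obstacle I expect is precisely this last matching of shifts. The source entering through $(\bar L^s)_0$ is $(\ln c_n)_{t_s}+E(\ln d_n)_{t_s}$, carrying an extra shift on the $d$-component, whereas the target operator contracts with the \emph{un}-shifted combination $(\ln c_n)_{t_s}+(\ln d_n)_{t_s}$. The key identity that resolves this is $\Delta^{-1}E=1+\Delta^{-1}$, which lets one rewrite $(\bar L^s)_0=\Delta^{-1}\bigl[(\ln c_n)_{t_s}+(\ln d_n)_{t_s}\bigr]+(\ln d_n)_{t_s}$, trading the surplus shift for a local term; one must then verify that this leftover local term cancels against the remaining contributions so that the nested $\Delta^{-1}$ expressions telescope into the clean factor $\Delta^{-1}(Ec_nd_nE-c_nd_n)\Delta^{-1}$. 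I would carry this out by repeatedly applying the extended Leibniz rule \eqref{dKP:Leib} to push all shifts to one side, checking the two rows separately and confirming that they assemble into the stated $T$.
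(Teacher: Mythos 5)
Your proposal is correct and follows essentially the same route as the paper, whose proof is only sketched: the paper likewise reduces the theorem to the operator recursions \eqref{mkp2schie-Csrecursion} together with \eqref{mkp2schie-wabts}, substitutes the known flows $\bar A_s c_n=c_{n,t_s}$, $\bar A_s^* d_n=-d_{n,t_s}$, and reassembles the result into $T$. Your identification of the shift mismatch between $(\ln c_n)_{t_s}+E(\ln d_n)_{t_s}$ and $(\ln c_n)_{t_s}+(\ln d_n)_{t_s}$, resolved via $\Delta^{-1}E=1+\Delta^{-1}$ (and its companions $\Delta^{-1}E^{-1}=\Delta^{-1}-E^{-1}$, $\Delta E^{-1}=1-E^{-1}$), is precisely the cancellation that makes both rows collapse to the common kernel $\Delta^{-1}(Ec_nd_nE-c_nd_n)\Delta^{-1}$, so the plan carries through as stated.
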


Further we have
\begin{theorem}\label{mkp2schie-coro1}
  By defining $q_n$ as in \eqref{qn-cndn}, the hierarchy \eqref{mkp2schie-theo1-recursion} gives rise to the Volterra hierarchy
  \begin{equation}\label{mkp2schie-coro1-Volhie}
    q_{n,t_s} = (-1)^s L_{V}^{s-1} (e^{q_{n+1}} -e^{q_{n-1}}  ),~~s=1,2,\cdots,
  \end{equation}
  where the recursion operator $L_{V}$ is given in \eqref{Volterra-hie}.
  In this context, we say that \eqref{mkp2schie-theo1-recursion} provides a decomposition of the Volterra hierarchy \eqref{mkp2schie-coro1-Volhie}.
\end{theorem}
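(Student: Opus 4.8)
The plan is to collapse the two-component recursion of Theorem \ref{mkp2schie-theo1} onto the single scalar combination that represents $q_n$. Since \eqref{qn-cndn} gives $e^{q_n}=-c_nd_n$, logarithmic differentiation yields $q_{n,t_s}=(\ln c_n)_{t_s}+(\ln d_n)_{t_s}$, so $q_{n,t_s}$ is obtained from \eqref{mkp2schie-theo1-recursion} by acting on the left with the row vector $(1,1)$. First I would record this projection together with the value of the seed: acting with $(1,1)$ on $(c_{n+1}d_{n+1},\,-c_{n-1}d_{n-1})^T$ gives $c_{n+1}d_{n+1}-c_{n-1}d_{n-1}=-(e^{q_{n+1}}-e^{q_{n-1}})$, which already reproduces the $s=1$ member of \eqref{mkp2schie-coro1-Volhie}.

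The crucial structural observation is that the rightmost factor of the recursion operator \eqref{mkp2schie-theo1-recursion-oper} is the rank-one matrix $\begin{pmatrix}1&1\\1&1\end{pmatrix}=\begin{pmatrix}1\\1\end{pmatrix}(1,1)$, while the central factor $P:=\Delta^{-1}(Ec_nd_nE-c_nd_n)\Delta^{-1}$ is scalar, i.e. acts identically on both components. Consequently $T$ sends any $(f,g)^T$ to $(EP(f+g),\,-E^{-1}P(f+g))^T$, and projecting with $(1,1)$ produces the intertwining relation $(1,1)\,T=\mathcal{L}\,(1,1)$ with the scalar operator $\mathcal{L}=(E-E^{-1})P$. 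Iterating this relation gives $(1,1)\,T^{s-1}=\mathcal{L}^{s-1}(1,1)$, and applying it to the seed yields $q_{n,t_s}=-\mathcal{L}^{s-1}(e^{q_{n+1}}-e^{q_{n-1}})$.

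It then remains to identify $\mathcal{L}$ with $-L_V$. Here I would use the factorization $E-E^{-1}=\Delta(1+E^{-1})$ together with the commutativity of $\Delta$ and $E$ to cancel the left integration, so that $(E-E^{-1})\Delta^{-1}=1+E^{-1}$ and hence $\mathcal{L}=(1+E^{-1})(Ec_nd_nE-c_nd_n)\Delta^{-1}$. Substituting $c_nd_n=-e^{q_n}$ and expanding both $\mathcal{L}$ and $L_V=(E+1)(e^{q_n}E-E^{-1}e^{q_n})\Delta^{-1}$ into shift operators, one checks term by term that $\mathcal{L}=-L_V$. Feeding this into the projected recursion gives $q_{n,t_s}=-(-L_V)^{s-1}(e^{q_{n+1}}-e^{q_{n-1}})=(-1)^sL_V^{s-1}(e^{q_{n+1}}-e^{q_{n-1}})$, which is exactly \eqref{mkp2schie-coro1-Volhie}.

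The main obstacle is the last operator identity $\mathcal{L}=-L_V$: although the cancellation $(E-E^{-1})\Delta^{-1}=1+E^{-1}$ removes one integration cleanly, one must still handle the remaining $\Delta^{-1}$ and the non-commuting products of multiplication and shift operators carefully, keeping track of how $E$ and $E^{-1}$ shift the coefficient $c_nd_n$. I expect the matching of the second-order, first-order, zeroth-order and backward-shift terms of $(1+E^{-1})(Ec_nd_nE-c_nd_n)$ against those of $-(E+1)(e^{q_n}E-E^{-1}e^{q_n})$ to be the only genuinely computational part of the argument; everything upstream is purely structural, resting on the rank-one collapse of $\begin{pmatrix}1&1\\1&1\end{pmatrix}$ and the scalar nature of $P$.
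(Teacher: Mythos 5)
Your proposal is correct, and it is precisely the computation the paper leaves implicit (the theorem is stated after Theorem \ref{mkp2schie-theo1} without a written proof): the rank-one factor $\bigl(\begin{smallmatrix}1&1\\1&1\end{smallmatrix}\bigr)$ collapses $T$ onto the scalar variable $q_n=\ln(-c_nd_n)$, giving the intertwining $(1,1)\,T=\mathcal{L}\,(1,1)$, and your operator identity $\mathcal{L}=-L_V$ (via $(E-E^{-1})\Delta^{-1}=1+E^{-1}$ and $c_nd_n=-e^{q_n}$) checks out term by term, yielding the sign $(-1)^s$ exactly as in \eqref{mkp2schie-coro1-Volhie}.
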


Note that if we look for $c_nd_{n+j}=-1$ type of reduction, we find the only available case is $j=-2$, i.e.
\begin{equation}\label{mkp2schie-new-redu}
 d_{n}=\frac{1}{d_{n-2}}.
\end{equation}
This leads  \eqref{mkp2schie-abt1} to a single equation
\begin{equation}\label{mkp2schie-new-equ}
 (\ln c_{n})_{x}= -\frac{c_n}{c_{n-2}},
\end{equation}
which is the the  Volterra  equation \eqref{mkp2schie-Voleq} if we take
\begin{equation}
q_n=\ln \frac{c_n}{c_{n-2}},~~(\mathrm{or}~ e^{q_n}=-c_nd_n,~ d_n={-1}/{c_{n-2}}).
\end{equation}
It is interesting that the reduction \eqref{mkp2schie-new-redu} is  valid as well for the whole hierarchy \eqref{mkp2schie-theo1-recursion}.
As a result, we obtain a scalar hierarchy
\[(\ln c_n)_{t_{s+1}}=-E\Delta^{-1}\Bigl(E \frac{c_n}{c_{n-2}} E - \frac{c_n}{c_{n-2}}\Bigr)(E^{-1}+E^{-2})(\ln c_n)_{t_{s}},~~s=1,2,\cdots,\]
which is, by acting $(1-E^{-2})$ on both sides,
\begin{equation}
    q_{n,t_{s+1}} = - L_{V} q_{n,t_{s}},~~s=1,2,\cdots,
\end{equation}
i.e., the Volterra hierarchy where $L_V$ is given in \eqref{Volterra-hie}.

\subsubsection{Correlation of the D$\Delta$KP and D$\Delta$mKP(E)}\label{sec-5-3-3}

Note that both the D$\Delta$KP and D$\Delta$mKP(E) give rise to the Volterra hierarchy via there
squared eigenfunction symmetry constraints,
there should have some correlations behind the fact.

Let us consider  the gauge transformation
   \begin{equation}\label{mkpequi-theo-equi-gauge}
     h_n \bar{L} = \bar{M} h_n
   \end{equation}
   with an unfixed scalar function $h_n$, which gives rise to
   \begin{equation}\label{mkpequi-theo-equi-relation}
     h_n w=h_{n+1},~~w_0= \bar{u},~~h_n w_s=\bar{u}_s h_{n-s},~~s=1,2,\cdots.
   \end{equation}
From  \eqref{mkp2hie-hie} we know that
  \begin{equation}\label{mkpequi-theo-equi-htime}
    (\ln h_n)_{t_s} =(\bar{L}^s)_0.
  \end{equation}
Then, under \eqref{mkpequi-theo-equi-gauge}, \eqref{mkpequi-theo-equi-htime} and the Lax equation \eqref{mkp2-lax-1}
we find
     \begin{align*}
       \bar{M}_{t_s} &= (h_n \bar{L} h_n^{-1} )_{t_s} = h_{n,t_s}\bar{L} h_n^{-1}+ h_n \bar{L}_{t_s} h_n^{-1} - h_n \bar{L} h_n^{-1} h_{n,t_s} h_n^{-1} \\
        & = h_{n,t_s}h_n^{-1}\bar{M}+ h_n [ \bar A_s,\bar{L} ] h_n^{-1} - \bar{M} h_{n,t_s} h_n^{-1} \\
        & = [h_{n,t_s}h_n^{-1}+ h_n\bar A_sh_n^{-1}, \bar{M}] \\
        & = [h_n (\bar{L}^s)_0h_n^{-1}+ h_n(h_n^{-1}\bar{M}^s h_n)_{\geq 1}h_n^{-1}, \bar{M}]\\
        & = [\bar{B}_s,\bar{M}].
     \end{align*}
This indicates that the Lax equation $\bar{M}_{t_s}= [\bar{B}_s,\bar{M}]$
is a consequence of  \eqref{mkp2-lax-1} if $h_n$ provides the gauge transformation \eqref{mkpequi-theo-equi-gauge}.
Note that $\bar{M}_{t_s}= [\bar{B}_s,\bar{M}]$ and \eqref{gt-kp-lax} generate the same D$\Delta$KP hierarchy,
and with \eqref{mkpequi-theo-equi-gauge} we have
\[\Psi=h_n\Theta_n,~~ \Psi^*=\Theta^*/h_n.\]
By the replacement we employed previously, i.e.
$\Psi=Q_n, ~\Psi^*=R_n,~\Theta=c_n,~ \Theta^*=d_n$,
we formally have
\[Q_n=h_n c_n,~ R_n=d_n/h_n,\]
which gives rise to $Q_nR_n=c_nd_n$. Thus \eqref{vrt-conn} and  \eqref{qn-cndn} coincide,
and it is then not surprised that both the D$\Delta$KP and D$\Delta$mKP(E)
are related to the Volterra hierarchy by squared eigenfunction symmetry constraints.

Note that some connections between the D$\Delta$KP  and D$\Delta$mKP($\Delta$) hierarchy and their
squared eigenfunction symmetry constrained systems have been investigated, e.g. \cite{cheng-2018,Tamizhmani-2000},
but similar connections between D$\Delta$KP  and D$\Delta$mKP(E) have not been reported.

\section{Concluding remarks}\label{sec-6}

We have mainly considered squared eigenfunction symmetry constraint of the D$\Delta$mKP system together with the D$\Delta$KP systems.
These two systems are entangled each other, as we can see in the paper.
As the new results, we have achieved the following.
The D$\Delta$KP gives rise to the Ragnisco-Tu hierarchy which is a discretization of the AKNS system
and is reduced to the Volterra hierarchy.
Note that it used to be thought that there does not exist one-field reduction for the Ragnisco-Tu hierarchy.
The D$\Delta$mKP gives rise to the R-Toda hierarchy which is reduced to the differential-difference Burgers hierarchy.
The D$\Delta$mKP(E) leads to the Volterra hierarchy  as well, by either decomposition or reduction.
In some reductions, we have taken into account of nonzero asymptotic conditions of the wave functions
so that reductions can be implemented reasonably.

The D$\Delta$KP and D$\Delta$mKP  systems can be formally considered as subsystems of the 2DTL system,
since the 2DTL system employs both pseudo-difference operators $\bar M$ and $\bar L^*$ \cite{UT-2DTL-1984}.
There exists a constraint imposed on $\bar M$ and $\bar L^*$ such that
the time part in the Lax triplet of the 2DTL together with its adjoint form
gives rise to the Ablowitz-Ladik hierarchy (cf. \cite{Taka-JPA-2018});
but that is different from the constraint of squared eigenfunction symmetry.
Besides, there exists a local transformation to bring the Ablowitz-Ladik hierarchy to the R-Toda hierarchy \eqref{rtodap-theo-hie-s},
which is given by \cite{Kharchev-1997-jmpa}, (also see \S 18.11 of \cite{suris-2003}),
\begin{equation}
a_n=-\frac{Q_n}{R_{n-1}},~~b_n=-\frac{Q_n}{R_{n-1}}+Q_nR_{n-1}.
\end{equation}
In addition, $(a_n, b_n)$ can also be locally expressed by $(u_n, v_n)$ of the Ragnisco-Tu hierarchy via \cite{Kharchev-1997-jmpa}
\begin{equation}
a_n=u_nv_n+\frac{u_n}{v_{n-1}},~~b_n=-\frac{u_n}{v_{n-1}}.
\end{equation}
However, the connection between the  Ragnisco-Tu and Ablowitz-Ladik potentials,
\begin{equation}
u_n=Q_n h_n, ~~ v_n=\frac{R_{n-1}}{h_n},~~h_n=\prod^{n-1}_{j=-\infty}(1-Q_jR_j),
\end{equation}
are not local.
Although the above three hierarchies are connected via transformations, they are not essentially equivalent,
for example, they exhibit quite different one-field reduction features.

Note that in continuous case  the KP system generates the AKNS spectral problem and its hierarchy \cite{cheng-pla-1991,Konopelchenko-pla-1991}
and the mKP system leads to the  Kaup-Newell spectral problem \cite{kaup-1978} and its hierarchy
that gives rise to the Burgers hierarchy after a simple reduction \cite{ChenDY-JMP-2002}.
The result of the present paper implies possible connection
between the squared eigenfunction symmetry constraint of the D$\Delta$mKP
and the Kaup-Newell hierarchy, which is worthy to investigate elsewhere.
In addition, the differential-difference case exhibits richer results and reveals more links than the continuous case,
some of which already emerged in \cite{Kharchev-1997-jmpa,oevel-1996},
but here have been much more elaborated.
Investigation of continuum limits of the D$\Delta$mKP hierarchies, their integrable characteristics, and
constrained systems will provide more insight to understand the D$\Delta$mKP system.
We believe that the results we obtained in the paper will provide a deeper understanding
for the connections of (1+1)-dimensional and (2+1)-dimensional differential-difference integrable systems.

In Appendix we will have a close look at the  differential-difference Burgers hierarchy and a
fully discrete Burgers equation.
It turns out that the former act as auto B\"acklund transformations for the continuous Burgers hierarchy
and the nonlinear superposition formula of the B\"acklund transformation
gives rise to a fully discrete Burgers equation
that is defined on 3 points (not quadrilateral) but still consistent around  cube.
Moreover,  this discrete Burgers equation is linearisable (see \eqref{dB-linear}).

\subsection*{Acknowledgments}
DJZ thanks Prof. Hietarinta for his discussion on the discrete Burgers equation.
This work was supported by  the NSF of China (Nos. 11631007, 11875040,  11601312).

\appendix
\section{A close look at the Burgers hierarchies}\label{app-1}

Since the squared eigenfunction symmetry of the continuous mKP gives rise to the derivative Schr\"odinger hierarchy
hierarchy that is easily reduced to the Burgers hierarchy \cite{ChenDY-JMP-2002},
it is not surprising the D$\Delta$mKP generates the differential-difference Burgers hierarchy.
However, the differential-difference Burgers equation can not only act as a B\"acklund transformation
for the whole continuous Burgers hierarchy,
but also generate a fully discrete Burgers equation from its nonlinear superposition formula.
Thus, we may have a full profile from the continuous to the discrete Burgers equations.

\subsection{Connections with the KP and mKP system}\label{sec-A-1}

Let us briefly recall the known connections between the Burgers hierarchy and the KP and mKP system formulated in
\cite{Harada-JPSJ-1985,Harada-JPSJ-1987,Kiso-PTP-1990}.
First, in Sato's KP theory the pseudo-differential operator
\begin{equation}\label{L-KP}
L^{}_{\mathrm{KP}}=\partial_x+u_2\partial^{-1}_x +u_3\partial^{-2}_x +\cdots
\end{equation}
is introduced by $L^{}_{\mathrm{KP}}=W \partial_x W^{-1}$ where
\begin{equation}\label{W}
W=1+ w_1 \partial^{-1}_x +w_2\partial^{-2}_x +\cdots.
\end{equation}
The Burgers hierarchy \eqref{BH} is equivalent to the Sato equation \cite{Harada-JPSJ-1985}
\begin{equation}\label{Sato-eq-KP}
W_{t_s}=B_s W -W \partial^s_x,~~ s=1,2,\cdots,
\end{equation}
where $W$ is truncated as
\begin{equation}\label{W-Bur}
W=1+ w_1 \partial^{-1}_x,~~ w_1=-\gamma,
\end{equation}
and $B_s=(W \partial_x^s W^{-1})_{\geq 0}$.
In this sense, the Burgers hierarchy is viewed as a sub-hierarchy of the KP equation.
Note that in terms of the $\tau$ function in Sato's KP theory (cf.\cite{OhtSTT-PTPS-1988}) there is
\begin{equation}\label{w1-tau}
\gamma=-w_1=\tau_x/\tau,
\end{equation}
which leads to the linearisation of the Burgers hierarchy.

Second, the pseudo-differential operator of the mKP hierarchy is
\begin{equation}\label{L-mKP}
L^{}_{\mathrm{mKP}}=\partial_x+v+v_2\partial^{-1}_x +v_3\partial^{-2}_x +\cdots.
\end{equation}
The symmetry constraint
\begin{equation}\label{mKP-con}
v=q q^*
\end{equation}
gives rise to the Kaup-Newell spectral problem and its hierarchy \cite{ChenDY-JMP-2002},
where $q$ is the eigenfunction of the $L^{}_{\mathrm{mKP}}$ and $q^*$ is its adjoint counterpart.
This constraint reduces $L^{}_{\mathrm{mKP}}$ to
\begin{equation}
L^{}_{\mathrm{mKP}}=\partial_x+q\partial^{-1}_xq^*\partial_x
\end{equation}
and the Burgers hierarchy is obtained from a single eigenfunction constraint
$v=q$ (i.e. $q^*=1$), which leads to
\begin{equation}
L=\partial_x+v,
\end{equation}
and $Lq=\lambda q$ is nothing but the spectral problem of the
Burgers hierarchy (cf. Sec.IV in \cite{ChenDY-JMP-2002}).
Note that the two operators  $L^{}_{\mathrm{KP}}$ and  $L^{}_{\mathrm{mKP}}$
are connected by gauge relation (cf.\cite{Kiso-PTP-1990})
$L^{}_{\mathrm{KP}}= f^{-1}L^{}_{\mathrm{mKP}} f$
where $v=f_x/f$,  which coincides with \eqref{w1-tau}
and indicates $v$ and $\gamma$ in \eqref{w1-tau} can share same $\tau$ function.

In the differential-difference case, the pseudo-difference operator $M$, \eqref{gt-kp-operator},
is introduced by $M=W\Delta W^{-1}$ where \cite{kanaga-1997}
\begin{equation}\label{W-d}
W=1+ w_1 \Delta^{-1}  +w_2\Delta^{-2} +\cdots.
\end{equation}
The corresponding  Sato equation reads
\begin{equation}\label{Sato-eq-dKP}
W_{t_s}=B_s W -W \Delta^s,~~ s=1,2,\cdots,
\end{equation}
where $B_s=(W \Delta^s W^{-1})_{\geq 0}$.
One can find that when $W$ is truncated as
\begin{equation}\label{W-Bur-d}
W=1+ w_1 \Delta^{-1}_x,~~ w_1=z_n+1,
\end{equation}
the first two Sato equations are
\[(\ln z_n)_{t_1}=Y_1, ~~ (\ln z_n)_{t_2}=-2Y_1+Y_2, \]
where $Y_j$ are the differential-difference Burgers flows given in \eqref{Bur-hie-2-b}.
More precise relation between the differential-difference Burgers hierarchy and the Sato equation \eqref{Sato-eq-dKP}
with \eqref{W-Bur-d} will be investigated elsewhere.

Finally, back to the squared eigenfunction symmetry constraint \eqref{sc-mkp}
of the D$\Delta$mKP hierarchy,
the differential-difference Burgers hierarchy is obtained when $r_n=-z_{n+1}$ (see \eqref{red-Bur-2}),
which holds when taking $b_n=1$ in light of \eqref{rtoda-abgt1abqr}.
This leads to once again a single eigenfunction constraint $v=a_n=-z_n$ and the constrained operator
\eqref{rtoda-abL} reduces to $L=-z_nE$ and the related spectral problem reads
$z_n\psi_{n+1}=-\lambda \psi_n$, which is a spectral problem of the differential-difference Burgers hierarchy
\eqref{Bur-hie-2-b}.
It can be easily transformed to a trivial form $\beta_{n+1}=z_n\beta_n$ by $\psi_n=(-\lambda)^n/\beta_n$,
which coincides with \eqref{Cole-H-2}.
Note also that the gauge relation \eqref{gt-kp-gaugeLM} (also see \cite{Tamizhmani-2000})
gives the relation $v=-z_n=f_{n+1}/f_n$ in \eqref{gt-kp-vu-a},
but in terms of $\tau$ function $w_1$ in \eqref{W-Bur-d} is expressed as
$w_1=(\ln \tau_n)_x+c$ (cf.\cite{kanaga-1997,Kajiwara-jmp-1991}) rather than $w_1=\tau_{n+1}/\tau_n+c$,
where $x$ is an auxiliary independent variable and $c$ is some constant.

\subsection{B\"acklund and Galilean transformations of the Burgers hierarchy}\label{sec-A-2}

Equation \eqref{mkp-hie-spec-x} in the Lax triplet together with its adjoint form \eqref{gt-kp-adwave-mkp-x}
leads to a differential-difference Burgers equation
\begin{equation}\label{be-be}
  \partial_{x}\ln z_{n}=z_{n-1}-z_{n}.
\end{equation}
Let us look at the relation between \eqref{be-be} together with the  differential-difference Burgers hierarchy
\begin{equation}\label{be-hie}
 \partial_{t_{s+1}}\ln z_{n}=-\Delta E^{-1}z_n\Delta^{-1}\partial_{t_{s}}\ln z_{n},~~\partial_{t_1}\ln z_{n}=z_{n-1}-z_{n},~~s=1,2,\cdots,
\end{equation}
and the Burgers hierarchy w.r.t. $z_n$
\begin{equation}\label{be-hie-con}
   \partial_{t_{s+1}} z_n= -\partial_{x}(\partial_{x}+ z_n)\partial_{x}^{-1} \partial_{t_s}z_n,~~ \partial_{t_1}z_n=z_{n,x}.
\end{equation}

\begin{lemma}\label{lem-A1}
Under \eqref{be-be} the  differential-difference Burgers hierarchy \eqref{be-hie} gives rise to the
Burgers hierarchy \eqref{be-hie-con}.
\end{lemma}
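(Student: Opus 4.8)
The plan is to run an induction on $s$ that matches, flow by flow, the outputs of the two recursion operators, using the constraint \eqref{be-be} to trade the continuous operator $\partial_x$ for discrete shifts. Throughout I abbreviate the discrete flow by $\phi_s=\partial_{t_s}\ln v_n$ and the continuous flow by $\psi_s=\partial_{t_s}v_n=v_n\phi_s$, so that \eqref{be-hie} reads $\phi_{s+1}=-\Delta E^{-1}v_n\Delta^{-1}\phi_s$ while the target \eqref{be-hie-con} reads $\psi_{s+1}=-\partial_x(\partial_x+v_n)\partial_x^{-1}\psi_s$. The goal is thus to show that the $t_s$-flows generated by \eqref{be-hie} satisfy the recursion of \eqref{be-hie-con} once \eqref{be-be} is imposed.

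For the base case I would observe that \eqref{be-be} gives $v_{n,x}=v_n(v_{n-1}-v_n)$, which is exactly $\psi_1=\partial_{t_1}v_n$ computed from the first member of \eqref{be-hie}; hence $\partial_{t_1}v_n=v_{n,x}$, the initial member of \eqref{be-hie-con}. The heart of the argument is then an intertwining relation for the nonlocal parts. Differentiating \eqref{be-be} along $t_s$ and using that the hierarchy flows commute with $\partial_x$ (which, under \eqref{be-be}, acts as $\partial_{t_1}$ on all differential-difference polynomials in $v_n$), I obtain $\partial_x\phi_s=\partial_{t_s}(v_{n-1}-v_n)=v_{n-1}\phi_{s,n-1}-v_n\phi_{s,n}=-\Delta E^{-1}\psi_s$. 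Inverting $\Delta$ and $\partial_x$ and fixing the integration constants by the asymptotic conditions yields the key relation $\partial_x^{-1}\psi_s=-E\Delta^{-1}\phi_s$: the $x$-antiderivative of a flow equals, up to a shift and sign, the discrete antiderivative of its logarithmic counterpart.

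This step is the one I expect to be the main obstacle, since it is where the two distinct nonlocal inverses $\partial_x^{-1}$ and $\Delta^{-1}$ must be reconciled and the boundary/asymptotic data shown to be consistent; everything downstream is purely algebraic. With the relation in hand the induction closes by direct substitution. Writing $h=\Delta^{-1}\phi_s$, I substitute $\partial_x^{-1}\psi_s=-Eh$ into $-\partial_x(\partial_x+v_n)\partial_x^{-1}\psi_s$, use $\partial_x h=-E^{-1}\psi_s$ (a restatement of the intertwining relation) to eliminate every $x$-derivative of $h$, and replace the surviving $v_{n,x}$ through \eqref{be-be}. After collecting the coefficients of $h_{n+1},h_n,h_{n-1}$, the $h_{n+1}$-terms cancel identically and the expression reduces to
$$
-v_n\bigl(v_nh_n-v_{n-1}h_{n-1}\bigr)=v_n\bigl(-\Delta E^{-1}v_n\Delta^{-1}\phi_s\bigr)=v_n\phi_{s+1}=\psi_{s+1},
$$
which is precisely the next member prescribed by \eqref{be-hie-con}. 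This completes the induction and establishes that, under \eqref{be-be}, the hierarchy \eqref{be-hie} gives rise to \eqref{be-hie-con}. I would double-check the whole scheme against $s=1$, where $\phi_1=v_{n-1}-v_n$, $h=-v_{n-1}$, and both sides collapse to $-v_{n,xx}-2v_nv_{n,x}=v_n(v_nv_{n-1}-v_{n-1}v_{n-2})$, to make sure the signs and the integration constant in the intertwining relation are pinned down correctly.
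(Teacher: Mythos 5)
Your proof is correct and is essentially the paper's own argument in different packaging: your intertwining relation $\partial_x^{-1}\psi_s=-E\Delta^{-1}\phi_s$ is exactly the $t_s$-derivative of the substitution $\Delta^{-1}\ln v_n=-\partial_x^{-1}v_{n-1}$ that the paper uses, and your replacement of $v_{n,x}$ via \eqref{be-be} mirrors the paper's elimination of $v_{n-1}$ through the same equation. The only cosmetic difference is that the paper rewrites the discrete recursion \eqref{be-hie} directly into the continuous form \eqref{be-hie-con} at the operator level, whereas you verify the continuous recursion term by term; note also that your induction wrapper is superfluous, since neither the intertwining relation nor the final substitution actually invokes an inductive hypothesis.
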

\begin{proof}
As a single equation, \eqref{be-hie} is an $(s+1)$-order difference equation.
In principle, one can repeatedly use \eqref{be-be} to express $z_{n-j}$ in terms of $z_n$, e.g.
\begin{equation*}
  z_{n-1}=\frac{z_{n,x}+z_n^2}{z_n},~~z_{n-2}=\frac{z_n^3+3z_nz_{n,x}+z_{n,xx}}{z_n^2+z_{n,x}},~~\cdots,
\end{equation*}
to eliminate all $z_{n-j}$ in \eqref{be-hie}.
In practice, rewriting \eqref{be-hie} as
\begin{equation}\label{be-hie-alt}
     \partial_{t_{s+1}} E \Delta^{-1}\ln z_{n}=-z_n\partial_{t_{s}}\Delta^{-1}\ln z_{n},
\end{equation}
and replacing the $\Delta^{-1}$ term by using \eqref{be-be} which indicates $\Delta^{-1}\ln z_n=-\partial_{x}^{-1}z_{n-1}$,
from \eqref{be-hie-alt} we have
\[\partial_{t_{s+1}} z_n  =-\partial_{x}z_n \partial_{t_s}\partial_{x}^{-1}z_{n-1}.
\]
Then, replacing $z_{n-1}$ by \eqref{be-be} we have
\begin{align*}
      \partial_{t_{s+1}} z_n & = -\partial_{x}z_n \partial_{t_s}\partial_{x}^{-1}( z_n+\partial_{x}\ln z_n) \\
                             & = -\partial_{x}(z_n +\partial_{x})\partial_{x}^{-1} \partial_{t_s} z_n,
\end{align*}
which is the Burgers hierarchy \eqref{be-hie-con}.
\end{proof}

\begin{lemma}\label{lem-A-2}
If $z_n$ satisfies \eqref{be-hie-con},
then $z_{n-1}$ defined by \eqref{be-be} gives rise to the Burgers hierarchy as well.
\end{lemma}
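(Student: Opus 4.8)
The plan is to bypass the recursion-operator manipulations entirely and argue through the Cole--Hopf linearisation of the continuous Burgers hierarchy \eqref{be-hie-con}. Since Lemma \ref{lem-A1} already establishes the forward direction, the cleanest route is to exhibit \eqref{be-be} as the B\"acklund transformation induced by $x$-differentiation at the linear level, so that the whole assertion collapses to the trivial fact that a constant-coefficient linear hierarchy in $x$ is invariant under $\partial_x$.

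First I would introduce a potential $\phi_n$ through the Cole--Hopf substitution $v_n=(\ln\phi_n)_x=\phi_{n,x}/\phi_n$, which is always solvable for $\phi_n$. Using the operator identity $(\partial_x+v_n)=\phi_n^{-1}\partial_x\phi_n$ one finds $\partial_{t_s}v_n=\partial_x(\phi_{n,t_s}/\phi_n)$, and a one-line induction shows that the recursion $-\partial_x(\partial_x+v_n)\partial_x^{-1}$ appearing in \eqref{be-hie-con} sends $(-1)^{s-1}\partial_x\bigl((\partial_x^s\phi_n)/\phi_n\bigr)$ to $(-1)^{s}\partial_x\bigl((\partial_x^{s+1}\phi_n)/\phi_n\bigr)$. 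Hence, assuming $v_n$ solves \eqref{be-hie-con}, the potential satisfies the linear hierarchy $\phi_{n,t_s}=(-1)^{s-1}\partial_x^s\phi_n+c_s(t)\phi_n$, the $x$-independent term $c_s(t)\phi_n$ being invisible to the Cole--Hopf map.

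Next I would identify the B\"acklund transformation \eqref{be-be} with differentiation of the potential. Writing $v_{n-1}=v_n+(\ln v_n)_x$ and substituting $v_n=\phi_{n,x}/\phi_n$, the two $v_n$ contributions cancel and one is left with $v_{n-1}=\phi_{n,xx}/\phi_{n,x}=(\ln\phi_{n,x})_x$. Thus $v_{n-1}$ is again in Cole--Hopf form, now with potential $\phi_{n-1}:=\phi_{n,x}$. Applying $\partial_x$ to the linear hierarchy for $\phi_n$ and using $[\partial_x,\partial_x^s]=0$ gives $\phi_{n-1,t_s}=(-1)^{s-1}\partial_x^s\phi_{n-1}+c_s(t)\phi_{n-1}$, so $\phi_{n-1}$ obeys the very same linear hierarchy (with the same $c_s(t)$). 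Transporting back through Cole--Hopf then shows that $v_{n-1}$ solves \eqref{be-hie-con}, which is exactly the claim.

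The only genuinely technical point is the first step, namely checking that the recursion in \eqref{be-hie-con} is conjugated by Cole--Hopf into pure $x$-differentiation; I expect this to be the main obstacle, although it reduces to a routine induction once the conjugation identity $(\partial_x+v_n)=\phi_n^{-1}\partial_x\phi_n$ is in place. As an alternative that remains within the difference setting, one could instead verify directly that \eqref{be-be} intertwines the recursion operator at site $n$ with the one at site $n-1$ and then rerun the elimination of Lemma \ref{lem-A1} shifted by one index; this avoids introducing $\phi_n$ but trades the transparent linear argument for a heavier operator computation.
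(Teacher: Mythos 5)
Your proof is correct, but it takes a genuinely different route from the paper's. The paper works entirely at the level of the nonlinear variables: it computes $\partial_{t_{s+1}}v_{n-1}=\bigl(\partial_x\tfrac{1}{v_n}+1\bigr)\partial_{t_{s+1}}v_n$ from \eqref{be-be}, substitutes the hierarchy \eqref{be-hie-con} for $v_n$, and then checks by direct operator manipulation that this coincides with $-\partial_x(v_{n-1}+\partial_x)\partial_x^{-1}\partial_{t_s}v_{n-1}$ after $v_{n-1}$ and $\partial_{t_s}v_{n-1}$ are rewritten through \eqref{be-be}; in other words, it verifies that the map $\partial_x\tfrac{1}{v_n}+1$ intertwines the recursion operator at site $n$ with the one at site $n-1$ --- precisely the ``heavier operator computation'' you mention only as a fallback. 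Your main route instead exploits linearizability: the conjugation identity $(\partial_x+v_n)=\phi_n^{-1}\partial_x\phi_n$, the induction $\partial_{t_s}v_n=(-1)^{s-1}\partial_x\bigl((\partial_x^s\phi_n)/\phi_n\bigr)$, the identification of \eqref{be-be} with $\phi_{n-1}=\phi_{n,x}$, and the harmlessness of the integration functions $c_s(t)$ under $\partial_x\circ\ln$ are all correct, so the lemma does collapse to $[\partial_x,\partial_x^s]=0$. What you gain is conceptual transparency: your argument explains \emph{why} \eqref{be-be} is an auto-B\"acklund transformation (it is differentiation of the Cole--Hopf potential), and it dovetails with the discrete Cole--Hopf linearizations \eqref{Cole-H-1} and \eqref{Cole-H-2} appearing earlier in the paper. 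What the paper's computation gains is genericity: it never invokes linearizability, so the same intertwining argument carries over essentially verbatim to the parameter-deformed transformation \eqref{be-be-lamb} and the Galilean-shifted hierarchy \eqref{be-hie-con-lamb} treated in the second theorem of the appendix, whereas your potential would need to be adjusted there (e.g.\ $\phi_n\to e^{\lambda x}\phi_n$). Note also that both proofs rest on the same formal rules $\partial_x^{-1}\partial_x=\partial_x\partial_x^{-1}=1$, so your argument loses nothing in rigor on that account.
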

\begin{proof}
First, making use of \eqref{be-be} and \eqref{be-hie-con}, one has
\begin{align*}
      \partial_{t_{s+1}}z_{n-1} &= \partial_{t_{s+1}}(\partial_{x} \ln z_n+z_n ) \\
       & = (\partial_{x}\frac{1}{z_n}+1 )\partial_{t_{s+1}}z_{n}\\
       & =- (\partial_{x}\frac{1}{z_n}+1 )\partial_{x}(z_n +\partial_{x})\partial_{x}^{-1} \partial_{t_s}z_n.
\end{align*}
Meanwhile,
\begin{align*}
\partial_{x}(z_{n-1} +\partial_{x})\partial_{x}^{-1} \partial_{t_s}z_{n-1}
&=  \partial_{x}(\frac{z_{n,{x}}}{z_n}+z_n)\partial_{x}^{-1}(\partial_{x}\frac{1}{z_n}+1 )\partial_{t_{s}}z_{n}\\
& =\partial_{x}( \frac{z_{n,x}}{z_n}+z_n)(\frac{1}{z_n}\partial_{x}+1 )\partial_{x}^{-1} \partial_{t_s}z_n.
\end{align*}
By direct calculation we can find the above two formulas are same and we then arrive at
\begin{equation}\label{be-hie-con-n-1}
      \partial_{t_{s+1}}z_{n-1}=-\partial_{x}(z_{n-1}+\partial_{x})\partial_{x}^{-1}\partial_{t_s}z_{n-1},
\end{equation}
which is the Burgers hierarchy for $z_{n-1}$.
\end{proof}

On the basis of the above two lemmas, we immediately find the following.

\begin{theorem}\label{The-A-1}
\eqref{be-be} and \eqref{be-hie} define an auto B\"acklund transformation
for the Burgers hierarchy from \eqref{be-hie-con-n-1} to \eqref{be-hie-con}.
\end{theorem}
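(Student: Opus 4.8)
The plan is to read the statement through the standard definition of an auto Bäcklund transformation (BT): a differential-difference relation between two fields such that whenever one field solves a given hierarchy, the relation forces the partner field to solve the same hierarchy, with the two roles interchangeable. Here the candidate BT is the pair consisting of the $x$-part relation \eqref{be-be}, $\partial_x\ln v_n=v_{n-1}-v_n$, together with the time flows \eqref{be-hie}; the two fields being related are $v_n$ and its shift $v_{n-1}$, viewed as two a priori independent copies of a Burgers field. Since the two supporting lemmas are already established, the theorem is essentially a synthesis, and I would present it as such rather than re-deriving anything.

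First I would assemble the two ingredients. By Lemma \ref{lem-A1}, imposing \eqref{be-be} on the differential-difference Burgers hierarchy \eqref{be-hie} collapses it to the continuous Burgers hierarchy \eqref{be-hie-con} for the upper field $v_n$; this pins down the time evolution of $v_n$ and certifies that the full differential-difference structure is compatible with the single relation \eqref{be-be}. Then by Lemma \ref{lem-A-2}, once $v_n$ solves \eqref{be-hie-con}, the partner field $v_{n-1}$ built from $v_n$ via \eqref{be-be} automatically solves the Burgers hierarchy \eqref{be-hie-con-n-1}. Chaining these, \eqref{be-be} together with \eqref{be-hie} links two genuine solutions $v_n$ and $v_{n-1}$ of one and the same Burgers hierarchy; reading the relation \eqref{be-be} in the opposite sense (solving the resulting Riccati-type equation $\partial_x v_n+v_n^2=v_{n-1}v_n$ for $v_n$ given $v_{n-1}$) yields the correspondence in the stated direction, from \eqref{be-hie-con-n-1} to \eqref{be-hie-con}, and the obvious index shift $v_n\leftrightarrow v_{n-1}$ is what makes the transformation \emph{auto}.

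The one point that deserves care, and the only place where the argument is more than bookkeeping, is the consistency of \eqref{be-be} with \emph{all} the higher flows simultaneously: an isolated relation between $v_n$ and $v_{n-1}$ qualifies as a BT only if it is preserved under every time evolution $t_s$, not merely under $t_1$. This is exactly what running Lemmas \ref{lem-A1} and \ref{lem-A-2} for the whole hierarchy (rather than for a single member) secures: Lemma \ref{lem-A1} shows that \eqref{be-be} is propagated correctly by \eqref{be-hie} for every $s$, while Lemma \ref{lem-A-2} shows the induced evolution of $v_{n-1}$ is again Burgers for every $s$. I would therefore write the proof as a brief combination: state the BT interpretation, invoke the two lemmas in the order above to place both $v_n$ and $v_{n-1}$ on the Burgers hierarchy, and observe that the interchangeability of the two fields makes \eqref{be-be}--\eqref{be-hie} an auto Bäcklund transformation, with no further computation required.
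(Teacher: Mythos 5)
Your proposal is correct and follows essentially the same route as the paper, which proves Theorem~\ref{The-A-1} simply by combining Lemma~\ref{lem-A1} (the constraint \eqref{be-be} collapses \eqref{be-hie} to \eqref{be-hie-con} for $v_n$) with Lemma~\ref{lem-A-2} ($v_{n-1}$ built from \eqref{be-be} then satisfies \eqref{be-hie-con-n-1}). Your additional remarks on interchangeability and on consistency with all flows $t_s$ are sound elaborations of the same synthesis, not a different argument.
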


Note that \eqref{be-be} and \eqref{be-hie} are parameter free.
Let us introduce a parameter $\lambda$ by replacing $z_n$ with $z_n+\lambda$.
It follows from \eqref{be-be} and \eqref{be-hie} that
\begin{subequations}\label{be-be-lamb}
\begin{align}
& z_{n,x}=(z_n+\lambda) (z_{n-1}-z_{n}),\label{be-be-lamb-a}\\
& \partial_{t_{s+1}}\ln (z_n+\lambda)=(-1)^{s+1}\Delta [E^{-1}(z_n+\lambda)]^s z_{n-1},
\end{align}
\end{subequations}
which, by repeating the proof of Lemma \ref{lem-A1} and \ref{lem-A-2},
compose a B\"acklund transformation with parameter $\lambda$ for the Burgers hierarchy
\begin{equation}\label{be-hie-con-lamb}
   \partial_{t_{s+1}} z_n= (-1)^s \partial_{x}(\partial_{x}+ z_n+ \lambda)^s  z_{n,x}.
\end{equation}
This is a linear combination of the original Burgers flows in \eqref{be-hie-con},
but also as a whole it is a Galilean transformation of the $(s+1)$-order Burgers equation (see Theorem 1 in \cite{Zhang-PS-2011}).
Thus we can conclude that
\begin{theorem}\label{The-A-2}
\eqref{be-be-lamb} defines an auto B\"acklund transformation
for the Galilean transformed Burgers hierarchy \eqref{be-hie-con-lamb}.
\end{theorem}

\subsection{Nonlinear superposition formula and the discrete Burgers equation}\label{sec-A-3}

There is a Bianchi identity for the Burgers hierarchy.
Rewrite the shifted B\"acklund transformation \eqref{be-be-lamb-a} into
\[\t z_x=(\t z-p)(z-\t z),~~ \h z_x=(\h z-q)(z-\h z),\]
where $\t z$ and $\h z$ denote new solutions by choosing $\lambda=-p$ and $-q$ respectively,
from which Ref.\cite{Levi-Bur-1983} derived a superposition formula
\begin{equation}
\th z=\frac{p \h z-q \t z}{p-q+\h z-\t z}.
\label{NSF}
\end{equation}
Suppose that $z_{n,m}$ is a function defined on $\mathbb Z^2$ lattice as depicted in Fig.\ref{F-1} (a),
introduce notations
\begin{equation*}
z\equiv z_{n,m},~~ \widetilde{z}\equiv z_{n+1,m},~~\widehat{z}\equiv z_{n,m+1},
~~\widehat{\widetilde{z}}\equiv z_{n+1,m+1},
\end{equation*}
and $p,q$ serve as lattice parameters associated with $n$, $m$ directions, respectively.
Then the Bianchi identity \eqref{NSF} can be viewed as a discrete Burgers equation.
In fact, introducing $x=m/q$, in the limit $m,q\to \infty$, \eqref{NSF} gives rise to \eqref{be-be-lamb-a}
which is a differential-difference Burgers equation.
Note that \eqref{NSF} is different from the one derived in \cite{Hernandez-JPA-1999} by discretising Lax pair.
\begin{figure}[h!]
\begin{center}
\begin{tikzpicture}[rotate=0]
\fill \foreach \p in {(0.0,-1.2), (2.0, -1.20), (2.0, 0.8), (0.0, 0.8)} {\p circle (2pt)};
\fill[lightgray] (0,0.8) -- (2,-1.2) -- (2,0.8) -- cycle;
\draw[thick]     (-1.2, -1.2)--(3.2, -1.2) (-0.0, 2.0)--(0.0, -2.4) (2.0, 2.0)--(2.0, -2.4) (-1.2, 0.8)--(3.2, 0.8);
\draw[ thin,color=gray] (7,-1.20)--(10.0,-1.20)  (7,-1.20)--(7.0,1.80) (7,-1.20)--(5.8,-2.4);
\node[above] at (0,2) {$m$};
\node[right] at (3.2, -1.2) {$n$};
\node[left] at (0, -0.3) {$q$};
\node[above] at (1, -1.2) {$p$};
\node[left] at (0, -1.4) {$z$};
\node[left] at (0, 1.1) {$\h z$};
\node[right] at (2.0, -1.45) {$\t z$};
\node[right] at (2.0, 1.1) {$\th z$};
\fill[lightgray] (6.3,-1.9) -- (9,-1.2) -- (8.3,-1.9) -- cycle;
\fill[lightgray] (6.3,-1.9) -- (6.3,0.1) -- (7.0,0.8) -- cycle;
\fill[lightgray] (9,0.8) -- (9,-1.2) -- (7.0,0.8) -- cycle;
\fill[green] (6.3,0.1) -- (8.3,0.1) -- (8.3,-1.9) -- cycle;
\fill[yellow] (9,0.8) -- (8.3,0.1) -- (8.3,-1.9) -- cycle;
\fill[pink] (6.3,0.1) -- (8.3,0.1) -- (9,0.8) -- cycle;
\draw[thick]
(7,0.8) -- (9,0.8) -- (9,-1.2) -- (8.3,-1.9) -- (6.3,-1.9)
(9,0.8) -- (8.3,0.1)-- (8.3,-1.9)
(8.3,0.1)--(6.3,0.1)--(6.3,-1.9)
(6.3,0.1)--(7,0.8);
\node[above] at (7,1.8) {$l$};
\node[right] at (10, -1.2) {$m$};
\node[left] at (5.8,-2.5) {$n$};
\node[left] at (7, -1.2) {$z$};
\node[left] at (6.3, -1.85) {$\t z$};
\node[above] at (9.2, -1.2) {$\h z$};
\node[below] at (8.3, -1.9) {$\th z$};
\node[left] at (7, 0.85) {$\b z$};
\node[left] at (6.3, 0.1) {$\t{\b z}$};
\node[right] at (9, 0.8) {$\h{\b z}$};
\node[above] at (8.3, 0.1) {$\th{\b z}$};
\node[right] at (7, -2.2) {$q$};
\node[above] at (8.5, -1.1) {$r$};
\node[right] at (8.6, -1.7) {$p$};
\node[below] at (1, -2.5) {(a)};
\node[below] at (8, -2.5) {(b)};
\end{tikzpicture}
\caption{\label{F-1} (a). Discrete Burgers equation \eqref{NSF} on $(n,m)$-lattice.
(b). Consistency-around-cube of the discrete Burgers equation.}
\end{center}
\end{figure}
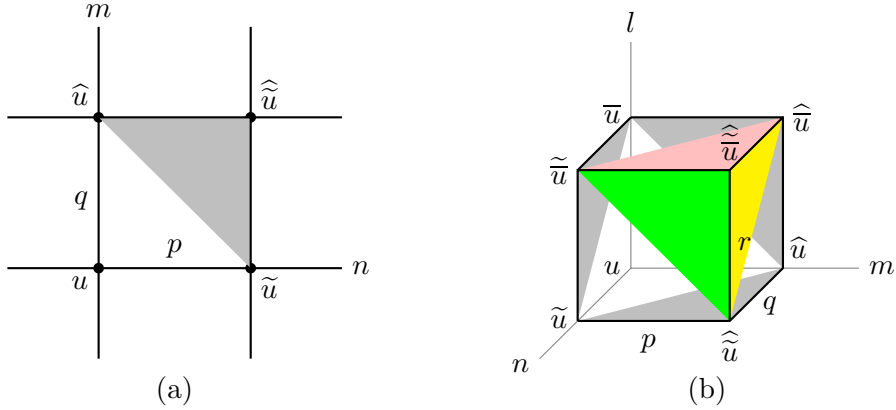

It is remarkable that this 3-point equation \eqref{NSF} can be consistently embedded into a multidimensional lattice
(see Fig.\ref{F-1}(b) where $r$ is the spacing parameter of the bar-direction)
and the value of $\th{\b z}$  is uniquely given as $\th{\b z}=\frac{A}{B}$ where
\begin{align*}
A&= (q-r)(qr\t z+p\h z\b z)+(r-p)(rp\h z+q\b z\t z)+(p-q)(pq\b z+r\t z\h z),\\
B&= p^2(q-r)+q^2(r-p)+r^2(p-q)-(\t z-\h z)(pq+r^2-r\b z)\\
  &~~~~  - (\h z-\b z)(qr+p^2-p \t z)- (\b z-\t z)(pr+q^2-q \h z).
\end{align*}
The consistency-around-cube implies existence of Lax pair (cf.\cite{ABS,Bri-FCM-2013,FNij-2002}).
For that we have
\begin{align}
\t\Phi=U\Phi=\left(\begin{array}{cc}
              p & -r\t z\\
              1 & p-r-\t z
              \end{array}\right)\Phi,~~~
\h\Phi=V\Phi=\left(\begin{array}{cc}
              q & -r\h z\\
              1 & q-r-\h z
              \end{array}\right)\Phi,
\end{align}
and the compatibility $\th\Phi=\t{\h \Phi}$, i.e., $\h U V=\t V U$ gives rise to the discrete Burgers equation \eqref{NSF}.
Note that \eqref{NSF} does not belong to the Adler-Bonenko-Suris classification in \cite{ABS} as it is not quadrilateral.

There is an alternative (trivial) Lax pair (without spectral parameter $r$), which is
\begin{subequations}\label{Lax-dB}
\begin{align}
& \t \varphi =(p-\t z)\varphi,\label{Lax-a}\\
& \h \varphi =(q-\h z)\varphi,\label{Lax-b}
\end{align}
\end{subequations}
from which one can easily check that the compatibility of $\h{\t\varphi}=\t{\h\varphi}$
gives rise to the discrete Burgers equation \eqref{NSF}.
The equation \eqref{Lax-a} indicates $\t z=p-{\psi}/{\t\psi }$, i.e.
\begin{equation}\label{u-psi}
z_{n,m}=p-\frac{\psi_{n-1,m}}{\psi_{n,m} }
\end{equation}
where we have taken
\begin{equation}\label{phi-psi}
\psi=1/\varphi.
\end{equation}
It follows that if $z$ is defined as \eqref{u-psi}, then \eqref{Lax-b} is equivalent to a linear equation
(in terms of $\psi$)
\begin{equation}\label{dB-linear}
\h \psi-\t \psi=(p-q) \th\psi.
\end{equation}
This means, if $\psi$ is a solution to \eqref{dB-linear}, then $z$ defined by \eqref{u-psi}
provides a solution to the discrete Burgers equation \eqref{NSF}.


\end{document}